%% file: main.tex
\documentclass[aps,prxquantum,longbibliography,notitlepage,twocolumn,superscriptaddress,nofootinbib]{revtex4-2}
\usepackage{bbm}
\usepackage{mathrsfs}
\usepackage{xcolor}
\usepackage{graphicx}
\usepackage{amsfonts}
\usepackage{amsthm}
\usepackage[figuresright]{rotating}
\usepackage{amssymb}
\usepackage{amsmath}
\usepackage{dcolumn}
\usepackage{physics}
\usepackage{float}
\usepackage{bm}
\usepackage{verbatim}
\usepackage[normalem]{ulem}
\usepackage[ruled,vlined,linesnumbered]{algorithm2e}
\usepackage{setspace}
\usepackage[colorlinks,linkcolor=blue,anchorcolor=blue,citecolor=blue,urlcolor=blue]{hyperref}
\usepackage{tikz}
\usetikzlibrary{arrows.meta,calc,shapes.misc,backgrounds}
\newtheorem{theorem}{Theorem}
\newtheorem{lemma}[theorem]{Lemma}
\newtheorem*{lemma*}{}

\newcommand*{\horzbar}{\rule[.5ex]{2.5ex}{0.5pt}}

\begin{document}

\title{Faster Quantum Monte Carlo Simulation by Random Compilation}

\author{John M. Martyn} 
\affiliation{Physical and Computational Sciences, Pacific Northwest National Laboratory, Richland, WA 99354, USA}
\affiliation{Harvard Quantum Initiative, Harvard University, Cambridge, MA 02138, USA}
\affiliation{Center for Theoretical Physics, Massachusetts Institute of Technology, Cambridge, MA 02139, USA}

\author{Joshua Lin} 
\affiliation{Physics Division, Argonne National Laboratory, Lemont, IL 60439, USA}
\affiliation{Center for Theoretical Physics, Massachusetts Institute of Technology, Cambridge, MA 02139, USA}

\author{Neill C. Warrington} 
\affiliation{Center for Theoretical Physics, Massachusetts Institute of Technology, Cambridge, MA 02139, USA}

\author{Isaac L. Chuang}
\affiliation{Department of Physics, Massachusetts Institute of Technology, Cambridge, MA 02139, USA}
\affiliation{Department of Electrical Engineering and Computer Science,
Massachusetts Institute of Technology, Cambridge, MA 02139, USA}

\author{Andrew J. Daley}
\affiliation{Clarendon Laboratory, University of Oxford, Parks Road, Oxford OX1 3PU, UK}

\begin{abstract}    
    Quantum Monte Carlo (QMC) algorithms are among the most powerful classical methods for simulating quantum systems, yet their accuracy is often limited by the systematic errors in the approximations used, such as Trotterization. Here we introduce \textit{randomly compiled quantum Monte Carlo} (RC-QMC) as a general framework that suppresses these systematic errors by averaging over a family of approximations rather than relying on a single fixed one. This strategy is grounded in the concept of randomized compiling from quantum computing, which suppresses errors by sampling over quantum gates, at essentially no additional computational cost. Consequently, our framework achieves a computational advantage over standard QMC methods when estimating a target state to a desired level of accuracy. We illustrate this advantage on two key Monte Carlo algorithms: (1) path integral quantum Monte Carlo for estimating thermal states, and (2) the quantum trajectories method for simulating open system dynamics. In aggregate, these results represent a cross-fertilization of quantum and classical algorithms, are readily generalizable to other QMC methods, and suggest wider applications in classical simulation.

\end{abstract}

\maketitle

\section{Introduction}

Quantum Monte Carlo (QMC) algorithms are central computational methods for classically simulating quantum systems and have yielded valuable insights across fields ranging from condensed matter physics to quantum chemistry~\cite{becca2017quantum, zhang201315, rothe2012lattice, Daley_2014, Plenio_1998}. 
While these tools are broadly applicable, they can be computationally expensive, in part due to their reliance on costly approximation schemes, like high-order Trotter decompositions and Hubbard-Stratonovich transformations. This is in addition to the large number of samples required to achieve high precision, and the sign problem afflicting certain systems.

When viewed alongside quantum computing, QMC algorithms share many structural similarities with quantum algorithms for simulation. Both approaches incorporate approximation methods like Trotterization~\cite{Suzuki_1990, Suzuki_1991, Lloyd_1996}, and more fundamentally, both estimate expectation values by averaging over many outcomes. These connections suggest that further tools from quantum algorithms could be repurposed to enhance QMC simulations.

One such tool is \textit{randomized compiling},\footnote{Note that our use of ``randomized compiling'' refers to the concept from quantum computing, and should not be confused with randomized techniques in classical compiler design.} which randomly samples quantum gates to suppress errors~\cite{Campbell_2017, Wallman_2016}. Thanks to its generality, randomized compiling has been employed across various quantum applications, including mitigating gate errors in quantum circuits~\cite{Wallman_2016} and accelerating time evolution algorithms~\cite{Campbell_2019, Cho_2024, pocrnic2023composite}. However, its potential in classical simulation has not yet been fully explored. In this work, we fill this gap by introducing \textit{randomly compiled quantum Monte Carlo} (RC-QMC) as a general framework that uses randomized compiling to improve the efficiency and accuracy of QMC.

\input{figures/result_fig}

At their core, QMC algorithms estimate an observable $\hat{O}$ in a target state $\rho$ by
mapping the expectation value $\langle\hat{O}\rangle := \text{tr}(\hat{O}\rho)$ to a
classically sampled random variable. This is often done using an approximation method, which we label by an index $j$ (e.g., a specific Trotter decomposition). Standard QMC fixes a single such approximation and evaluates $\langle\hat{O}\rangle \approx \langle\hat{O}\rangle_j$ for a chosen $j$.
RC-QMC instead averages over a family of approximations, 
\begin{equation}\label{eq:avg_obs}
    \langle \hat{O} \rangle \approx \sum_j p_j \langle \hat{O} \rangle_j
      = \mathop{\mathbb{E}}_{j \sim p_j}\!\Big[ \langle \hat{O} \rangle_j \Big],
\end{equation}
where $p_j$ is a probability distribution over approximations.
This average is easily incorporated into a QMC algorithm by jointly sampling $j \sim p_j$ alongside the standard estimator. Drawing on theorems from randomized compiling, we prove that this achieves a smaller systematic error than any individual approximation, while incurring a cost that is the weighted average of the approximations---nearly the same as standard QMC. 
We note that this error reduction pertains to systematic error only; it does not mitigate statistical sampling errors nor resolve sign problems in systems where they arise.

As a simple example, it has been shown that randomizing over the ordering of terms in
a Trotter step reduces discretization error~\cite{Childs_2019}. When applied to a QMC algorithm like path integral Monte Carlo via Eq.~\eqref{eq:avg_obs}, the randomly compiled algorithm require fewer Trotter steps to achieve the same level of accuracy as the standard algorithm, thus reducing the computational cost.

Here we will adopt this strategy to develop applications of RC-QMC and use more advanced randomized compiling methods to achieve larger gains. We demonstrate RC-QMC in two settings: (1) \textit{path integral Monte Carlo} for estimating thermal states, and (2) the \textit{quantum trajectories method} for simulating open system dynamics.\footnote{Although the quantum trajectories method is not conventionally considered a QMC algorithm, it shares enough structural similarities to fit into the RC-QMC framework.} An overview of the RC-QMC framework and its main results is provided in Fig.~\ref{fig:rcqmc_overview}.

In these applications we specialize to two distinct randomized compiling methods, and it is important to distinguish them. In randomly compiled path integral Monte Carlo, we incorporate a randomized Trotterization algorithm known as \textit{QDrift}~\cite{Campbell_2019}, which improves error scaling by replacing the standard Trotter error $\mathcal{O}(L^3\beta^3/r^2)$ with $\mathcal{O}(\lambda^2\beta^2/r)$. Here $r$ is the number of Trotter steps, $h_j > 0$ are the Hamiltonian coefficients, and $\lambda = \sum_j h_j$ is their 1-norm. While the standard error grows with the number of Hamiltonian terms $L$, the QDrift error does not, which yields a significant advantage when $\lambda \ll L$. In the randomly compiled quantum trajectories method, we use \textit{randomly corrected Trotterization}~\cite{Cho_2024}, which doubles the effective Trotter order by suppressing the time-step error from $\mathcal{O}(\Delta t^{n+1})$ to $\mathcal{O}(\Delta t^{2n+2})$, where $n$ is the Trotter order. This substantially reduces the number of time steps required for a given accuracy. Both gains are underpinned by the same RC-QMC mechanism, but arise from different randomized compiling methods applied to different algorithmic structures. Beyond these applications, RC-QMC extends to other QMC algorithms and is compatible with other randomization techniques; we comment on these prospects in the conclusion section.

We present our results in a way intended to be accessible to readers from both QMC and quantum computing communities. Sec.~\ref{sec:QMC} reviews core QMC concepts, introduces a general framework for QMC, and reviews both path integral Monte Carlo and the quantum trajectories method. Sec.~\ref{sec:RC_QMC} introduces randomized compiling and develops RC-QMC as a general framework. Secs.~\ref{sec:Application_PIMC} and~\ref{sec:Application_QTM} then demonstrate applications of RC-QMC to path integral Monte Carlo and quantum trajectories, respectively, and can be read independently for practitioners in those areas. Finally, we conclude and discuss future prospects of this work in Sec.~\ref{sec:Conclusions}. In addition, to maintain clarity of notation throughout this work, we provide a reference list of symbols in Appendix~\ref{app:notation}.

\section{Quantum Monte Carlo Algorithms}\label{sec:QMC}

Quantum Monte Carlo is an umbrella term for a variety of classical statistical methods for simulating quantum systems. QMC algorithms work by expressing the expectation value of an observable as an average of a random variable that can be evaluated by classical sampling. Thanks to their generality, QMC algorithms have found broad applications across quantum science, from condensed matter physics to high-energy physics~\cite{becca2017quantum, zhang201315, rothe2012lattice, Evertz_1993, Krzakala_2008, sadoune2022efficient, morningstar2007monte, Carmen_Banuls_2020, Albergo_2019}.  As a representative example, \textit{path integral Monte Carlo}~\cite{Ceperley_1995} equates a thermal expectation value to an average over classical paths:
\begin{equation}\label{eq:PIMC_expression_0}
\begin{aligned}
    \text{tr} \bigg( \hat{O} \frac{e^{-\beta \hat{H}}}{Z} \bigg)
      = \mathop{\mathbb{E}}_{\mathbf{X} \sim e^{-S(\mathbf{X})}/Z}
        \Big[ \langle \mathbf{X}(0) | \hat{O} | \mathbf{X}(\beta) \rangle \Big],
\end{aligned}
\end{equation}
where $\mathbf{X}(\tau)$ is a classical path extending from Euclidean time $\tau=0$ to $\tau=\beta$, $S[\mathbf{X}(\tau)]$ is the Euclidean action, and $Z = \text{tr}(e^{-\beta \hat{H}})$ is the partition function.

This section provides a detailed overview of QMC, establishing the framework and key error scalings that RC-QMC will improve upon. Sec.~\ref{sec:QMC_General} begins by casting QMC within a unified framework tailored to support the development of RC-QMC. Secs.~\ref{subsec:PIMC} and~\ref{sec:QTM} then review path integral Monte Carlo and the quantum trajectories method within that framework, focusing on their computational costs and systematic errors.

\subsection{A General Framework for QMC}\label{sec:QMC_General}
While QMC algorithms vary in their construction and applications, a broad class of them can be unified under the following general framework. Let us consider a QMC algorithm that targets a state $\rho$, e.g. a thermal state or a ground state. In these algorithms, the expectation value of an observable $\hat{O}$ is expressed as a classical average of the following form:
\begin{equation}\label{eq:QMC}
    \langle \hat{O} \rangle = \tr(\hat{O}\rho) = \mathop{\mathbb{E}}_{X \sim \mathcal{P}} \big[ F(\hat{O}, X) \big].
\end{equation}
Here, $X$ is a random variable drawn from a known distribution $\mathcal{P}$, and $F(\hat{O}, X)$ is an \emph{estimator function} that depends on $\hat{O}$ and $X$. In practice, classically evaluating  $F(\hat{O}, X)$ is much less expensive than evaluating $\langle \hat{O} \rangle$ directly, such that $\langle \hat{O} \rangle$ can be estimated as an empirical average. This renders QMC more efficient than brute force computation in many settings.

As an instance of this general framework, in path integral Monte Carlo (Eq.~\eqref{eq:PIMC_expression_0}), the random variable is the configuration of classical paths $X = \mathbf{X}$, sampled from the distribution $\mathcal{P} = e^{-S(\mathbf{X})}/ Z$, and the estimator function is $F(\hat{O},X) = \langle \mathbf{X}(\beta) | \hat{O} | \mathbf{X}( 0 ) \rangle$. Other QMC algorithms also fit into this general framework. In the following subsections, we illustrate this in more detail for path integral Monte Carlo and the quantum trajectories method. 

While Eq.~\eqref{eq:QMC} is formally exact, it can only be approximately realized in practice. In general, the distribution $\mathcal{P}$ cannot be sampled exactly, and instead must be approximated. For instance, the exact path integral Monte Carlo expression of Eq.~\eqref{eq:PIMC_expression_0} holds only in the limit of a continuous path. In actual implementations of this algorithm, this expression is approximated by sampling over discrete paths, where the distribution $e^{-S(\mathbf{X})}/Z$ is approximated using a discretized action. 
Accordingly, in practice QMC algorithms replace $\mathcal{P}$ with an \emph{approximate distribution} $\widetilde{\mathcal{P}}$, yielding an expression
\begin{equation}\label{eq:QMC_approx}
    \langle \hat{O} \rangle = \tr(\hat{O}\rho) \approx \mathop{\mathbb{E}}_{X \sim \widetilde{\mathcal{P}} } \big[ F(\hat{O}, X) \big] =: \langle \hat{O} \rangle_{\widetilde{\mathcal{P}}} ,
\end{equation}
where $\langle \hat{O} \rangle_{\widetilde{\mathcal{P}}} $ denotes the observable estimated using $\widetilde{\mathcal{P}}$. This approximation introduces a systematic error
\begin{equation}\label{eq:QMC_sys_err}
     \big|\langle \hat{O} \rangle - \langle \hat{O} \rangle_{\widetilde{\mathcal{P}}} \big|. 
\end{equation}
Consequently, both the accuracy of the QMC algorithm and its computational cost depend on the chosen approximate distribution $\widetilde{\mathcal{P}}$. For example, the discretization used to realize path integral Monte Carlo affects both: a finer discretization achieves smaller systematic errors at the price of an increased cost.

It is important to distinguish this systematic error from the statistical error also suffered in QMC algorithms. Statistical error arises from estimating the average $\mathbb{E}_{X \sim \widetilde{\mathcal{P}}} [F(\hat{O},X)]$ as an empirical mean, and decays as $\mathcal{O}(1/\sqrt{N_{\text{s}}})$ when estimated with $N_{\text{s}}$ samples. In contrast, the systematic error arises from using an approximate distribution $\widetilde{\mathcal{P}}$ and persists even as the statistical error vanishes, thereby limiting the accuracy of QMC.

\subsection{Example: Path Integral Monte Carlo}\label{subsec:PIMC}
To illustrate the general QMC framework above, let us overview path integral Monte Carlo, with a focus on its systematic error and corresponding computational cost.
As mentioned previously, path integral Monte Carlo aims to estimate a thermal expectation value $\text{tr}\big( \hat{O} e^{-\beta \hat{H}} /Z\big)$ for an operator $\hat{O}$, Hamiltonian $\hat{H}$ and inverse temperature $\beta$. The standard approach is to divide $\beta$ into $r$ steps of size $ \Delta\beta := \beta/r$ and insert resolutions of the identity between each time step. Here we will consider a system of $N$ particles, and resolve the identity as $I = \sum_{\mathbf{x}} |\mathbf{x}\rangle \langle \mathbf{x} |$ for a basis $|\mathbf{x}\rangle$, where $\mathbf{x}$ is an $N$-dimensional row vector representing the positions of the particles. This yields
\begin{equation}\label{eq:PIMC_O_1}
\begin{aligned}
    &\frac{1}{Z}\tr(e^{-\beta \hat{H}} \hat{O} ) = \frac{1}{Z}\tr((e^{-\Delta \beta \hat{H}})^r \hat{O} ) \\
    &  \ \ = \frac{1}{Z} \sum_{\mathbf{x}_0, \mathbf{x}_1, ..., \mathbf{x}_r } \langle \mathbf{x}_0 | e^{-\Delta \beta \hat{H}} | \mathbf{x}_1 \rangle \langle \mathbf{x}_1 | e^{-\Delta \beta \hat{H}} | \mathbf{x}_2 \rangle ... \\
    & \qquad \qquad \qquad \qquad \qquad \times \langle \mathbf{x}_{r-1} | e^{-\Delta \beta \hat{H} } | \mathbf{x}_r \rangle \langle \mathbf{x}_r | \hat{O} | \mathbf{x}_0 \rangle , 
\end{aligned}
\end{equation}
which has re-expressed the expectation value as a sum of a product of matrix elements. For an operator $\hat{O}$ that is diagonal in position space, $\langle \mathbf{x}_r | \hat{O} | \mathbf{x}_0 \rangle = \delta_{\mathbf{x}_{r}=\mathbf{x}_{0}} \cdot \langle \mathbf{x}_0 | \hat{O} | \mathbf{x}_0 \rangle$ and the paths are constrained to be periodic $\mathbf{x}_{r}=\mathbf{x}_{0}$, which we will assume throughout. 

To evaluate these matrix elements, let us specialize to a Hamiltonian $\hat{H} = \sum_{j=1}^L \hat{H}_j$ composed of $L$ terms that decomposes into two mutually non-commuting subsets $\hat{H}=\hat{T}+\hat{V}$. Here, $\hat{T}$ represents the kinetic energy, and $\hat{V}$ represents the potential energy, which we take to be diagonal in position space: $\hat{V}|\mathbf{x}\rangle = V(\mathbf{x}) |\mathbf{x}\rangle $, for a potential energy function $V(\mathbf{x})$. Individual matrix elements are then approximated using (second-order) Trotterization as 
\begin{equation}\label{eq:PIMC_approxs}
\begin{aligned}
    &\langle \mathbf{x}_k | e^{- \Delta \beta \hat{H}} |\mathbf{x}_{k+1} \rangle \\
    & \ \ =
    \langle \mathbf{x}_k | \big[ e^{-\frac{\Delta \beta}{2} \hat{V} } e^{-\Delta \beta \hat{T} } e^{-\frac{\Delta \beta}{2} \hat{V}  } + \mathcal{O}(L^3 \Delta \beta^3) \big]| \mathbf{x}_{k+1} \rangle  \\ 
    & \ \ = e^{- \Delta \beta \big(\frac{1}{2}V(\mathbf{x}_k) + T(\mathbf{x}_k, \mathbf{x}_{k+1}) + \frac{1}{2}V(\mathbf{x}_{k+1}) \big) }  + \mathcal{O}(L^3\Delta \beta^3) .
\end{aligned}
\end{equation}
Here the error $\mathcal{O}(L^3\Delta \beta^3)$ follows from the second-order nested commutators of $\hat{T}$ and $\hat{V}$ being upper bounded by $\mathcal{O}(L^3)$ in general\footnote{Here, we use big $\mathcal{O}$ notation to refer to the limit of large $L$, large $\beta$, and large $r$ (equivalently, a small time step $\Delta \beta = \beta/r$). }~\cite{Childs_2021}. In addition, we have introduced a kinetic energy function $T(\mathbf{x}_k, \mathbf{x}_{k+1})$ defined by the matrix element $\langle \mathbf{x}_k | e^{-\Delta \beta \hat{T} } | \mathbf{x}_{k+1} \rangle =: e^{-\Delta \beta T(\mathbf{x}_k, \mathbf{x}_{k+1})}$. For example, when $\hat{T} = \hat{p}^2/2$ is the standard kinetic energy, this function corresponds to the discretized kinetic energy $T(\mathbf{x}_k, \mathbf{x}_{k+1}) = \frac{1}{2}(\mathbf{x}_k -  \mathbf{x}_{k+1})^2/\Delta\beta^2$.

This expression can be neatly packaged by defining an array of particle paths over time: 
\begin{equation}\label{eq:paths_array}
    \mathbf{X} = 
    \begin{bmatrix}
        \horzbar & \mathbf{x}_0 & \horzbar \\
        \horzbar & \mathbf{x}_1 & \horzbar \\
        \vdots & \vdots & \vdots \\
        \horzbar & \mathbf{x}_{r-1} & \horzbar
    \end{bmatrix} . 
\end{equation}
In this array, the rows $\mathbf{X}_{k,:} = \mathbf{x}_k$ represent the particles' configuration at time step $k$, while the columns $\mathbf{X}_{:,i}$ encode the path of particle $i$. Inserting this into Eq.~\eqref{eq:PIMC_O_1} yields a compact expression for path integral Monte Carlo:
\begin{equation}\label{eq:PIMC_O_2}
\begin{aligned}
    \langle \hat{O} \rangle &= \text{tr} \bigg( \hat{O} \frac{e^{-\beta \hat{H}}}{Z} \bigg)  \\
    &= \mathop{\mathbb{E}}_{ \mathbf{X} \sim e^{-S(\mathbf{X})}}  \Big[ \langle \mathbf{X}_{0,:} | \hat{O} | \mathbf{X}_{0,:} \rangle \Big] + \mathcal{O}\left(\frac{L^3 \beta^3}{r^2} \right) . 
\end{aligned}
\end{equation}
Here, the discrete action is 
\begin{equation}\label{eq:PIMC_Action}
\begin{aligned}
    S(\mathbf{X}) = &\Delta \beta  \sum_{k=0}^{r-1} \Big[  T(\mathbf{X}_{k,:}, \mathbf{X}_{k+1,:}) +  V(\mathbf{X}_{k,:}) \Big] ,  
\end{aligned}
\end{equation}
and the paths $\mathbf{X}$ are sampled from the normalized distribution $\propto e^{-S(\mathbf{X})}$. 
The overall error is $\mathcal{O}(L^3 \beta^3/r^2)$, which arises from Trotterizing over $r$ steps that each suffer error $\mathcal{O}(L^3 \Delta\beta^3) = \mathcal{O}(L^3 \beta^3/r^3)$.

Eq.~\eqref{eq:PIMC_O_2} is the central equation of path integral Monte Carlo: observables can be estimated by sampling many paths $\mathbf{X} \sim e^{-S(\mathbf{X})}$ and evaluating the corresponding empirical average of $\langle \mathbf{X}_{0,:} |\hat{O} |\mathbf{X}_{0,:} \rangle$. The accuracy improves with increasing number of Trotter steps $r$ (equivalently, increasing path length), though at the price of increased computational cost. 
Comparing this with the general QMC formulation of Eq.~\eqref{eq:QMC_approx}, we see that the approximate distribution is $\widetilde{\mathcal{P}} \propto e^{-S(\mathbf{X})} $, the estimator function is $F(\hat{O}, X) = \langle \mathbf{X}_{0,:} | \hat{O} | \mathbf{X}_{0,:} \rangle$, and the resulting systematic error is $\mathcal{O}(L^3 \beta^3/r^2)$. Thus, achieving a systematic error $\epsilon$ requires a path length $r=\mathcal{O} \big( L^{3/2} \beta^{3/2}/\epsilon^{1/2} \big)$, and so the computational cost grows with $\beta $ and $L$.


\subsection{Example: Quantum Trajectories Method}\label{sec:QTM}

Another relevant algorithm that fits into the general QMC framework is the \textit{quantum trajectories method} (also known as the `Monte Carlo wave function method', or the `quantum jump method')~\cite{Dalibard_1992, Dum_1992, Carmichael1993, Molmer_1993, Daley_2014}. This algorithm simulates an open quantum system evolving under the master equation in Lindblad form:
\begin{equation}\label{eq:master_eq_main}
    \frac{d}{dt} \rho = -i[\hat{H},\rho] + \sum_l  \big( {c}_l \rho {c}_l^\dag - \frac{1}{2} ({c}_l^\dag {c}_l \rho + \rho {c}_l^\dag {c}_l) \big) ,
\end{equation}
where ${c}_l$ are the \emph{jump operators} that describe the dissipative dynamics (with dissipative rates absorbed into $c_l$ here for simplicity). The master equation may be conveniently re-expressed as:
\begin{equation}\label{eq:master_eq_reexpression}
    \frac{d}{dt} \rho = -i (\hat{H}_{\text{eff}}\rho - \rho \hat{H}_{\text{eff}}^\dag) + \sum_l c_l \rho c_l^\dag,
\end{equation}
where $ \hat{H}_{\text{eff}} = \hat{H} - \tfrac{i}{2} \sum_l c_l^\dag c_l $ is an effective Hamiltonian, which is in general non-Hermitian.

Eq.~\eqref{eq:master_eq_reexpression} closely resembles the Heisenberg equation for closed system dynamics, except for the non-Hermiticity of $\hat{H}_{\text{eff}}$ and the additional action of the jump operators. Owing to this similarity, this equation possesses an analytic solution where the time-evolved density matrix $\rho(t)$ unravels into an average of pure states. Specifically, each pure state evolves under $\hat{H}_{\text{eff}}$ with jump operators randomly interspersed. We briefly describe this solution below and include a comprehensive review in Appendix~\ref{app:QTM}.

To illustrate this analytic solution, let us specialize to a single jump operator $c_1 = c$ for simplicity, and consider an initial pure state $\rho(0) = |\psi \rangle \langle \psi |$. Then, an observable at time $t$ can be exactly written as an average of expectation values computed in these pure states:
\begin{equation}\label{eq:master_eq_obs}
\begin{aligned}
    \langle \hat{O} \rangle (t) &= \tr(\hat{O} \rho(t)) \\
    &= \mathop{\mathbb{E}}_{|\phi\rangle \sim \langle \phi | \phi \rangle} \Bigg[ \frac{\langle \phi (t|t_{1:m}) | \hat{O} | \phi (t|t_{1:m}) \rangle}{\langle \phi (t|t_{1:m}) | \phi (t|t_{1:m}) \rangle}  \Bigg]. 
\end{aligned}
\end{equation}
Here, $| \phi (t|t_{1:m}) \rangle $ is an unnormalized state that begins in $|\psi\rangle$, evolves under $\hat{H}_{\text{eff}}$ up to time $t$, and experiences $m$ intermittent jumps at times $t_{1:m} := \{ t_1, t_2, ..., t_m \}$:  
\begin{equation}\label{eq:QTM_state}
\begin{aligned}
    & | \phi (t| t_{1:m}) \rangle = \\
    & \qquad  e^{-i\hat{H}_{\text{eff}}(t-t_{m})} c e^{-i\hat{H}_{\text{eff}}(t_m - t_{m-1})} c ... c e^{-i\hat{H}_{\text{eff}}t_1} |\psi \rangle .
\end{aligned}
\end{equation}
In Eq.~\eqref{eq:master_eq_obs}, the notation $|\phi\rangle \sim \langle \phi | \phi\rangle $ means that the states $| \phi(t| t_{1:m}) \rangle$ are sampled proportional to their norm $ \big\| | \phi (t| t_{1:m}) \rangle \big\|^2$. This amounts to averaging over both the number of jumps $m$ and their times $t_{1:m}$.

The quantum trajectories method uses this decomposition to estimate time-evolved observables by simulating many pure states $|\phi\rangle$, known as \textit{trajectories}, and averaging over their corresponding observables according to Eq.~\eqref{eq:master_eq_obs}. This fits directly into the general QMC framework of Eq.~\eqref{eq:QMC}: the random variable is the state $X = |\phi(t|t_{1:m})\rangle$ sampled from its norm $\mathcal{P} = \big\| |\phi(t|t_{1:m})\rangle \big\|^2 $, and the estimator function is the observable evaluated in the corresponding normalized state $F(\hat{O},X) = \frac{\langle \phi (t|t_{1:m}) | \hat{O} | \phi (t|t_{1:m}) \rangle}{\langle \phi (t|t_{1:m}) | \phi (t|t_{1:m}) \rangle} $. The advantage of the quantum trajectories method is that it requires simulating only pure states rather than the full density matrix (i.e., $D$-dimensional vectors rather than a $D\times D$ matrix), which drastically reduces memory consumption and computation time.

In practice, the time-evolved states can be approximated by a state $ |\widetilde{\phi}(t|t_{1:m})\rangle \approx |\phi(t|t_{1:m})\rangle $ using classical techniques, such as tensor networks. In 1D systems, it is common to represent each approximate trajectory $ |\widetilde{\phi}(t|t_{1:m})\rangle $ as a matrix product state (MPS)~\cite{Klumper_1993, Vidal_2003, PerezGarcia_2007} and evolve it using a time evolution algorithm, like time-evolving block decimation (TEBD)~\cite{Vidal_2003, Vidal_2004, White_2004, Daley_2004}. As a result, the approximate QMC distribution is $\widetilde{\mathcal{P}} = \big\| | \widetilde{\phi} (t|t_{1:m})\rangle \big\|^2 $, and its associated error is the error incurred by the MPS representation and time-evolution algorithm. For example, TEBD works by applying Trotterized evolution and truncating the MPS bond dimension. Using $r$ steps of order-$n$ Trotterization results in a systematic error $ \mathcal{O}(r (Lt/r)^{n+1}) + \epsilon_{\text{trunc}}$, where $\epsilon_{\text{trunc}}$ is the error associated with truncation.

Moreover, to sample time-evolved states according to Eq.~\eqref{eq:master_eq_obs}, one creates an ensemble of $N_{\text{traj}}$ trajectories, each initialized as $|\psi\rangle$, and evolves them under $\hat{H}_{\text{eff}}$ with jump operators applied stochastically. Specifically, a jump is applied to a state when its squared norm decreases below a value randomly sampled from the uniform distribution on $[0,1]$. Early works showed that this process is equivalent to sampling time-evolved states $|\phi(t|t_{1:m}) \rangle \sim \mathcal{P}$~\cite{Dum_1992, Dalibard_1992, Carmichael1993, Molmer_1993}, with a statistical error $\mathcal{O}(1/\sqrt{N_{\text{traj}}})$; see Appendix~\ref{app:QTM} for details. This approach of realizing quantum trajectories has become a powerful tool in studying open system dynamics in quantum optics and condensed matter~\cite{Molmer_1993, Dum_1992_2, Ciccarello_2022, Daley_2014, sander2025large}, and has since been generalized to non-Markovian systems~\cite{Strunz_1999,Suess_2014}.

\section{Randomly Compiled Quantum Monte Carlo}\label{sec:RC_QMC}

This section develops \textit{randomly compiled quantum Monte Carlo} (RC-QMC) as a
general framework that uses randomized compiling~\cite{hastings2016turning,
Campbell_2017} to suppress systematic error in QMC algorithms.
The key insight is that averaging over a family of approximations can reduce the systematic error at little additional computational cost.
First, Sec.~\ref{sec:randomized_compiling} reviews randomized compiling and the error suppression result on which RC-QMC is founded. Then, Sec.~\ref{sec:RC_QMC_formulation} formulates RC-QMC and analyzes its advantage, while Sec.~\ref{sec:RC_QMC_remarks} comments on the statistical error exhibited by RC-QMC. 

\subsection{Randomized Compiling}\label{sec:randomized_compiling}

To understand randomized compiling, let us consider the task of implementing a unitary transformation $\mathcal{U}(\rho) = U \rho U^\dag$, for some unitary $U$. Standard methods, which we refer to as \textit{fixed compilation}, directly approximate $U$ with a single unitary $W \approx U$. In contrast, randomized compiling achieves a better approximation by using a quantum channel (or quantum operation) that is a mixture of $M$ unitaries:
\begin{equation}
    \Lambda(\rho) = \sum_{j=1}^M p_j W_j \rho W_j^\dagger,
\end{equation} 
where each $W_j$ is a distinct approximation to $U$. If the average of the unitaries, $\sum_{j=1}^M p_j W_j$, more closely approximates $U$ than any individual $W_j$, then $\Lambda$ provides an even better approximation to $\mathcal{U}$.

This statement can be quantified by considering the spectral norm $\| \cdot \|$ of operators and the 1-norm $\| \cdot \|_1$ between quantum states. In particular, if the error suffered by each unitary is at most $\| W_j - U \| \leq \epsilon $ for all $j$, and the average error is $ \big\| \sum_{j=1}^M p_j W_j - U \big\| \leq \mathcal{O}(\epsilon^2) $, then the error suffered by $\Lambda$ is~\cite{hastings2016turning, Campbell_2017}
\begin{equation}\label{eq:RC_err_suppression}
    \| \Lambda(\rho) - \mathcal{U}(\rho) \|_1 \leq \mathcal{O}(\epsilon^2) ,
\end{equation}
for any input state $\rho$ (see App.~\ref{app:mixing_lemma} for a formal presentation of this statement). Because the 1-norm upper bounds the error in observables (see Appendix~\ref{app:concepts}), this result implies that any observable estimated using $\Lambda(\rho)$ deviates from its target value by $\mathcal{O}(\epsilon^2)$. This is in contrast to fixed compilation methods that use a single approximation $W$ and incur error $\mathcal{O}(\epsilon)$. Randomized compiling therefore achieves a quadratic suppression of error (i.e., $\epsilon \mapsto \mathcal{O}(\epsilon^2)$) relative to fixed compilation. Intuitively, this error suppression arises because averaging over the $W_j$'s cancels out their first-order errors, leaving only higher-order $\mathcal{O}(\epsilon^2)$ errors. 

Quantum mechanically, the mixed state $\Lambda(\rho) = \sum_{j} p_j W_j \rho W_j^\dag$ can be realized by simply sampling $j\sim p_j$ and executing the corresponding unitary $W_j$. This incurs a cost that is the average cost of the $W_j$'s, weighted by $p_j$. Consequently, realizing $\Lambda(\rho)$ is no more expensive than implementing any single $W_j$. Similarly, estimating an observable in the state $\Lambda(\rho)$ corresponds to averaging over observables in the states $W_j \rho W_j^\dag$. To wit, $\text{tr} (\hat{O}\Lambda(\rho)) = \sum_j p_j \text{tr} (\hat{O} W_j \rho W_j^\dag )$ is an average over observables $\text{tr} ( \hat{O} W_j \rho W_j^\dag )$.

In many cases of interest, an appropriate set of unitaries $\{ W_j \}_{j=1}^M$ and probability distribution $p_j$ can be found analytically. As a result, randomized compiling has been used to enhance a range of quantum algorithms, including time evolution~\cite{Campbell_2019, pocrnic2023composite}, phase estimation~\cite{Wan_2022, gunther2025phase}, and quantum signal processing~\cite{martyn2024halving}. Of particular relevance to the RC-QMC applications developed in this paper will be randomized Trotter formulas for time evolution~\cite{Campbell_2019, Cho_2024, pocrnic2023composite}. In this context, the quadratic error suppression provided by randomized compiling compounds over many Trotter steps, leading to a substantial cost reduction; see Appendix~\ref{app:RandomCompilingTrotter} for a review of these methods. Randomized compiling techniques have also been extended to non-unitary operations like imaginary time evolution~\cite{pocrnic2023composite, martyn2024halving}, which we will make use of in later sections.

\subsection{Formulation of Randomly Compiled QMC}\label{sec:RC_QMC_formulation}

While the aforementioned applications of randomized compiling largely focus on quantum algorithms, its potential extends beyond this scope. We now show how randomized compiling can be repurposed to improve QMC simulations. 

Our goal is to use randomized compiling to suppress the systematic error suffered by QMC, and thereby lower the cost to achieve a desired accuracy. As described in the general QMC framework of Eq.~\eqref{eq:QMC_sys_err}, the systematic error of a QMC algorithm is governed by the approximate probability distribution $\widetilde{\mathcal{P}}$ used to estimate the target distribution $\mathcal{P}$ in evaluating an observable $\langle \hat{O} \rangle  $. Because randomized compiling works by averaging over many approximations, it naturally suggests replacing $\widetilde{\mathcal{P}}$ with a collection of approximations. We propose to realize this by introducing \textit{randomly compiled QMC}, which averages over a collection of approximate distributions: 
\begin{equation}\label{eq:RC_QMC}
    \langle \hat{O} \rangle \approx \mathop{\mathbb{E}}_{j\sim p_j} \bigg[ \mathop{\mathbb{E}}_{X \sim \widetilde{\mathcal{P}}_j } \big[ F(\hat{O}, X) \big] \bigg] = \mathop{\mathbb{E}}_{j\sim p_j} \Big[ \langle \hat{O} \rangle_{ \widetilde{\mathcal{P}}_j } \Big] . 
\end{equation}
Here $\{ \widetilde{\mathcal{P}}_j\}$ is a collection of approximate distributions, $p_j$ is a probability distribution over the approximations, and $\langle \hat{O} \rangle_{ \widetilde{\mathcal{P}}_j } := \mathop{\mathbb{E}}_{X \sim \widetilde{\mathcal{P}}_j } \big[ F(\hat{O}, X) \big]$ denotes the estimate of the observable using $\widetilde{\mathcal{P}}_j$. 
Eq.~\eqref{eq:RC_QMC} provides the general framework of RC-QMC that is compatible with many QMC algorithms. We will develop specific applications in Secs.~\ref{sec:Application_PIMC} and~\ref{sec:Application_QTM}, which will illustrate how this form emerges naturally in using randomized compiling to approximate a target state in QMC.

The essential improvement of RC-QMC is that by averaging over many approximations, it reduces the systematic error relative to standard QMC:
\begin{equation}
\begin{aligned}
    &\bigg|\langle \hat{O} \rangle - \mathop{\mathbb{E}}_{j\sim p_j} \Big[ \langle \hat{O} \rangle_{ \widetilde{\mathcal{P}}_j } \Big] \bigg| \ll 
    \big|\langle \hat{O} \rangle - \langle \hat{O} \rangle_{ \widetilde{\mathcal{P}} } \big|. 
\end{aligned}
\end{equation}
In addition, implementing RC-QMC is no more expensive than standard QMC. The double expectation value of Eq.~\eqref{eq:RC_QMC} can be realized by repeatedly sampling an approximation $j \sim p_j$ and evaluating an estimator of $\langle \hat{O} \rangle_{\widetilde{\mathcal{P}}_j} = \mathop{\mathbb{E}}_{X \sim \widetilde{\mathcal{P}}_j } \big[ F(\hat{O}, X) \big] $, as in standard QMC. The overall cost is therefore the average of the costs of sampling $\widetilde{\mathcal{P}}_j$, weighted by $p_j$. Thus, RC-QMC achieves a smaller systematic error than any single approximation $\widetilde{\mathcal{P}}_j$, while requiring a cost no greater than any individual approximation. This reduces the computational cost required to achieve a desired level of error relative to standard QMC. 

Applying RC-QMC to a specific problem requires choosing appropriate $\{ \widetilde{\mathcal{P}}_j \}$ and $p_j$ that provide a substantial error reduction without increasing the computational cost. The choice of these depends on the underlying QMC algorithm and approximation scheme, but can be guided by results from randomized compiling. In general, the approximate distributions $\{ \widetilde{\mathcal{P}}_j \}$ should concentrate around the target distribution $\mathcal{P}$, and their average with respect to $p_j$ should provide an even better approximation thereof. As we show in the applications of Secs.~\ref{sec:Application_PIMC} and~\ref{sec:Application_QTM}, appropriate such choices can be derived from existing results in randomized compiling.

Lastly, in comparing RC-QMC (Eq.~\eqref{eq:RC_QMC}) to the error reduction provided by randomized compiling (Eq.~\eqref{eq:RC_err_suppression}), one might imagine that RC-QMC offers at most a quadratic suppression of systematic error. This, however, is too conservative: in many QMC algorithms, the target distribution is approximated as a product of many operations (e.g., Trotterization). In these cases, the quadratic suppression of error compounds across each term in the product, yielding an overall greater reduction in error.

\subsection{Remarks on Statistical Error}\label{sec:RC_QMC_remarks}

It is important to emphasize that RC-QMC suppresses only the systematic error suffered in estimating an observable $\langle \hat{O} \rangle$, not the statistical error, which is rather governed by the number of samples. Therefore, the advantage of RC-QMC is most pronounced when the statistical error is smaller than the systematic error, as otherwise the statistical error dominates. 

We also note that RC-QMC introduces an additional statistical error due to sampling over the collection of distributions $\{ \widetilde{\mathcal{P}}_j \}$. However, because each $\widetilde{\mathcal{P}}_j$ closely approximates the target distribution $\mathcal{P}$, the corresponding observables $\langle \hat{O} \rangle_{\widetilde{\mathcal{P}}_j}$ are tightly concentrated around the exact value~\cite{Chen_2021}. As we show in Appendix~\ref{app:Stat_Error}, this introduces only a negligible increase in statistical error, of the same order as the systematic error. Thus the additional cost needed to compensate for this is far outweighed by the overall cost reduction afforded by RC-QMC.

\section{Application to Path Integral Monte Carlo}\label{sec:Application_PIMC}

In this section we develop \textit{randomly compiled path integral Monte Carlo} to improve upon the standard algorithm. While standard path integral Monte Carlo incurs a systematic error $\mathcal{O}(L^3\beta^3/r^2)$ that grows with the number of Hamiltonian terms $L$, our randomly compiled version employs a random Trotter formula known as \textit{QDrift}~\cite{Campbell_2019} and reduces the error to $\mathcal{O}(\lambda^2\beta^2/r)$, where $\lambda = \sum_{j=1}^L h_j $ is the 1-norm of the Hamiltonian coefficients. This removes explicit $L$-dependence and yields a substantial advantage when $\lambda \ll L$.

In more detail, Sec.~\ref{sec:RC_PIMC_subsec} derives randomly compiled path integral Monte Carlo and bounds is systematic error. Then, Sec.~\ref{sec:RC_PIMC_Example} illustrates the algorithm on a simple Hamiltonian and shows how QDrift compresses the effective path lengths required for simulation. Sec.~\ref{sec:RC_PIMC_Experiments} demonstrates its advantage through numerical experiments on a spin chain. For a more thorough review of the QDrift algorithm underlying these results, see Appendix~\ref{app:RandomCompilingTrotter}. 

\subsection{Randomly Compiled Path Integral Monte Carlo}\label{sec:RC_PIMC_subsec}

Path integral Monte Carlo estimates the thermal state, $\rho = e^{-\beta \hat{H}}/Z$, as an average over classical evolution paths, achieved by approximating $e^{-\beta \hat{H}}$ with Trotterization. For a Hamiltonian composed of $L$ terms and Trotterized over $r$ time steps (equivalent to a path length $r$), standard path integral Monte Carlo suffers a systematic error $\mathcal{O}(L^3\beta^3/r^2)$ in estimating arbitrary observables. As a result, achieving error at most $\epsilon$ requires a path length that grows with $\beta$ and $L$ as $r=\mathcal{O} (L^{3/2}\beta^{3/2}/\epsilon^{1/2})  $.

We will now develop a randomly compiled path integral Monte Carlo algorithm whose path length does not explicitly depend on $L$, and can be considerably more efficient for long-range Hamiltonians. To achieve this, we will use a random Trotter formula known as \textit{QDrift}~\cite{Campbell_2019}. QDrift importance samples terms from the Hamiltonian to evolve under, which has been shown to remove explicit dependence on $L$. This enables faster simulation than standard Trotterization and is particularly advantageous for long-range Hamiltonians with many weak interactions. Although QDrift was originally introduced for real-time evolution, here we employ its extension to imaginary-time~\cite{pocrnic2023composite} to enable efficient thermal state estimation.

In more detail, let us again consider a Hamiltonian $\hat{H} = \sum_{j=1}^L \hat{H}_j$ composed of $L$ terms, which we now express as $\hat{H} = \sum_{j=1}^L h_j \hat{\mathcal{H}}_j$ for $h_j>0$ and $\| \hat{\mathcal{H}}_j \| = 1$ by rescaling.\footnote{These conditions can always be met by rescaling the coefficients $h_j$ and absorbing phases into $\hat{\mathcal{H}}_j$. Recall that $\| \cdot \|$ denotes the spectral norm.} Let us also define the associated probability distribution $p_j = h_j/\lambda$, where the index $j$ labels each Hamiltonian term, and $\lambda = \sum_{j=1}^L h_{j}$ is the 1-norm of the coefficients. 
The QDrift channel for an imaginary time step $\Delta \beta = \beta/r$ is then defined as~\cite{Campbell_2019, pocrnic2023composite}:
\begin{equation}\label{eq:QDrift_Channel}
    \Lambda_{\text{QDrift}}(\rho) = \sum_{j=1}^L p_j e^{-\lambda \Delta \beta \hat{\mathcal{H}}_j } \rho e^{- \lambda \Delta \beta \hat{\mathcal{H}}_j } . 
\end{equation}
This channel corresponds to importance sampling a term $j \sim p_j$ and evolving under $e^{-\hat{\mathcal{H}}_j \lambda \Delta \beta}$. Repeating this channel over $r$ steps, the final state $(\Lambda_{\text{QDrift}})^{\circ r}(\rho)$ deviates from the target state $e^{-\beta \hat{H}} \rho e^{-\beta \hat{H}}$ by an error $\mathcal{O}(\lambda^2 \beta^2 /r)$ in trace distance.

Thus, achieving an error $\epsilon$ requires $r=\mathcal{O}(\lambda^2 \beta^2/\epsilon)$ steps, which scales with the 1-norm $\lambda$, rather than explicitly with the number of terms $L$ as in standard Trotterization. This can provide a substantial advantage when $\lambda \ll L$. The advantage of QDrift becomes even more pronounced when considering the total number of operations needed to perform evolution. Because each QDrift step evolves under a single term, the total number of operations is simply $r$. In contrast, each step of 2nd-order Trotterization evolves under $\mathcal{O}(L)$ terms, resulting in a total number of operations $\mathcal{O}(rL)$. Thus, QDrift can provide an even greater savings in the total number of operations. 

To integrate QDrift into path integral Monte Carlo, we approximate the thermal state as an average of $r/2$ symmetrized QDrift steps, each individually normalized: 
\begin{equation}\label{eq:RC_thermalstate}
    \frac{e^{-\beta \hat{H}}}{Z} \approx \sum_{\mathbf{j}} p_{\mathbf{j}} \frac{ \prod_{k=1}^{r/2} e^{-\lambda \frac{\beta}{r} \hat{\mathcal{H}}_{j_k} } \times \prod_{k'=r/2}^1 e^{-\lambda \frac{\beta}{r} \hat{\mathcal{H}}_{j_{k'}} } }{Z_{\mathbf{j}}}, 
\end{equation}
where
\begin{equation}\label{eq:Z_j_equation}
    Z_{\mathbf{j}} = \tr( \prod_{k=1}^{r/2} e^{-\lambda \frac{\beta}{r} \hat{\mathcal{H}}_{j_k} } \times \prod_{k'=r/2}^1 e^{-\lambda \frac{\beta}{r} \hat{\mathcal{H}}_{j_{k'}} }) . 
\end{equation}
In this expression $\mathbf{j} = (j_1, j_2, ..., j_{r/2})$ is a multi-index that we refer to as the \textit{QDrift sequence}, where $j_k \in \{ 1, 2, ..., L\}$ denotes the term sampled at time step $k$. The corresponding probability is
\begin{equation}
    p_{\mathbf{j}} = \prod_{k=1}^{r/2} p_{j_k} = \prod_{k=1}^{r/2} \frac{h_{j_k}}{\lambda} . 
\end{equation}
Intuitively, Eq.~\eqref{eq:RC_thermalstate} is an average of symmetric products of $r$ imaginary time steps $e^{-\lambda \frac{\beta}{r} \hat{\mathcal{H}}_{j_k} }$, thus resembling repeated application of the QDrift channel. We adopt this particular construction because it is explicitly an average over normalized density matrices and will be directly compatible with the RC-QMC framework.
Although this state is not exactly equal to $r/2$ applications of the QDrift channel, we show in Appendix~\ref{app:Error_Bound_RC_PIMC} that it approximates the thermal state to the same level of error, namely a trace distance $\mathcal{O}(\lambda^2 \beta^2 / r)$.

Using this state to estimate a thermal expectation value yields the expression
\begin{equation}\label{eq:RC_thermalstate_obs}
\begin{aligned}
    &\tr( \hat{O} \frac{e^{-\beta \hat{H}}}{Z} ) \approx \sum_{\mathbf{j}} p_{\mathbf{j}} \langle \hat{O} \rangle_{\mathbf{j}} ,
\end{aligned}
\end{equation}
where
\begin{equation}
\begin{aligned}
    &  \langle \hat{O} \rangle_{\mathbf{j}} = \frac{ \tr( \hat{O} \prod_{k=1}^{r/2} e^{-\lambda \frac{\beta}{r} \hat{\mathcal{H}}_{j_k} } \times \prod_{k'=r/2}^1 e^{-\lambda \frac{\beta}{r} \hat{\mathcal{H}}_{j_{k'}} } ) }{ Z_{\mathbf{j}} }  . 
\end{aligned}
\end{equation}
This equates a thermal observable to an average over observables $\langle \hat{O} \rangle_{\mathbf{j}}$, each of which is indexed by $\mathbf{j}$ and can be computed via path integral Monte Carlo methods. With this formulation in place, we now parallel the development of path integral Monte Carlo to work this expression into an RC-QMC algorithm.

To do so, we first re-express $\langle \hat{O} \rangle_{\mathbf{j}}$ as a sum over classical paths by inserting resolutions of the identity between each term in the product in the numerator. Schematically, this would yield an expression $\langle \hat{O} \rangle_{\mathbf{j}} = \mathop{\mathbb{E}}_{ \mathbf{X} \sim e^{-S(\mathbf{X}, \mathbf{j})}}  \big[ \langle \mathbf{X}_{0,:} | \hat{O} | \mathbf{X}_{0,:} \rangle \big] $, where the action $S(\mathbf{X}, \mathbf{j})$ now depends on the QDrift sequence $\mathbf{j}$ through the sampled terms $\hat{\mathcal{H}}_{j_k}$.

However, this procedure can be drastically simplified to reduce computational cost. Because each step $\hat{\mathcal{H}}_{j_k}$ corresponds to evolution under a single term, many of which mutually commute, it is unnecessary to insert resolutions of the identity between each term in the product. Rather, we can partition the product into mutually commuting sets, and insert resolutions of the identity only between these sets. In addition, if each Hamiltonian term $\hat{\mathcal{H}}_{j}$ acts locally, these resolutions of the identity are required only on the local support of $\hat{\mathcal{H}}_j$. Together, these two considerations allow us to substantially compress the path lengths required to simulate the thermal state, in contrast to standard path integral Monte Carlo where paths are not compressible in general.

\begin{figure*}[htbp]
    \centering
    \includegraphics[width=1.0\linewidth]{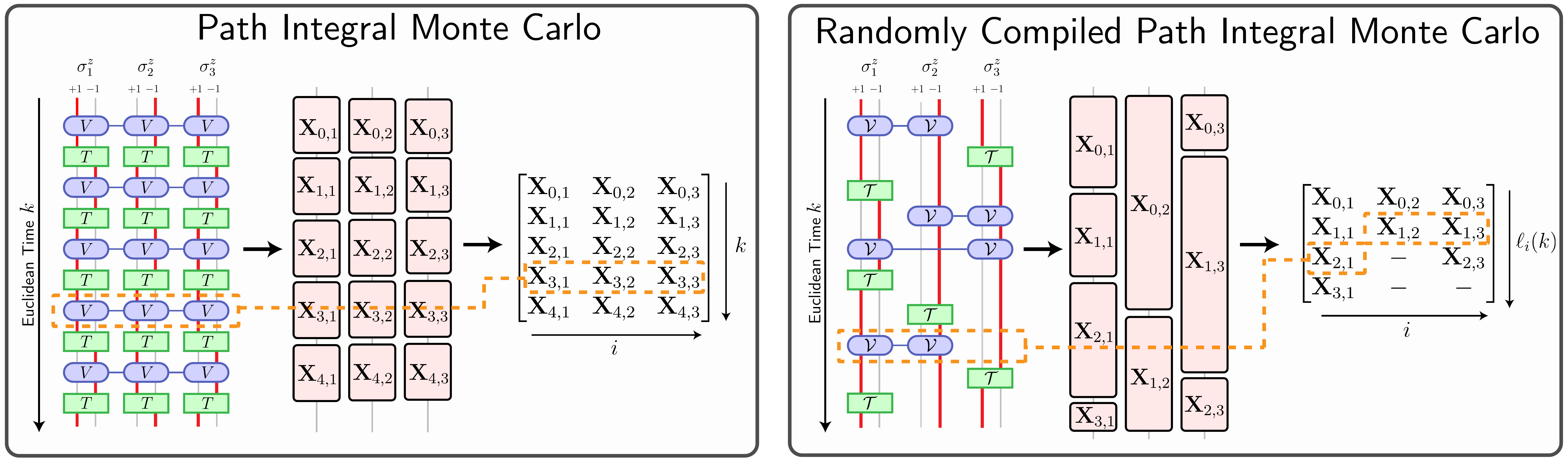}
    \caption{(\textbf{Left}) Standard path integral Monte Carlo on $3$ sites: Trotterization is used to decompose $e^{-\beta H}$ into a product of evolution under the potential ($V$) and kinetic ($T$) energies, and particle positions take values $\pm 1$ in the $\sigma^z$ basis at each time step $k$. The particle paths are then aggregated into an array $\mathbf{X}$, indexed by site $i$ and time step $k$. The dashed orange line depicts the mapping of a single time step to a particle configuration in $\mathbf{X}$. 
    (\textbf{Right}) Randomly compiled path integral Monte Carlo on $3$ sites: individual potential ($\mathcal{V}$) and kinetic ($\mathcal{T}$) energy terms are sampled according to QDrift, and a particle's position can change only when its corresponding kinetic term is sampled. The paths are then aggregated into a \textit{jagged} array $\mathbf{X}$, indexed by site $i$ and its corresponding index at time step $k$, given by $\ell_i(k)$. The dashed orange line again depicts the mapping of a single time step to a configuration in the array, showing that this mapping generally corresponds to different indices for different particles. Note that in our formulation described in Eq.~\eqref{eq:RC_thermalstate}, the pattern of sampled terms is symmetric, which is not depicted here for illustrative simplicity. 
    }
    \label{fig:PIMC}
\end{figure*}

As we will show below, this compression implies that a sequence $\mathbf{j}$ constrains the set of paths that make nonzero contributions to $\langle \hat{O}\rangle_{\mathbf{j}}$. In addition, for a fixed sequence $\mathbf{j}$, the paths associated with different particles may differ in length. To represent all such paths efficiently, we use an array $\mathbf{X}$, whose $i$th column $\mathbf{X}_{:,i}$ corresponds to the path of particle $i$, as introduced in Eq.~\eqref{eq:paths_array}. Because these lengths may vary, $\mathbf{X}$ now takes the form of a \textit{jagged array}, where each column may have different size. In effect, this representation reduces the number of variables that need to be kept track of, thus compressing the average path length to well below the number of time steps $r$. We will make this explicitly clear in Sec.~\ref{sec:RC_PIMC_Example}, where we exemplify this procedure for a simple Hamiltonian.

This analysis leads to the defining expression of randomly compiled path integral Monte Carlo:
\begin{equation}\label{eq:RC_PIMC}
\begin{aligned}
    \langle \hat{O} \rangle &= \text{tr} \bigg( \hat{O} \frac{e^{-\beta \hat{H}}}{Z} \bigg) \\
    &= \mathop{\mathbb{E}}_{\mathbf{j} \sim p_{\mathbf{j}}} \ \bigg[ \mathop{\mathbb{E}}_{\mathbf{X} \sim e^{-S(\mathbf{X}, \mathbf{j})} } \Big[ \langle \mathbf{X}_{0,:} | \hat{O} | \mathbf{X}_{0,:} \rangle \Big]\bigg]  + \mathcal{O}\Big(\frac{\lambda^2 \beta^2 }{r}\Big) ,  
\end{aligned}
\end{equation} 
where we have included the systematic error $\mathcal{O}(\lambda^2 \beta^2 /r)$. Here we again take $\hat{O}$ to be diagonal in position space, such that each particle's path is periodic. Algorithmically, the double expectation value in Eq.~\eqref{eq:RC_PIMC} can be evaluated by sampling configurations $\mathbf{j} \sim p_{\mathbf{j}}$, and then sampling paths $\mathbf{X} \sim e^{-S(\mathbf{X}, \mathbf{j})}$ conditioned on $\mathbf{j}$. Because $p_{\mathbf{j}}$ is known analytically, it can be sampled directly, whereas sampling paths from $e^{-S(\mathbf{X}, \mathbf{j})}$ can be performed using standard Markov chain Monte Carlo. 

Lastly, we note that alternative variants of QDrift can also be incorporated into path integral Monte Carlo. For example, one could use composite QDrift~\cite{Hagan_2023}, which randomizes over only a subset of terms in the Hamiltonian. Ref.~\cite{pocrnic2023composite} explored this for path integral Monte Carlo, but sampled only a single QDrift sequence per QMC simulation, whereas ours averages over all such sequences. This distinction leads to both enhanced error suppression and reduced computational cost in our formulation.

\subsection{Example for a Simple Hamiltonian}\label{sec:RC_PIMC_Example}

For concreteness, let us illustrate randomly compiled path integral Monte Carlo for an $N$-particle (or $N$-site) Hamiltonian $\hat{H} = \hat{T} + \hat{V}$, where $\hat{T}$ and $\hat{V}$ serve as analogs of kinetic and potential energy. Let us take $\hat{T}$ to decompose into a sum of single-particle operations, and $\hat{V}$ into a sum of two-body interactions between particles $i,i'$: 
\begin{equation}
    \hat{T} = \sum_{i=1}^N t_i \hat{\mathcal{T}}_i , \quad \hat{V} = \sum_{i \neq i'}^N v_{i,i'} \hat{\mathcal{V}}_{i,i'} , 
\end{equation}
where $t_i, v_{i,i'} > 0$, and each operator is individually normalized as $\|\hat{\mathcal{T}}_i\|=1$ and $\|\hat{\mathcal{V}}_{i,i'} \|=1$.\footnote{Strictly speaking, this normalization is not possible in continuous space, where the kinetic energy is unbounded. Here, however, we will consider models with a nonzero lattice spacing, such as spin chains, where this condition can be imposed.} 

To develop this into randomly compiled path integral Monte Carlo, we will use a basis $\{ |\mathbf{x}\rangle \}$ in which the potential energy is diagonal, satisfying 
\begin{equation}
    \hat{\mathcal{V}}_{i,i'} |\mathbf{x}\rangle = \mathcal{V}(\mathbf{x}(i),\mathbf{x}(i')) |\mathbf{x}\rangle , 
\end{equation}
for a local potential energy function $\mathcal{V}(x,x')$ (e.g., $\mathcal{V}(x,x') \propto 1/|x-x'|$ for a Coulomb interaction). On the other hand, the kinetic terms are not diagonal in this basis. Consequently, resolutions of the identity need only be inserted after each sampled kinetic term $e^{- \lambda \Delta \beta \hat{\mathcal{T}}_i }$ in a given QDrift sequence, as only these can couple non-equal basis states and modify the path configuration. In addition, these resolutions need only act nontrivially on the support of $\hat{\mathcal{T}}_i$, allowing us to locally resolve the identity on particle $i$ as 
\begin{equation}
    I = I_{1:i-1} \otimes \sum_{x} |x \rangle \langle x | \otimes I_{i+1:N} , 
\end{equation}
where $I_{i:i'}$ denotes the identity on site $i$ through $i'$.

This contrasts with standard path integral Monte Carlo, where each particle follows a classical path and can change its position at every time step. In the randomly compiled formulation, each particle similarly follows a classical path, but its position can only change at time steps where its kinetic term is sampled. Consequently, the QDrift sequence $\mathbf{j}$ constrains the allowed paths, both in their length and configuration. 


As an example, Fig.~\ref{fig:PIMC} depicts a QDrift sequence being mapped to a sum over paths that are represented as a jagged array $\mathbf{X}$, in comparison with standard path integral Monte Carlo. As we see, the path of particle $i=1$ is described by a vector of length $\text{len}(\mathbf{X}_{:,1}) = 4$,
while the paths of particles $i=2$ and $i=3$ are of length $2$ and $3$, respectively.

Because each particle can change position only when its kinetic term is sampled, the time steps at which these updates occur differ for each particle. To keep track of this, we define $\ell_i(k) \in \{ 0, 1,2,..., \text{len}(\mathbf{X}_{:,i}) -1 \}$ as the index in the path of particle $i$ at time step $k$. 
For example, as illustrated in right image of Fig.~\ref{fig:PIMC}, time step $k=0$ corresponds to index $\ell_i(0) = 0$ for each particle, whereas time step $k = 7$ (denoted by the dashed orange line) corresponds to indices $\ell_1(7) = 2,\ \ell_2(7) = 1,\ \ell_3(7) = 1$.

In aggregate, this analysis shows that the action decomposes into a sum over time steps as 
\begin{equation}\label{eq:RC_PIMC_Action_1}
    S(\mathbf{X}, \mathbf{j}) = \lambda \Delta \beta \sum_{k=0}^{r-1} S(\mathbf{X}, \mathbf{j}; k) ,
\end{equation}
where $S(\mathbf{X}, \mathbf{j}; k)$ is the contribution from time step $k$: 
\begin{equation}\label{eq:RC_PIMC_Action_2}
\begin{aligned}
    &S(\mathbf{X}, \mathbf{j}; k) = \\
    &\qquad 
    \begin{cases}
        \mathcal{T}\big( \mathbf{X}_{\ell_i(k),i} , \ \mathbf{X}_{\ell_i(k) + 1,i} \big) & \text{if } \hat{\mathcal{T}}_{i} \text{ is sampled} \\
        \mathcal{V}\big(\mathbf{X}_{\ell_i(k),i}, \ \mathbf{X}_{\ell_{i'}(k),i'} \big) & \text{if } \hat{\mathcal{V}}_{i,i'} \text{ is sampled.}
    \end{cases}
\end{aligned}
\end{equation}
Here, $\mathcal{T}(x_\ell, x_{\ell+1})$ is a local kinetic energy function defined by $\langle x_\ell | e^{-\lambda \Delta \beta \hat{\mathcal{T}}_i} | x_{\ell+1} \rangle = e^{-\lambda\Delta\beta \mathcal{T}(x_\ell , x_{\ell+1})}$ for positions $x_\ell, x_{\ell+1}$ at site $i$. 
This action is analogous to the action in standard path integral Monte Carlo (Eq.~\eqref{eq:PIMC_Action}), but with the index $\ell_i(k)$ replacing the time step $k$, and individually sampled kinetic/potential energies replacing the sum of all Hamiltonian terms. A key distinction, however, is that the action is now a random variable, as it depends on the QDrift sequence $\mathbf{j}$. The benefit of this randomization is that the action can now be composed of far fewer operations than in standard path integral Monte Carlo, leading to computational savings that we demonstrate in the experiments below.

In evaluating $S(\mathbf{X}, \mathbf{j})$ in practice, it is often simplest to compute the entire kinetic contribution as a sum over all indices $\ell_i$, namely $\sum_{i=1}^N \sum_{\ell_i=0}^{\text{len}(\mathbf{X}_{:,i})-1} \mathcal{T} \big( \mathbf{X}_{\ell_i, i}, \ \mathbf{X}_{\ell_i + 1, i} \big) $, 
and to record each potential term $\mathcal{V} \big(\mathbf{X}_{\ell_i,i}, \ \mathbf{X}_{\ell_{i'},i'} \big)$ as it is sampled. This avoids needing to track the indices $\ell_i(k)$ across every time step, resulting in a more memory-efficient implementation.

Finally, while this analysis specializes to a simple Hamiltonian $\hat{H}=\hat{T}+\hat{V}$, we emphasize that randomly compiled path integral Monte Carlo can be extended to more general Hamiltonians in the same manner. A key consideration in doing so is to minimize the computational cost by inserting local resolutions of the identity only between non-commuting sets of operators that appear in the QDrift sequence.

\subsection{Numerical Experiments}\label{sec:RC_PIMC_Experiments}
To demonstrate our algorithm, let us apply randomly compiled path integral Monte Carlo to the long-range Ising model~\cite{Scha_2013, Buyskikh_2016} on $N$ spins: 
\begin{equation}\label{eq:LR_Ising_Model}
    \hat{H} = -J \sum_{i,i'=1; \ i < i'}^N \frac{\sigma^z_i \sigma^z_{i'}}{|i-i'|^{2}} - h\sum_{i=1}^N \sigma^{x}_i  , 
\end{equation}
where $\sigma^z_i$ and $\sigma^x_i$ are the Pauli matrices on site $i$. 
Because its Hamiltonian features long-range interactions between all pairs of particles, the total number of terms is $L = \mathcal{O}(N^2)$. However, as these interactions decay with distance, the 1-norm of the coefficients is $\lambda = J \sum_{i < i'}^N \frac{1}{|i-i'|^2} + hN = \mathcal{O}(N)$, which is much less than $L$ for large systems. We therefore expect an advantage in deploying randomly compiled path integral Monte Carlo on this model.

\begin{figure*}[t]
    \centering
    \includegraphics[width=0.87\linewidth]{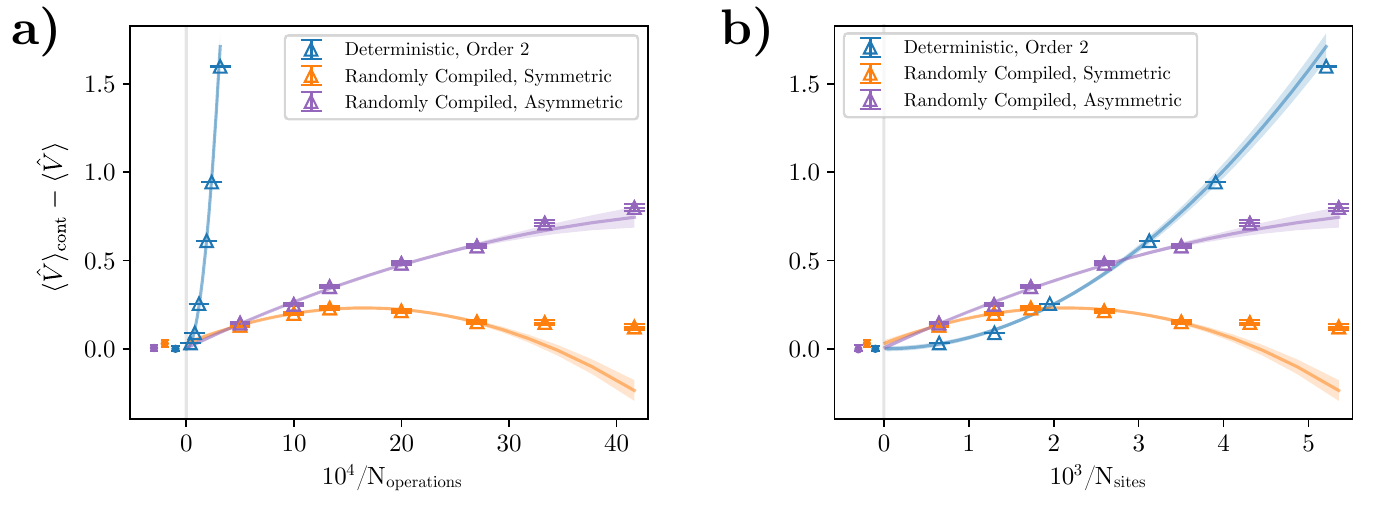}
    \caption{Comparison of error in the observable $\langle \hat{V} \rangle$ between standard path integral Monte Carlo (using 2nd-order Trotterization) and randomly compiled path integral Monte Carlo (using symmetric and asymmetric approaches), where the error has been defined relative to the continuum limit of the standard algorithm (denoted $\langle \hat{V} \rangle_\mathrm{cont}$). 
    Inner error bars for the randomly compiled methods denote the MCMC sampling errors within QDrift sequences, and outer error bars are the total error estimate including the variance due to averaging over different QDrift sequences (as described in Appendix~\ref{app:Stat_Error}). 
    Continuum limit extrapolated values are shown at slightly negative offset values for clarity. 
    \textbf{a)}: Here the comparison is shown in terms of the number of operations $N_\mathrm{operations}$ required to compute the action. The randomly compiled approaches reach smaller systematic errors at far fewer operations compared to the standard algorithm.  
    \textbf{b)}: The comparison is shown in terms of the second metric considered, $N_\mathrm{sites}$ counting the number of variables required to represent a path configuration in either scheme. 
    }
    \label{fig:QDrift-exp}
\end{figure*}

For this Hamiltonian, the kinetic terms are $\sigma_i^x $ and the potential terms are $\sigma_i^z \sigma_{i'}^z$. Following the prescription above, we will therefore work in the eigenbasis of $\sigma^z_i$ where $\mathbf{X}_{\ell_i, i} = \pm 1$, and insert resolutions of the identity after each $\sigma_i^x $. This yields an action of the form shown in Eqs.~\eqref{eq:RC_PIMC_Action_1} and~\eqref{eq:RC_PIMC_Action_2}, where:
\begin{align}
&\mathcal{T}\big( \mathbf{X}_{\ell_i(k),i} , \ \mathbf{X}_{\ell_i(k) + 1,i} \big) =\nonumber \\
&\qquad \  \frac{-1}{\lambda \Delta \beta}\ln\left( 1 + \mathbf{X}_{\ell_i(k),i} \mathbf{X}_{\ell_i(k) + 1,i} e^{-2\lambda \Delta \beta} \right) + A \\
&\mathcal{V}\big(\mathbf{X}_{\ell_i(k),i},
\ \mathbf{X}_{\ell_{i'}(k),i'} \big) = - \mathbf{X}_{\ell_{i}(k),i}\mathbf{X}_{\ell_{i'}(k),i'} , 
\end{align}
for a constant $A$ that is independent of $\mathbf{X}$ and thus dismissible. 
We sample this action using Metropolis-Hastings by proposing local flip updates to the individual particle configurations $\mathbf{X}_{\ell_i, i}$.

To compare with standard path integral Monte Carlo, we fix a set of parameters $N = 32, \  \beta = 8, \  J = 0.1, \ h = 1.0$, and choose our target observable to be the potential energy $\langle \hat{V} \rangle := \frac{1}{Z}\mathrm{tr}(e^{-\beta \hat{H}} \hat{V})$. 
We additionally consider a modified version of randomly compiled path integral Monte Carlo. In contrast to the construction presented in Eq.~\eqref{eq:RC_thermalstate}, where the sequences of imaginary time steps are symmetric, here we also study an alternative formulation where the Hamiltonian term at each step is drawn independently and the sequence is no longer symmetric. We refer to this modified version as the \textit{asymmetric} approach, and the original construction as the \textit{symmetric} approach. Note that the error bounds proven for the symmetric approach do not necessarily carry over to the asymmetric setting to this structural difference. However, as we discuss in Appendix~\ref{app:Error_Bound_RC_PIMC}, we expect that similar bounds still hold, and our experiments corroborate this. An advantage of the asymmetric approach is that it permits averaging an observable over all time steps, analogous to volume averaging in translationally invariant systems, thereby reducing statistical noise.

To compare the different path integral Monte Carlo approaches, Fig.~\ref{fig:QDrift-exp} plots continuum-limit ($r \to \infty$) extrapolations of $\langle \hat{V} \rangle$ against two metrics. 
The first metric we consider is the number of operations required to compute the action on the configuration, assuming each $\hat{\sigma}_z\hat{\sigma}_z$ and $\hat{\sigma}_x$ counts as a single operation. We denote this by $N_\mathrm{operations}$. This is a classical analog of the `gate count' cost of implementing algorithms on a quantum computer.
For standard path integral Monte Carlo, the number of operations is $r \cdot (N + \frac{N (N-1)}{2} )$ to account for every term in the Hamiltonian, whereas for its randomly compiled variant it is simply the number of sampled terms $r$.

The second metric is the number of degrees of freedom used in each simulation (known as `sites' in the path integral Monte Carlo literature), denoted by $N_\mathrm{sites}$. 
For standard path integral Monte Carlo, $N_\mathrm{sites} = N \cdot r$ because a complete basis is inserted at every time step. 
However, for the randomly compiled approaches, a basis is inserted only after a kinetic term is sampled, and thus the average number of sites is given by $\overline{ N_\mathrm{sites}} = N \cdot \mathbb{E}[\text{max}(1, \# \mathcal{T})]$ where $\# \hat{\mathcal{T}}$ is the number of times a kinetic term is sampled for a single spin. 
This expression evaluates to:
\begin{equation}
\begin{aligned}
    &\overline{ N_\mathrm{sites}} = \\
    & \qquad N \cdot \left( \frac{r h}{\lambda } + \left( 1 - \frac{h}{\lambda} \right)^{x}\right) , 
    \quad x = \begin{cases}
        \frac{r}{2} & \text{symmetric} \\
        r & \text{asymmetric,}  
    \end{cases} 
\end{aligned}
\end{equation}
where the first term in parentheses is the expected number of kinetic terms sampled for a single spin, and second term is the contribution when a kinetic term is never sampled.
A smaller $N_\mathrm{sites}$ for a given systematic error corresponds to requiring less storage, and possibly lower autocorrelation times due to Monte Carlo sampling over a lower-dimensional space.

Note that because randomly compiled path integral Monte Carlo is a first order Trotterization scheme, the continuum limit fit includes both linear and quadratic pieces, whereas that of standard path integral Monte Carlo (using 2nd-order Trotterization) starts at a quadratic order. 
As seen in Fig.~\ref{fig:QDrift-exp}, randomly compiled path integral Monte Carlo significantly outperforms the standard algorithm when convergence is measured in terms of the number of operations. Both the symmetric and asymmetric variants require substantially fewer operations to achieve the same level of systematic error. In practice, this translates to a reduced computational cost per Monte Carlo update, such as a fewer floating-point operations needed to compute the action in a step of Markov chain Monte Carlo. On the other hand, randomly compiled path integral Monte Carlo performs comparably to the standard algorithm when measured in terms of $N_{\text{sites}}$, implying that these algorithms have similar memory requirements.

Lastly, for completeness we note that because this model's local Hilbert space is finite-dimensional (each spin is $\pm 1$), it is also possible to formulate an alternate continuous-time Monte Carlo approach that obviates the effects of Trotter error~\cite{Rieger1999-fo}, albeit at the expense of a more elaborate representation of paths and their associated update procedure. Here, we use this Ising-like setting primarily as a convenient benchmark for illustrating randomly compiled path integral Monte Carlo. More broadly however, randomly compiled path integral Monte Carlo is a general algorithm applicable to a much wider class of systems, including those with infinite dimensional local Hilbert spaces (such as lattice gauge theories), for which no continuous-time Monte Carlo algorithm is known.

\section{Application to the Quantum Trajectories Method}\label{sec:Application_QTM}

In this section we develop the \textit{randomly compiled quantum trajectories method} as an improvement over the standard algorithm. While standard quantum trajectories evolves states using deterministic Trotterization, our randomly compiled version uses \textit{randomly corrected Trotterization}~\cite{Cho_2024}, which doubles the effective Trotter order and reduces the number of time steps required to achieve a given accuracy. This is an improvement in Trotter order rather than parameter scaling, and is therefore complementary to the randomly compiled path integral Monte Carlo advantage developed in the previous section.

Sec.~\ref{sec:RC_QTM_subsec} begins by deriving the randomly compiled quantum trajectories method and establishing a bound on its systematic error. Sec.~\ref{sec:RC_QTM_Experiments} demonstrates the advantage of randomly compiled quantum trajectories on a dissipative spin chain model, benchmarked against exact results for small system sizes and extended to larger systems using matrix product states. For a more thorough review of randomly corrected Trotterization algorithm that these results rest on, see Appendix~\ref{app:RandomCompilingTrotter}.

\subsection{Randomly Compiled Quantum Trajectories Method}\label{sec:RC_QTM_subsec}

As we presented in Sec.~\ref{sec:QMC}, the quantum trajectories method simulates open system dynamics by averaging over many pure state trajectories, each of which evolves under an effective Hamiltonian interspersed with stochastic jumps. Explicitly, the pure states simulated take the form (see Eq.~\eqref{eq:QTM_state}) 
\begin{equation}
\begin{aligned}
    & | \phi (t| t_{1:m}) \rangle \\
    & \quad = e^{-i\hat{H}_{\text{eff}}(t-t_{m})} c e^{-i\hat{H}_{\text{eff}}(t_m-t_{m-1})} c ... c e^{-i\hat{H}_{\text{eff}}t_1} |\psi \rangle ,
\end{aligned}
\end{equation}
where $|\psi\rangle$ is the initial state, $\hat{H}_{\text{eff}}$ is the effective Hamiltonian, and $m$ is the number of jumps experienced in this trajectory. Here we consider realizing this using tensor networks (i.e., MPSs), where evolution under $\hat{H}_{\text{eff}}$ may be approximated using a Trotter-based algorithm like TEBD~\cite{Vidal_2003, Vidal_2004, White_2004, Daley_2004}. Employing a standard order-$n$ Trotter formula introduces an error $\mathcal{O}((\Delta t)^{n+1})$ at each time step of size $\Delta t$.

\begin{figure*}[htbp]
    \centering
    \includegraphics[width=0.98\linewidth]{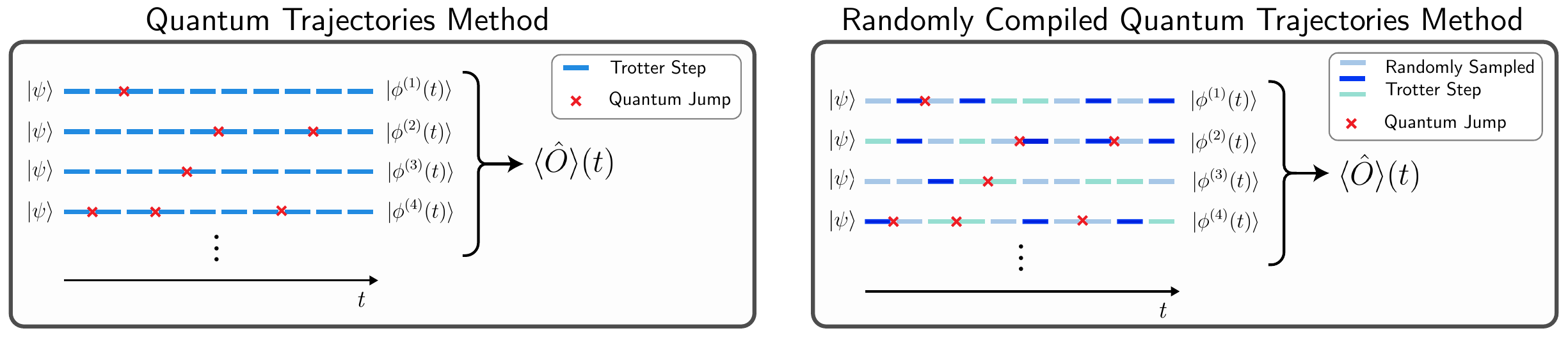}
    \caption{An illustration of the quantum trajectories method (\textbf{left}), and the randomly compiled quantum trajectories method (\textbf{right}). Here, pure state trajectories are simulated by time-evolving an initial state $|\psi\rangle$ using Trotterization (denoted by blue rectangles), with intermediate quantum jumps applied randomly (denoted by red crosses). In the standard algorithm, each Trotter step is taken to be the same approximation. In the randomly compiled algorithm, each Trotter step is randomly sampled from a set of approximations. 
    At the end of evolution, the trajectories are aggregated together to estimate an observable $\langle \hat{O} \rangle (t)$ according to Eqs.~\eqref{eq:master_eq_obs} and~\eqref{eq:RC_QTM}.
    }
    \label{fig:RC_QTM}
\end{figure*}

To develop a randomly compiled version of the quantum trajectories method, we replace this standard Trotter decomposition with a random channel that approximates evolution under $e^{-i \hat{H}_{\text{eff}} \Delta t}$:
\begin{equation}\label{eq:RC_QTM_Trotter}
    \Lambda_{\text{RandCorrTrotter}}(\rho) = \sum_j p_j W_j \rho W_j^\dagger , 
\end{equation}
for a choice of operators $W_j$ and probabilities $p_j$. In particular, here we use the method introduced in Ref.~\cite{Cho_2024}, which we refer to as \textit{randomly corrected Trotterization}, and have denoted the corresponding channel by $\Lambda_{\text{RandCorrTrotter}}$. This approach improves upon standard Trotterization by quadratically suppressing the time step error, effectively doubling the order of Trotterization.

In more detail, Ref.~\cite{Cho_2024} suppresses the error of an order-$n$ Trotter formula $S_n(\Delta t)$ (defined in Appendix~\ref{app:RandomCompilingTrotter}) by taking
\begin{equation}\label{eq:randomly_corrected_Trotter_ops}
    W_j = S_{n}(\tfrac{\Delta t}{2}) C_j(\Delta t ) S_{n} (\tfrac{\Delta t}{2}) ,
\end{equation}
where $C_j(\Delta t)$ is an order-$n$ \textit{correction operation} indexed by $j$. By carefully selecting these correction operations and their associated probabilities $p_j$ to cancel out the higher-order errors suffered by $S_n(\Delta t)$, the channel $\Lambda_{\text{RandCorrTrotter}}$ reproduces time evolution up to an error $\mathcal{O}(\Delta t^{2n+2})$, which is equivalent to an order-$(2n+1)$ Trotter formula. In practice, $C_j(\Delta t)$ and $p_j$ correspond to time evolution under nested commutators of the Hamiltonian terms; because their exact expressions are lengthy, we defer them to Appendix~\ref{app:RandomCompilingTrotter}. Owing to this error suppression, randomly corrected Trotterization requires $\mathcal{O}(tL (tL/\epsilon)^{1/(2n+1)})$ Trotter steps to approximate evolution over a duration $t$ to error $\epsilon$. 
This reduces computational cost relative to standard Trotterization by a factor of $\mathcal{O}\big((tL/\epsilon)^{\frac{n+1}{n(2n+1)}}\big)$, which approaches $\mathcal{O}((tL/\epsilon)^{1/2n})$ for large $n$. 

In applying this to the quantum trajectories method, we note that although evolution under $\hat{H}_{\text{eff}}$ is non-unitary, randomly corrected Trotterization extends to this setting in essentially the same way as it does to unitary evolution. The reason is that the mathematical statement that describes randomized compiling (i.e., Eq.~\eqref{eq:RC_err_suppression}) applies equally well to non-unitary maps; see Appendix~\ref{app:mixing_lemma} for details.

We now integrate randomly corrected Trotterization into the quantum trajectories method by replacing each time evolution step in the quantum trajectories method with the channel $\Lambda_{ \text{RandCorrTrotter}} $. This yields the following expression for an observable:
\begin{equation}\label{eq:RC_QTM}
\begin{aligned}
    \langle \hat{O} \rangle (t) &= \tr(\hat{O} \rho(t))\\
    & \approx\mathop{\mathbb{E}}_{\mathbf{j} \sim p_{\mathbf{j}}} \Bigg[ \mathop{\mathbb{E}}_{|\widetilde{\phi}_{\mathbf{j}}\rangle \sim \langle \widetilde{\phi}_{\mathbf{j}} | \widetilde{\phi}_{\mathbf{j}} \rangle} \Bigg[ \frac{\langle \widetilde{\phi}_{\mathbf{j}} (t|t_{1:m}) | \hat{O} | \widetilde{\phi}_{\mathbf{j}} (t|t_{1:m}) \rangle}{\langle \widetilde{\phi}_{\mathbf{j}} (t|t_{1:m}) | \widetilde{\phi}_{\mathbf{j}} (t|t_{1:m}) \rangle}  \Bigg] \Bigg] . 
\end{aligned}
\end{equation}
Here $\mathbf{j} = (j_1, j_2, ..., j_r)$ is a multi-index, and $p_{\mathbf{j}} = \prod_{k=1}^r p_{j_k}$ is the probability distribution of sampling $\textbf{j}$. The states $| \widetilde{\phi}_{\mathbf{j}} (t| t_{1:m}) \rangle$ approximate evolution under $\hat{H}_{\text{eff}}$ with a Trotter sequence indexed by $\mathbf{j}$:
\begin{widetext}
\begin{equation}\label{eq:RC_QTM_state}
\begin{aligned}
    | \widetilde{\phi}_{\mathbf{j}} (t| t_{1:m}) \rangle & =  \Big( \prod_{k_{m+1}=r_m+1}^{r} W_{j_{k_{m+1}}} \Big) \times \Big(c \prod_{k_m=r_{m-1}+1}^{r_m} W_{j_{k_m}} \Big) \times ... \times \Big(c \prod_{k_1=1}^{r_1} W_{j_{k_1}}\Big) |\psi \rangle \\
    & \approx \big( e^{-i\hat{H}_{\text{eff}}(t-t_{m})} \big) \times \big( c e^{-i\hat{H}_{\text{eff}}(t_m-t_{m-1})} \big) \times ... \times \big( c e^{-i\hat{H}_{\text{eff}}t_1} \big) |\psi \rangle . 
\end{aligned}
\end{equation}
\end{widetext}
In this expression, each term in parentheses denotes time evolution followed by a jump, and the $i$th jump occurs at Trotter step $r_i =  t_i/\Delta t = t_i r/t$. 
Eq.~\eqref{eq:RC_QTM} is thus the central expression of the randomly compiled quantum trajectories method, in which one averages over many trajectories, each generated with an independently sampled Trotter sequence $\mathbf{j}$. This construction is illustrated in Fig.~\ref{fig:RC_QTM} alongside standard quantum trajectories.

\subsection{Numerical Experiments}\label{sec:RC_QTM_Experiments}

\begin{figure*}
    \includegraphics[width=.99\linewidth]{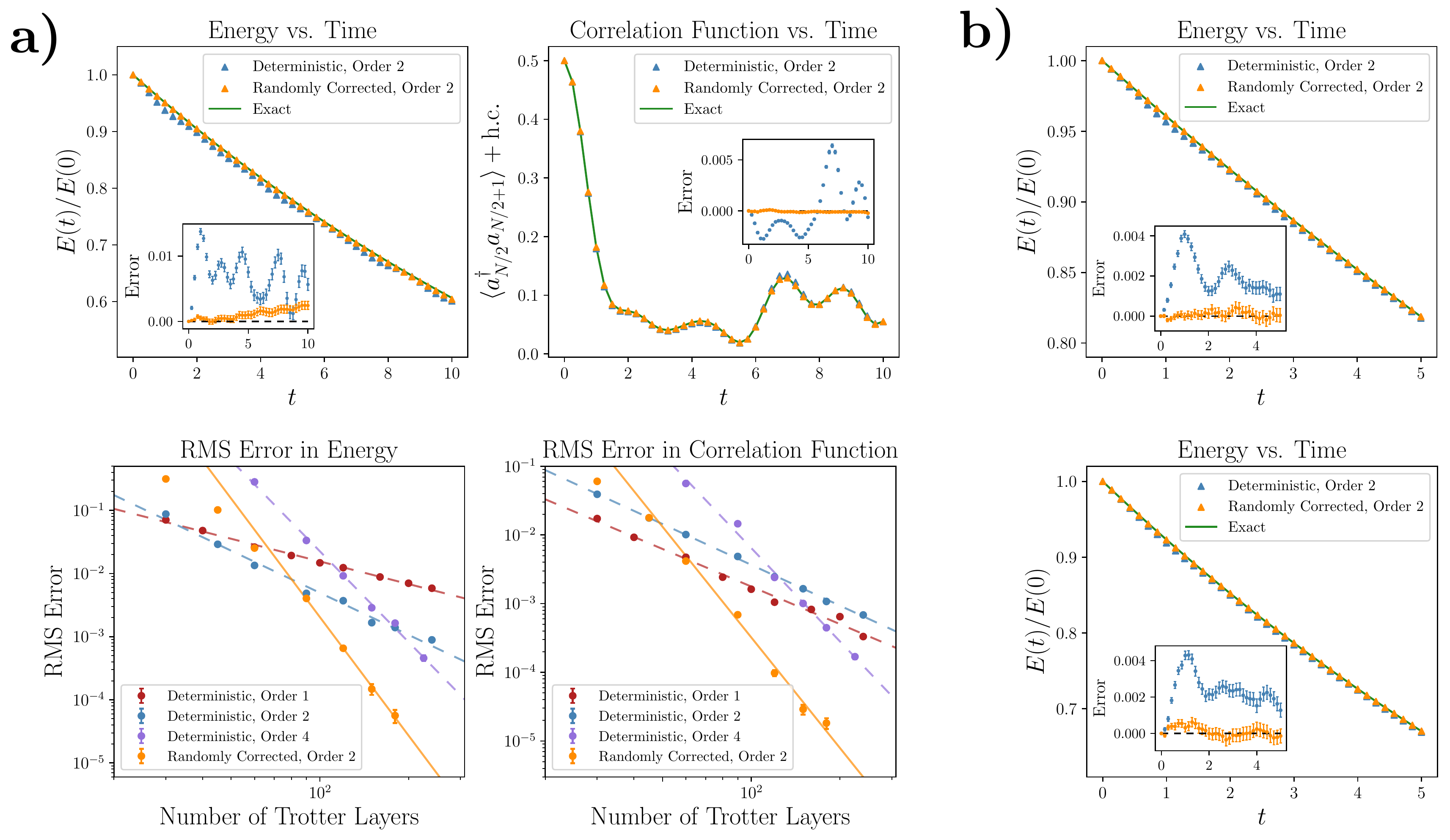}
    \caption{Comparison between the standard (deterministic) and randomly compiled quantum trajectories methods applied to the dissipative Bose-Hubbard model. 
    \textbf{a)}: Simulations conducted on $N=10$ spins with dissipation rate $\gamma = 0.05$, using Trotterization of various orders. We illustrate the time series for energy and the correlation function $\langle a_{N/2}^\dag a_{N/2+1} \rangle + \text{h.c.} $ at a fixed number of Trotter steps ($r=40$) ({top}), and the corresponding RMS error as a function of computational cost ({bottom}).
    \textbf{b)}: Simulations conducted on $N=30$ spins with dissipation rates $\gamma = 0.04$ ({top}) and $\gamma = 0.08$ ({bottom}), using 2nd-order Trotterization. Here we illustrate the time series for energy at a fixed number of Trotter steps ($r=35$). 
    }
    \label{fig:QTM}
\end{figure*}

The randomly compiled quantum trajectories method suppresses the Trotter error suffered in estimating observables. This enables the use of larger time steps while maintaining the same level of accuracy, thereby accelerating the overall algorithm. When combined with tensor network implementations of the quantum trajectories method, this reduces the number of tensor contractions and singular value decompositions needed to simulate time evolution.

To demonstrate the advantage of randomly compiled quantum trajectories, we consider a hardcore Bose-Hubbard model described by the Hamiltonian:
\begin{equation}
    \hat{H} = -J \sum_{i=1}^{N-1} \big( a_{i+1}^\dag a_i + a_i^\dag a_{i+1} \big), \quad (a_i^\dag)^2 = 0 ,  
\end{equation}
for a coupling $J>0$, and bosonic creation/annihilation operators that obey $[a_i, a_j^\dag] = \delta_{ij}$ for $i\neq j$ and $\{ a_i , a_j \}=1$. We further take this system to experience spontaneous emission at a rate $\gamma$, which localizes particles and is modeled by jump operators $c_i = \sqrt{\gamma} a_i^\dag a_i $ at each site $i$~\cite{Daley_2014}. This causes the system's energy to dissipate over time; analytically, it is known that the energy decays as $E(t) = \tr\big(\hat{H} \rho(t)\big) = E(0)e^{-\gamma t}$~\cite{Daley_2014}. Further, we fix coupling strength $J=1$ and choose an initial state with two excitations in the center of the chain: $|\psi(0)\rangle = |0\rangle^{\otimes (N/2-1)} |+\rangle |+\rangle |0\rangle^{\otimes (N/2-1)}$.

In our simulations, we use matrix product states of bond dimension $\chi=10$, which captures the time-evolved state with small truncation error $\lesssim 10^{-6}$. To simulate time evolution, we use TEBD with both deterministic Trotterization up to $4$th-order and randomly corrected Trotterization of $2$nd-order. To reduce computational cost, we use a modified version of randomly corrected Trotterization that we derived in Appendix~\ref{app:Efficient_randomly_corrected_Trotter}, which requires only a single call to the 2nd-order Trotter formula $S_2(\Delta t)$ per time step instead of the two calls required in Eq.~\eqref{eq:randomly_corrected_Trotter_ops}. As we explain in Appendix~\ref{app:Efficient_randomly_corrected_Trotter}, this modified version suppresses error to $\mathcal{O}(\Delta t^5)$ and thus achieves performance similar to $4$th-order Trotterization.

We first study a small system of $N=10$ spins where we can benchmark against the exact solution, and consider weak dissipation $\gamma = 0.05$ and a total time $t=10$. We show results in Fig.~\ref{fig:QTM}a). In the top plots, we display the time series for energy and a correlation function $\langle a_{N/2}^\dag a_{N/2 + 1}  \rangle + \text{h.c.}$, computed using deterministic and randomly corrected Trotterization of 2nd-order, both at $r=40$ Trotter steps. While both methods follow the exact solution, randomly corrected Trotterization achieves error an order of magnitude smaller in the energy than the deterministic approach, and several-fold smaller in the correlation function. To analyze this error more closely, in the bottom plots we showcase the root-mean-square (RMS) error in both time series as a function of the number of Trotter layers per trajectory (i.e., the number of layers of two-site operations applied to an MPS, which is proportional to the total number of operations). While this neglects application of the random correction operation, doing so is admissible because the correction operation is a single tensor applied per time step, whose cost does not scale extensively with system size and contributes only negligibly to the runtimes of our simulations. From these results, we see that randomly corrected Trotterization is superior at achieving small errors and attains the most favorable asymptotic scaling, with a convergence that is visibly steeper than that of deterministic 4th-order Trotterization. This improved performance is important when high-accuracy simulations are required.

We also explore systems beyond exact diagonalization by simulating $N=30$ spins at dissipation rates $\gamma = 0.04$ and $\gamma = 0.08$. While we cannot compute expectation values exactly on a system of this size, we can use the analytic result that the energy decays as $E(t) = E(0)e^{-\gamma t}$. We plot the decay of energy in Fig.~\ref{fig:QTM}b) using both standard and randomly corrected Trotterization at the same number of Trotter steps. Mirroring the previous plots, we again see that randomly compiled quantum trajectories achieves lower systematic error than the standard method.

Looking forward, the improvement provided by randomly compiled quantum trajectories could be extended to other time evolution algorithms beyond TEBD. For instance, extension to the time-dependent variational principle (TDVP)~\cite{Paeckel_2019} could entail randomizing over the differential equation solver used to integrate the Schr\"odinger equation. 
In addition, randomization could be incorporated into the truncation step present in both TEBD and TDVP to suppress truncation error, as suggested in Refs.~\cite{ferris2015unbiased, harrow2025randomized, wang2025faster}, but we leave this to future work. 

\section{Conclusions and Outlook}\label{sec:Conclusions}

The central lesson of this work is that randomness is a computational resource that can be traded against systematic error. In what follows, we first recapitulate the main results that establish RC-QMC as a general framework for harnessing this resource, then survey the breadth of fields and problems that stand to benefit from it, and finally outline the improvements and conceptual extensions that lie ahead. This culminates in a discussion of how entropy lies at the heart of our methods and serves as a bridge from classical simulation to quantum information.

\emph{Summary of results.}  
We have demonstrated that well-placed randomness can speed up computations in quantum Monte Carlo. By developing RC-QMC as a general framework for reducing systematic error in QMC, we enabled more accurate and efficient simulations of quantum systems. A key reason for this improvement is that QMC algorithms already repeatedly sample from a distribution, which allowed us to weave in randomized compiling to further suppress errors at negligible additional cost. 

In general, effective deployment of RC-QMC requires tailoring the choice of randomized compiling technique to the specific Hamiltonian or dominant source of error. As our experiments showed, methods like QDrift are advantageous for long-range Hamiltonians, whereas Hamiltonians highly susceptible to Trotter error benefit from randomly corrected Trotterization. Because classical algorithms like QMC currently provide important insights into many quantum phenomena, these gains directly improve our ability to study complex quantum systems and benchmark early quantum computers.

\emph{Applications to other algorithms and fields.} 
RC-QMC extends beyond the applications and methods presented here. As introduced in Sec.~\ref{sec:RC_QMC}, the framework is general and fully compatible with other QMC algorithms. For example, 
auxiliary-field QMC~\cite{zhang201315, Lee_2022} replaces inter-particle interactions with couplings to a random auxiliary field, and could perhaps admit an improvement by randomizing over that field. Similarly, machine-learning-based simulations using variational Monte Carlo~\cite{Carleo_2017} may benefit from randomizing over variational parameters, with initial steps in this direction recently taken in Ref.~\cite{martyn2025_HighAcc}. We also envision that randomized compiling techniques may prove beneficial to other classical simulation methods beyond QMC, such as Pauli propagation~\cite{Rall_2019, rudolph2025pauli} and density functional theory~\cite{parr1989density, koch2015chemist, teale2022dft}. Moreover, RC-QMC can incorporate randomized compiling methods beyond random Trotter formulas, such as those for rotations~\cite{hastings2016turning, Campbell_2017} and polynomial transformations~\cite{martyn2024halving}.

Beyond new algorithms, RC-QMC holds utility in further physical settings. A promising direction is using randomly compiled path integral Monte Carlo to simulate lattice field theories, such as those describing quantum chromodynamics~\cite{rothe2012lattice}, nuclear effective field theories~\cite{LEE2009117}, superconducting circuits~\cite{lin2025latticefieldtheorysuperconducting}, and other field theories in high energy physics~\cite{Ba_Omar_2022_QTRAJ,DeGrand:2015zxa,Laiho:2016nlp}. 
It may also prove worthwhile to investigate the structural similarities between RC-QMC and random lattice field theories, where averaging over instantiations of random spatial lattices restores rotational symmetry in the effective action~\cite{Christ:1982zq,Christ:1982ck}. 
Given the substantial computational demands of all these simulations, the cost reduction afforded by randomized compiling could provide a practical advantage in probing field theories.

\emph{Improvements.} 
Several practical improvements remain. Our current RC-QMC methods, and their application to the more advanced problems above, would likely benefit from additional code optimization, especially through parallel computing or GPU acceleration. Extending RC-QMC to new settings will also require the development of new randomized compiling schemes: while the present work leveraged existing constructions for real- and imaginary-time evolution, additional techniques are needed to approximate other transformations in QMC algorithms. These constructions can generally be derived analytically by following the framework outlined in Sec.~\ref{sec:RC_QMC} or satisfying the conditions established in Ref.~\cite{Campbell_2017}. Alternatively, Refs.~\cite{Cho_2024, martyn2024halving} propose randomized compiling constructions that are sufficiently generalizable to other transformations.

\emph{Outlook.}  
It is often remarked that quantum computing arose from the belief that simulations of Nature should be made quantum mechanical~\cite{feynman1982simulating}. While this is commonly taken to mean that simulating Nature requires a quantum computer, an alternative interpretation is that quantum mechanics should inform classical simulation methods. Indeed, the present work supports this perspective by leveraging randomized compiling, a concept that originated in quantum computing, to accelerate classical simulations of quantum systems.

At a deeper level, the unifying theme of this work is the understanding and harnessing of entropy (arising from randomness) to reduce computational error. We have focused on classical methodologies, in which the randomness is supplied by a classical source and the averaging is performed over many sequential, time-like runs. Yet this perspective carries natural implications for quantum information. The classical randomness that powers randomized compiling could instead be furnished by entanglement. Namely, through purification, a classical probability distribution becomes a coherent superposition over an ancillary register, from which the required randomness is recovered upon tracing it out. If the many sequential runs that we average over were replaced by many parallel, simultaneous runs, then this would invite an interpretation of our techniques (and of prior methods like QDrift) as a kind of error correction, since both ultimately result in a reduction of error. The analogy is imperfect but instructive: here the errors are biases, which are systematic and fixed rather than stochastic, so what randomization accomplishes is not to detect and reverse errors but to prevent these biases from accumulating coherently. Making this connection precise, and determining whether genuine quantum resources can sharpen it, is a compelling direction for future work.

\vspace{6pt}
\emph{Code Availability}---Code to run the experiments conducted in this paper is available at \url{https://github.com/jmmartyn/RandomlyCompiledQMC}.

\vspace{6pt}
\emph{Acknowledgments}---JMM thanks Nathan Wiebe, Aram Harrow, Brenda Rubenstein, Angus Lowe, and Megan Masterson for helpful discussions.

Portions of this work were conducted in MIT's \textit{Center for Theoretical Physics, a Leinweber Institute} and partially supported by the U.S.~Department of Energy under Contract No.~DE-SC0012567.

This research was supported by PNNL's Quantum Algorithms and Architecture for Domain Science (QuAADS) Laboratory Directed Research and Development (LDRD) Initiative. This material is based upon work supported by the U.S.~Department of Energy, Office of Science, National Quantum Information Science Research Centers, Quantum Science Center (QSC). The Pacific Northwest National Laboratory is operated by Battelle for the U.S.~Department of Energy under Contract DE-AC05-76RL01830.

While at MIT, JL's contributions were supported by the U.S.~Department of Energy under Contract No.~DE-SC0011090, by the SciDAC5 award DE-SC0023116, and additionally by the National Science Foundation under Cooperative Agreement PHY-2019786 (The NSF AI Institute for Artificial Intelligence and Fundamental Interactions, http://iaifi.org/). 
While at Argonne National Laboratory, JL's contributions were supported by the U.S.~Department of Energy, Office of Science, Office of Nuclear Physics through Contract No.~DE-AC02-06CH11357, and by the U.S.~Department of Energy, Office of Science, Office of Nuclear Physics, Early Career Award through Contract No.~DE-SCL0000017. 

NCW is supported by the U.S.~Department of Energy, Office of Science under grant Contract Numbers DE-SC0011090 and DE-SC0021006, the Simons Foundation grant 994314 (Simons Collaboration on Confinement and QCD Strings), and the U.S.~Department of Energy, Office of Science, National Quantum Information Science Research Centers, Co-Design Center for Quantum Advantage under Contract No. DE-SC0012704. 

IC was supported in part by the U.S.~Department of Energy, Office of Science, National Quantum Information Science Research Centers, Co-Design Center for Quantum Advantage (C$^2$QA) under Contract No.~DE-SC0012704. 

Work at the University of Oxford was supported by the EPSRC through the QQQS programme grant (EP/Y01510X/1) and by the QCi3 hub (EP/Z53318X/1).

\bibliography{References}

\section*{Appendices}

\appendix

The appendices collect supplementary material referenced throughout the main text.
Appendix~\ref{app:notation} lists notation used throughout the paper.
Appendix~\ref{app:concepts} reviews relevant concepts from quantum information, including the spectral norm, trace norm, and trace distance, together with the bound on observable differences that underlies the error analysis.
Appendices~\ref{app:PIMC} and~\ref{app:QTM} give detailed reviews of path integral Monte Carlo and the quantum trajectories method, respectively, expanding on the summaries in Sec.~\ref{sec:QMC}.
Appendix~\ref{app:VMC} illustrates the general RC-QMC framework on variational Monte Carlo as an additional example.
Appendix~\ref{app:mixing_lemma} formalizes randomized compiling through a mathematical statement known as the \textit{mixing lemma}, and presents its generalization to non-unitary maps. 
Appendix~\ref{app:RandomCompilingTrotter} reviews the randomized Trotter methods on which randomly compiled path integral Monte Carlo and randomly compiled quantum trajectories method are built: QDrift for imaginary-time evolution and randomly corrected Trotterization for real-time evolution.
Appendix~\ref{app:Error_Bound_RC_PIMC} provides the error bound for the thermal state approximation used in randomly compiled path integral Monte Carlo.
Appendix~\ref{app:Stat_Error} analyzes the additional statistical error introduced by averaging over approximations in RC-QMC, showing it to be negligible.

\section{Notation}\label{app:notation}
For reference, here we list the notation used in the main text. 
\begin{itemize}
    \itemsep1pt
    \item[-] $\rho$: target quantum state (e.g., thermal state, time-evolved state)
    \item[-] $\hat{O}$: Observable
    \item[-] $\hat{H}$: Hamiltonian
    \item[-] $X$: Random variable, used in the general framework of QMC
    \item[-] $\mathcal{P}$: Probability distribution from which $X$ is sampled
    \item[-] $F(\hat{O}, X) $: Estimator function of observable $\hat{O}$ and random variable $X$ used in the general framework of QMC
    \item[-] $\widetilde{\mathcal{P}}_j$: An approximate probability distribution, indexed by $j$
    \item[-] $p_j$: Probability distribution over approximations indexed by $j$
    \item[-] $\beta$: Inverse temperature of thermal state
    \item[-] $\tau$: Imaginary time
    \item[-] $\mathbf{X}$: Array of classical paths, whose $i$th column encodes the path of particle $i$
    \item[-] $S(\mathbf{X})$: Action of classical paths $\mathbf{X}$
    \item[-] $t$: Real time
    \item[-] $r$: Number of Trotter steps
    \item[-] $n$: Trotter order
    \item[-] $\Delta \beta = \frac{\beta}{r}$, $\Delta t = \frac{t}{r}$: time steps in imaginary and real time, respectively
    \item[-] $L$: number of terms in the Hamiltonian decomposition
    \item[-] $\hat{H} = \sum_{j=1}^L \hat{H}_j = \sum_{j=1}^L h_j \hat{\mathcal{H}}_j$: decomposition of the Hamiltonian into $L$ constituent terms, where $\hat{\mathcal{H}}_j$ is the normalized term obeying $\| \hat{\mathcal{H}}_j \|=1$, and $\hat{H}_j = h_j \hat{\mathcal{H}}_j$ for $h_j > 0$.  
    \item[-] $\lambda = \sum_j h_j$: 1-norm of Hamiltonian coefficients
    \item[-] $\mathbf{j} = (j_1, j_2, ..., j_r)$: multi-index composed of integers $j_k$ corresponding to each randomly sampled Trotter formula\footnote{Note that in randomly compiled path integral Monte Carlo, the length of $\mathbf{j}$ is actually $r/2$ (see Eq.~\eqref{eq:Z_j_equation}), whereas in randomly compiled quantum trajectories, the length is $r$ (see Eq.~\eqref{eq:RC_QTM}).}
    \item[-] $N$: System size (number of particles, or lattice sites)
    \item[-] $D$: Hilbert space dimension
    \item[-] $c_l$: Jump operator (or Lindblad operator)
    \item[-] $\hat{H}_{\text{eff}} = \hat{H} - \frac{i}{2} \sum_l c_l^\dag c_l$: Effective Hamiltonian used to model open system dynamics
    \item[-] $t_{1:m} = \{ t_1, t_2, ..., t_m \}$: Set of $m$ jump times 
    \item[-] $|\phi(t|t_{1:m})\rangle$: trajectory state at time $t$, given jumps experienced at times $t_{1:m}$
\end{itemize}

\section{Concepts in Quantum Information}\label{app:concepts}

While quantum mechanics is often expressed in terms of pure states, a general quantum state takes the form of a mixed state:
\begin{equation}
    \rho = \sum_j p_j | \psi_j \rangle \langle \psi_j | , \qquad \tr(\rho) = 1 , 
\end{equation}
for a normalized probability distribution $p_j$, and corresponding pure states $|\psi_j\rangle$. Quantum states evolve under quantum channels. 
In this context, it will be convenient to denote a unitary operator by a Latin character, say $U$, and its associated channel by the corresponding calligraphic character: $\mathcal{U}(\rho) = U \rho U^\dag$. A general channel takes the form of a probabilistic mixture of unitary transformations:
\begin{equation}
    \Lambda(\rho) = \sum_{j=1}^M p_j W_j \rho W_j^\dag , 
\end{equation}
where $p_j$ is again a normalized probability distribution, and $\{ W_j \}_{j=1}^M$ are the associated unitaries. This channel may be realized on a quantum computer by sampling the index $j \sim p_j$ and evolving under the corresponding unitary $W_j$.

In analyzing quantum states and channels, we consider the spectral norm:
\begin{equation}
    \|U \| = \sup_{|\psi \rangle, \ \langle \psi | \psi \rangle = 1} \left\|U |\psi \rangle \right\| ,
\end{equation}
and the trace norm:
\begin{equation}
    \| U \|_1 = \tr \big( \sqrt{U^\dag U} \big).
\end{equation} 
We also consider a metric for distinguishing two states, say $\rho$ and $\sigma$, known as the \textit{trace distance}:
\begin{equation}\label{eq:trace_distance}
    d_{\text{tr}} (\rho,\sigma) = \frac{1}{2} \| \rho - \sigma \|_1 . 
\end{equation}
The trace distance places a bound on the difference in expectation values of an observable, $\hat{O}$, as
\begin{equation}\label{eq:obs_bound}
    \big| \text{tr}\big(\rho \hat{O}\big) - \text{tr}\big(\sigma \hat{O}\big) \big| \leq \| \hat{O} \| \| \rho - \sigma \|_1  = 2 \| \hat{O} \| d_{\text{tr}} (\rho,\sigma) .
\end{equation} 
This property makes the trace distance a natural metric for distinguishing quantum states: a small trace distance guarantees that expectation values of arbitrary observables differ by at most a proportional amount.

\section{Path Integral Monte Carlo}\label{app:PIMC}

This appendix provides additional details on path integral Monte Carlo beyond the summary in Sec.~\ref{subsec:PIMC}.
Appendix~\ref{app:PIMC_formalism} gives a self-contained derivation of the path integral Monte Carlo formalism, arriving at the discrete path integral and its action.

\subsection{Review of formalism}\label{app:PIMC_formalism}

In this appendix we provide additional details of the path integral Monte Carlo method. The derivation presented can be found in standard texts~\cite{feynman1965path}. 

As previously stated, the aim is to estimate a thermal expectation value $\frac{1}{Z} \text{tr}\big( e^{-\beta \hat{H}} \hat{O} \big)$ for an operator $\hat{O}$. This is conventionally accomplished by rewriting $e^{-\beta \hat{H}}=(e^{-\Delta \beta \hat{H}})^r$ for time step size $\Delta\beta = \beta/r$, inserting resolutions of the identity between each time step, and then utilizing a chosen Trotter formula. The form of the resolution of the identity depends on the context. For an $N$-particle system, a position-space resolution of $I = \int d\mathbf{x}~ |\mathbf{x}\rangle \langle \mathbf{x} |$ is often taken, where $|\mathbf{x}\rangle$ is the position-space eigenstate of all $N$ particles. For a lattice field theory, it is common to choose $I = \int \prod_x d \phi_x ~e^{-\sum_x |\phi_x|^2}|\phi\rangle \langle \phi |$, where $x$ denotes sites on the lattice and $|\phi\rangle$ are either bosonic or fermionic coherent states satisfying $\hat{\phi}_x|\phi \rangle = \phi_x|\phi \rangle$. For simplicity and concreteness, here we will consider a single-particle system where the Hamiltonian has the specific form $\hat{H} =  \hat{\mathbf{p}}^2/2 + V(\hat{\mathbf{x}}) = \hat{T} + \hat{V} $, and $\mathbf{x},\mathbf{p}$ denote the particle position and momentum in a $d$-dimensional space, but analogous results hold for $N$-particle systems and lattice field theories.

Considering an operator $\hat{O}$ that is diagonal in position space, one finds
\begin{align}
    &\frac{1}{Z} \text{tr}\big( e^{-\beta \hat{H}} \hat{O} \big)   = \nonumber \\
    & \frac{\int Dx ~\langle \mathbf{x}_0 | e^{-\Delta \beta \hat{H}} | \mathbf{x}_1 \rangle \dots \langle \mathbf{x}_{r-1} | e^{-\Delta \beta \hat{H} } | \mathbf{x}_r \rangle \hat{O}(\mathbf{x}_0)}{\int Dx~\langle \mathbf{x}_0 | e^{-\Delta \beta \hat{H}} | \mathbf{x}_1 \rangle \dots \langle \mathbf{x}_{r} | e^{-\Delta \beta \hat{H} } | \mathbf{x}_0 \rangle }
\end{align}
where $Dx=\prod_{k=0}^{r-1} d \mathbf{x}_k$ is the path integral measure over discrete paths and we use the symbol $x=(\mathbf{x}_0,\dots,\mathbf{x}_{r-1})$ to denote a discrete spacetime trajectory\footnote{Operators involving momenta involve additional steps but introduce no essential changes to the formalism.}. The equation above is an exact rewriting; however, it is not yet useful because computing any of the matrix elements is as difficult as solving the whole problem. Progress is made by approximating $e^{-\Delta \beta \hat{H} }$ with a Trotter formula, say 
\begin{equation}
    e^{-\Delta \beta H}=e^{-\Delta \beta \hat{V}/2 } e^{-\Delta \beta \hat{T} } e^{-\Delta \beta \hat{V}/2 } + \mathcal{O}(\Delta \beta^3)~.
\end{equation}
By then inserting a resolution of the identity in momentum space between the exponentials, and subsequently integrating over the introduced momenta, one finds
\begin{align}
    &\langle \mathbf{x}_k | e^{-\Delta \beta \hat{V}/2 } e^{-\Delta \beta \hat{T} } e^{-\Delta \beta \hat{V}/2 } | \mathbf{x}_{k+1} \rangle = \\
    &e^{-\Delta \beta \Big[ \frac{1}{2}\frac{(\mathbf{x}_{k+1} - \mathbf{x}_{k})^2}{\Delta \beta^2} + \frac{V(\mathbf{x}_{k}) + V(\mathbf{x}_{k+1})}{2}\Big]}~.
\end{align}
Using this formula in the numerator and denominator of the desired thermal average, one finds:
\begin{equation}
    \frac{1}{Z} \text{tr}\big( e^{-\beta \hat{H}} \hat{O} \big) 
    = \frac{\int Dx ~e^{-S(x)}O(\mathbf{x}_0)}{\int Dx~e^{-S(x)} } + \mathcal{O}(r \Delta \beta^2)
\end{equation}
where
\begin{equation}
    S(x) = \sum_{k=0}^{r-1}\Delta \beta\Big[\frac{1}{2} \frac{(\mathbf{x}_{k+1} - \mathbf{x}_{k})^2}{\Delta \beta^2} + V(\mathbf{x}_{k})\Big]
\end{equation}
is the Euclidean action, and $\mathbf{x}_{r}:=\mathbf{x}_0$. The first term in the action is a discrete version of the particle's kinetic energy $\frac{1}{2} v^2$, while the second term represents the potential energy. The periodic boundary conditions and sign flip between the kinetic and potential energies (relative to the Minkowski action) result from considering a thermal physics problem.

Having written the desired thermal average as a path integral, a common workflow is to compute the necessary path integrals at several different values of $\Delta \beta$, then extrapolate the results to the $\Delta \beta = 0$ limit. As long as the statistical errors of each individual calculation are much smaller than the systematic errors, and provided the chosen values of $\Delta \beta$ are small enough, the continuum extrapolated result will be dominated by statistical error alone, which can be systematically reduced by running longer simulations. In this sense path integral Monte Carlo is an \textit{ab initio} method free uncontrolled systematic errors. 

To compute the required path integrals, one notes that, for a real action, 
\begin{equation}
    p(x) = \frac{e^{-S(x)}}{\int Dx~e^{-S(x)}}
\end{equation}
is a probability distribution. This probability distribution is sampled with Markov chain Monte Carlo methods, and the path integral is computed by simply averaging over the collected samples. There are many such methods, including Metropolis-Hastings and hybrid Monte Carlo methods~\cite{robert2016metropolishastingsalgorithm, DUANE1987216}, which guarantee exact convergence of the numerical result. 

We conclude by noting that not all actions are real. When this occurs, the path integral is said to have a \textit{sign problem}. In this case Monte Carlo methods can still be used, by sampling according to the real part of the action and absorbing  phases into the definition of observables; however, the encountered phases dramatically increase the number of samples required to reach a desired level of statistical error~\cite{Troyer_2005}.

\section{The Quantum Trajectories Method}\label{app:QTM}

This appendix provides a comprehensive treatment of the quantum trajectories method, expanding on the summary in Sec.~\ref{sec:QTM}.
Appendix~\ref{app:QTM_open_sys} reviews open quantum system dynamics and the Lindblad master equation.
Appendix~\ref{app:QTM_method} derives the quantum trajectories unraveling and the Monte Carlo algorithm that samples it.
Appendix~\ref{app:QTM_tensor_networks} describes its implementation on tensor networks, including TEBD and the sources of error in this realization.

\subsection{Open System Dynamics}\label{app:QTM_open_sys}

While the dynamics of a closed quantum system are governed by the Schr\"odinger equation, the dynamics of an open quantum system, such as a system interacting with an environment, is governed by the \emph{master equation}. In Gorini-Kossakowski-Sudarshan-Lindblad form, the master equation for a state $\rho$ is 
\begin{equation}
    \frac{d}{dt} \rho = -i[\hat{H},\rho] + \sum_l \gamma_l \big( {c}_l \rho {c}_l^\dag - \frac{1}{2} ({c}_l^\dag {c}_l \rho + \rho {c}_l^\dag {c}_l) \big) , 
\end{equation}
where ${c}_l$ are a set of \emph{jump operators} (or Lindblad operators) that describe the dissipative dynamics, and $\gamma_l \geq 0$ are the dissipation rates~\cite{Daley_2014}. In the limit $\gamma_l \rightarrow 0$, the usual time-dependent Schr\"odinger equation is recovered.

For simplicity, we absorb the dissipative rates into the jump operators as $c_l \mapsto \sqrt{\gamma_l} c_l$, as in Sec.~\ref{sec:QMC}. The master equation may then be conveniently re-expressed as
\begin{equation}\label{eq:master_eq_reexpression2}
\begin{aligned}
    &\dot{\rho} = -i (\hat{H}_{\text{eff}}\rho - \rho \hat{H}_{\text{eff}}^\dag) + \sum_l c_l \rho c_l^\dag, \\
    &\hat{H}_{\text{eff}} = \hat{H} - \tfrac{i}{2} \sum_l c_l^\dag c_l . 
\end{aligned}
\end{equation}
In this form, open system dynamics can be interpreted as a combination of evolution under the effective Hamiltonian $\hat{H}_{\text{eff}}$ and the action of the jump operators.

Satisfyingly, there exists an analytic solution to this form of the master equation~\cite{Dum_1992}. For the simplicity of its presentation, let us consider the scenario where there is only a single jump operator $c_1 =: c$. Here it will be convenient to define the channel corresponding to the action of the jump operator as
\begin{equation}
    \mathcal{C}(\rho) = c \rho c^\dag . 
\end{equation}
It will also be convenient to denote evolution under $\hat{H}_{\text{eff}}$ by $U^{[t,t']} = e^{-i\hat{H}_{\text{eff}}(t - t')}$, and define the corresponding channel as 
\begin{equation}
    \mathcal{U}^{[t,t']}( \rho ) = e^{-i \hat{H}_{\text{eff}} (t-t') } \rho e^{i\hat{H}_{\text{eff}} (t-t')} . 
\end{equation}
Then, for an initial state $\rho(0)$, the solution to the master equation can be written as
\begin{equation}\label{eq:master_eq_solution}
    \rho(t) = \sum_{m=0}^\infty \rho_m(t),
\end{equation}
where
\begin{equation}\label{eq:master_eq_solution_rho_n}
\begin{aligned}
    &\rho_m(t) = \int_0^t dt_m \int_0^{t_m} dt_{m-1} ... \int_0^{t_2} dt_1  \\ 
    &\qquad \qquad \qquad  \times \mathcal{U}^{[t,t_m]} \mathcal{C} \mathcal{U}^{[t_m, t_{m-1}]} ... \mathcal{C} \mathcal{U}^{[t_1, 0]} (\rho(0))
\end{aligned}
\end{equation}
is the contribution to the density matrix that experiences $m$ jumps (i.e., applications of $\mathcal{C}$) in the time interval $[0,t]$. Specifically, the state $\mathcal{U}^{[t,t_m]} \mathcal{C} \mathcal{U}^{[t_m, t_{m-1}]} ... \mathcal{C} \mathcal{U}^{[t_1, 0]} (\rho(0))$ experiences jumps at times $t_i \in [0, t]$ for $i=1,...,m$, and undergoes evolution under the effective Hamiltonian in between these jump times. The state $\rho_m(t)$ is an integral over all such jump times, and thus accounts for all possible combinations of $m$ jumps. For notational simplicity, let us denote the set of jump times as $t_{1:m} = \{ t_1, ..., t_m\}$.

When the initial state is a pure state, say $\rho(0) = |\psi \rangle \langle \psi|$, the time-evolved state may be conveniently expressed as 
\begin{equation}\label{eq:rho_reexpression_pureinput}
\begin{aligned}
    &\rho(t) = \sum_{m=0}^\infty \int dt_m ... dt_1 \ P_{ [0,t) }(t_1, ..., t_m) \\ 
    & \hspace{110pt} \times | \varphi(t|t_{1:m}) \rangle \langle \varphi (t|t_{1:m}) | ,
\end{aligned}
\end{equation}
where 
\begin{equation}\label{eq:traj_pure_states}
\begin{aligned}
    &| \varphi(t | t_{1:m}) \rangle =  \frac{| \phi(t | t_{1:m}) \rangle}{ \big\| | \phi(t | t_{1:m}) \rangle \big\| } , \text{ and } \\
    & | \phi(t | t_{1:m}) \rangle = U^{[t,t_m]} c U^{[t_m,t_{m-1}]} c \ ... \  c U^{[t_1,0]} |\psi \rangle , 
\end{aligned}
\end{equation}
are the normalized and unnormalized states at time $t$ that experience $m$ jumps at times $t_{1:m}$ and otherwise evolve under $\hat{H}_{\text{eff}}$, and 
\begin{equation}\label{eq:unnorm_state_distribution}
    P_{[0,t)}(t_{1:m}) = \langle \phi(t|t_{1:m})  | \phi(t| t_{1:m}) \rangle 
\end{equation}
is the probability density of $m$ jumps occurring at times $t_{1:m}$. As a result, we may express the expectation value of an observable $\hat{O}$ as an average over both the number of jumps $m$, and their corresponding times $t_{1:m}$ drawn from $P_{[0,t)}$:
\begin{equation}\label{eq:traj_exp_val}
\begin{aligned}
    \langle \hat{O}(t) \rangle &= \tr(\hat{O} \rho(t)) \\
    &= \mathop{\mathbb{E}}_{m, \ t_{1:m} \ \sim \ P_{[0,t)} } \Big[ \langle \varphi(t|t_{1:m}) | \hat{O} | \varphi(t|t_{1:m}) \rangle \Big]. 
\end{aligned}
\end{equation}
In other words, an expectation value of the time-evolved mixed state $\rho(t)$ is an average of expectation values over pure states $| \varphi(t|t_{1:m}) \rangle $. In the main text Eq.~\eqref{eq:master_eq_obs}, we denoted this expectation value as $|\phi\rangle \sim \langle \phi | \phi \rangle$, meaning that the states $|\phi (t|t_{1:m}) \rangle$ are sampled proportional to their norm $ \langle \phi(t|t_{1:m})  | \phi(t|t_{1:m}) \rangle $. This was done for notational simplicity, but conveys the same idea as Eqs.~\eqref{eq:unnorm_state_distribution} and~\eqref{eq:traj_exp_val} that the jumps and their times are drawn from the norm of the time-evolved state.

\subsection{The Quantum Trajectories Method}\label{app:QTM_method}
This solution to the master equation furnishes the following Monte Carlo algorithm for simulating open system dynamics. While the exact solution is explicitly provided by Eqs.~\eqref{eq:master_eq_solution} and \eqref{eq:master_eq_solution_rho_n}, evaluating the time-evolved state $\rho(t)$ is notably nontrivial. Although there exist methods for classically approximating a density matrix (e.g., tensor networks), computing the integral of Eq.~\eqref{eq:master_eq_solution_rho_n} is in general challenging.

To circumvent this difficulty and reduce computational costs, the quantum trajectories method~\cite{Dum_1992, Dalibard_1992, Carmichael1993, Molmer_1993, Daley_2014} proposes to simulate $\rho(t)$ by evolving an ensemble of pure states $|\phi (t| t_{1:m}) \rangle$, whose average equates to $\rho(t)$. As shown in Eq.~\eqref{eq:master_eq_reexpression2}, these states evolve under the effective Hamiltonian $\hat{H}_{\text{eff}}$ with the jump operators $c$ applied randomly throughout the evolution. The advantage of the quantum trajectories method is that one only needs to simulate pure quantum states rather than a density matrix, which drastically reduces memory consumption.

To implement the quantum trajectories method, one creates an ensemble of $N_{\text{traj}}$ states, or equivalently trajectories, each beginning in the initial state $|\psi\rangle$. Each trajectory is then evolved according to the interpretation of $|\phi (t|t_{1:m})\rangle $: they experience stochastic jumps at times sampled from $P_{[0,t)}$, and otherwise evolve under the effective Hamiltonian. Denoting the normalized final states as $|\varphi^{(s)} (t) \rangle$ for $s=1,...,N_{\text{traj}}$, their average approximates the time-evolved density matrix:
\begin{equation}\label{eq:empirical_QTM}
    \rho(t) \approx \frac{1}{N_{\text{traj}}} \sum_{s=1}^{N_{\text{traj}}} | \varphi^{(s)}(t) \rangle \langle \varphi^{(s)} (t) |,
\end{equation}
with statistical error that decreases as $\mathcal{O}(1/\sqrt{N_{\text{traj}}})$. Accordingly, one can estimate an expectation value as an empirical average over the trajectories: $\text{tr} \big(\hat{O} \rho(t) \big) \approx \frac{1}{N_{\text{traj}}} \sum_{s=1}^{N_{\text{traj}}} \langle \varphi^{(s)}(t) | \hat{O} | \varphi^{(s)} (t) \rangle$.

The nontrivial component of the quantum trajectories method is sampling jumps from the distribution $P_{[0,t)}(t_{1:m})$, which necessitates sampling both the number of jumps $m$ and their times $t_{1:m}$. This can be achieved auto-regressively by sampling a successive jump time, given the times of previous jumps. That is, given a trajectory that has already experienced jumps at times $ t_{1:i} $, the probability of a subsequent jump occurring at $t_{i+1} \in [t_i,t)$ is quantified by its cumulative distribution function, which is given by an integral of the conditional probability~\cite{Dum_1992}:
\begin{equation}\label{eq:p_jump}
\begin{aligned}
    \text{Prob}(t_{i+1}) &= \int_{t_i}^{t_{i+1}} dt' \frac{P_{[0,t)}(\overbrace{t_{1:i},t'}^{t_1,...,t_i,t'})}{P_{[0,t_i)}(t_{1:i})} \\
    &= 1- \frac{ \big\| | \phi(t_{i+1}|t_{1:i}) \rangle \big\|^2 }{ \big\| | \phi(t_i^+|t_{1:i}) \rangle \big\|^2}, \quad \ t_{i+1}\geq t_i ,  
\end{aligned}
\end{equation}
where $t_i^+$ denotes the time just after the $i$th jump operator is applied. Intuitively, this probability increases as the norm of the wave function decays due to dissipation experienced by evolution under $\hat{H}_{\text{eff}}$. In practice, one samples a successive jump time $t_{i+1}$ from $\text{Prob}(t_{i+1})$ using inverse transform sampling~\cite{Devroye_2006}: draw a uniformly distributed variable $y\sim \mathcal{U}(0,1)$, and then numerically solve $y = \text{Prob}(t_{i+1})$ for $t_{i+1}$. One then evolves to time $t_{i+1}$ and applies the corresponding jump operator. Afterwards, the trajectory state is re-normalized at time $t_{i+1}^+$, so as to update the cumulative distribution and determine the next jump time. This process is continued until reaching the final simulation time $t$.

Finally, although this presentation of the quantum trajectories method has specialized to a single jump operator and an initial state that is pure, this analysis naturally extends to multiple jump operators and initial states that are mixed. Given a set of multiple jump operators $\{c_l \}$, the algorithm is modified such that at a jump time $t_i$, a single jump operator is chosen at random to be applied. Specifically, a jump operator $c_l$ is selected with probability proportional to the norm of the state after the action of the jump: 
\begin{equation}
    \text{Prob}(c_l) \propto \langle \phi(t_i|t_{1:i-1}) | c_l^\dag c_l | \phi (t_i|t_{1:i-1}) \rangle .  
\end{equation}
Similarly, to simulate an initial mixed state $\rho(0) = \sum_j p_j |\psi_j \rangle \langle \psi_j | $, one instead takes the input state of each trajectory to be $|\psi_j \rangle$ with probability $p_j$.

This analysis of the quantum trajectories method provides an intuitive physical interpretation of open system dynamics and an efficient algorithm to simulate these dynamics. Due to these favorable properties, it has been cemented as a vital tool for the study of open systems. Noteworthy applications include its use in simulating laser cooling~\cite{Marte_1993, Castin_1995_Monte}, noisy random unitary circuits~\cite{Chen_2024_Optimized, Cheng_2023_Efficient}, and the quark-gluon plasma in quantum chromodynamics~\cite{Ba_Omar_2022_QTRAJ}.

\subsection{Implementation on Tensor Networks}\label{app:QTM_tensor_networks}
The quantum trajectories method requires a method for classically simulating pure states. A natural tool for this job is tensor networks, which provide an efficient classical representation of quantum states on a lattice~\cite{Klumper_1993, Vidal_2003, PerezGarcia_2007, Verstraete_2004_valence, verstraete2004renormalization, Vidal_2007_Entanglement, Vidal_2008_Class, Evenbly_2009_Alg}. Here we will focus on 1D lattices, where a tensor network known as the matrix product state (MPS)~\cite{Klumper_1993, Vidal_2003} is the standard tool. MPSs represent a quantum state in an exponentially large Hilbert space by factorizing it into a sequence of tensors associated with individual lattice sites. Adjacent tensors are contracted with each other across virtual bonds, whose dimension is restricted to be at most an integer $\chi$, known as the bond dimension. Equivalently, the Schmidt rank across any bipartition of the lattice is restricted to be at most $\chi$. Owing to this compressed representation, MPSs are widely used to model ground states of local Hamiltonians, simulate time evolution, and estimate thermal states.

In using MPSs to implement the quantum trajectories method, each trajectory state is modeled as an independent MPS. Then, evolution under $\hat{H}_{\text{eff}}$ can be straightforwardly simulated using a time evolution algorithm. A simple such algorithm is time-evolving block decimation (TEBD)~\cite{Vidal_2003, Vidal_2004, White_2004, Daley_2004}, which approximates the time evolution operation $e^{-i\hat{H}_{\text{eff}}t}$ with a Trotter formula. Formally, each Trotter step is applied to the MPS, after which the bond dimensions are truncated back down to $\chi$ by Schmidt decomposition. In this formulation, we can efficiently determine jump times according to Eq.~\eqref{eq:p_jump} by evaluating the norm of the MPS. In addition, the jump operators are often local and can be directly applied to the MPS, but still require truncating the bond dimension back down to $\chi$. 

There are three sources of error in this realization of the quantum trajectories method. First, over a time interval $t$, an order-$n$ Trotter formula implemented with $r$ time steps of size $\Delta t=t/r$ suffers error $\mathcal{O}(r (L \Delta t)^{n+1})$. Second, TEBD and application of the jump operators require truncation of the bond dimension. While the precise scaling of the resulting truncation error depends on the underlying state, this error decreases with increasing bond dimension, generally decaying exponentially with $\chi$ for weakly entangled states~\cite{Vidal_2004, Bridgeman_2017, Orus_2014}. Consequently, approaches to reduce the cost of the quantum trajectories method have studied ways to unravel the time-evolved state into weakly entangled trajectories, which can be more efficiently represented at low bond dimension~\cite{Chen_2024_Optimized, sander2026computational}. Finally, implementing the quantum trajectories method with a finite number of trajectories $N_{\text{traj}}$ incurs a statistical error $\mathcal{O}(1/\sqrt{N_{\text{traj}}})$.

Separately, while here we use Trotterization to perform time evolution, there exist other time evolution methods for tensor networks, like the time-dependent variational principle~\cite{Paeckel_2019}, which can be advantageous in certain settings. Nonetheless, Trotterization is simple and widely applicable to various systems, so we use it here rather than other time evolution algorithms.

\section{Variational Monte Carlo}\label{app:VMC}
In Sec.~\ref{sec:QMC_General}, we showed how the general framework for QMC encompasses path integral Monte Carlo and the quantum trajectories method. To further emphasize the generality of this framework, here we also show how it encompasses variational Monte Carlo.

Variational Monte Carlo is a variational method for estimating ground states of quantum systems. In variational Monte Carlo, a quantum state $|\psi_\theta \rangle$ is represented classically by a variational ansatz and parameterized by $\theta$. For simplicity of exposition, we assume $|\psi_\theta\rangle$ is normalized (if not, each expectation value below is implicitly divided by $\langle \psi_\theta | \psi_\theta \rangle$). The ground state is then approximated by minimizing the energy $E_\theta = \langle \psi_\theta |H|\psi_\theta \rangle$, typically using gradient-based optimization methods, like gradient descent~\cite{Stokes_2020} or stochastic reconfiguration~\cite{Sorella_1998}.

To compute an expectation value of an operator $\hat{O}$, Markov chain Monte Carlo methods are employed. An expectation value may be expressed in terms of the wave function's degrees of freedom $x$ (e.g., particle positions or spin configurations) as
\begin{equation}\label{eq:VMC_eq}
\begin{aligned}
    \langle \psi_\theta | \hat{O} | \psi_\theta \rangle  &=
    \int dx \ \psi^*_{\theta}(x) \hat{O}(x) \psi_\theta (x) \\
    &= 
    \mathop{\mathbb{E}}_{x \sim |\psi_{\theta}(x)|^2} \big[ O_{\text{loc}}(x) \big] ,
\end{aligned}
\end{equation}
where $O_{\text{loc}}(x) = \langle x| \hat{O}|\psi_\theta\rangle / \langle x | \psi_\theta \rangle$ is a scalar function known as the \textit{local observable}. Variational Monte Carlo uses this expression to estimate expectation values as averages of a local observable, sampled according to the square of the variational wave function $|\psi_\theta (x)|^2$.

Systematic error arises due to the variational wave function deviating from the true ground state. In the ideal case where the variational state is exactly the ground state, $|\psi_\theta \rangle = | E_0\rangle$, a ground state observable can be evaluated as
\begin{equation}
\begin{aligned}
    \langle E_0 | \hat{O} | E_0 \rangle  =
    \mathop{\mathbb{E}}_{x \sim |E_0(x)|^2} \big[ O_{\text{loc}}(x) \big] . 
\end{aligned}
\end{equation}
In comparison with the general QMC framework of Eq.~\eqref{eq:QMC}, we see that the random variable is the configuration $X=x$, sampled from the distribution $\mathcal{P} = |E_0(x)|^2$, and the estimator function is the local observable $F(\hat{O}, X) = O_{\text{loc}}(x) = \langle x| \hat{O}| \psi_\theta \rangle / \langle x | \psi_\theta \rangle$.

In practice, however, the ground state can only be approximated by the variational wave function $|\psi_\theta\rangle \approx |E_0\rangle$, which is achieved by minimizing the energy. Therefore, ground state observables are estimated using Eq.~\eqref{eq:VMC_eq} after minimizing the energy: 
\begin{equation}
\begin{aligned}
    \langle E_0 | \hat{O} | E_0 \rangle \approx
    \mathop{\mathbb{E}}_{x \sim |\psi_{\theta}(x)|^2} \big[ O_{\text{loc}}(x) \big] . 
\end{aligned}
\end{equation}
In comparison with the general framework of QMC, this suffers systematic error due to sampling the approximate distribution $\widetilde{\mathcal{P}} = |\psi_{\theta}(x)|^2$.

Given this error, recent work has shown that randomized compiling can improve the accuracy of ground state observables in variational Monte Carlo~\cite{martyn2025_HighAcc}. In short, the approach follows the RC-QMC framework by averaging observables over an ensemble of variational wave functions that each approximates the ground state (obtained by time-evolving $|\psi_\theta\rangle$ for a randomly sampled duration). This ensemble averaging yields estimates of ground state observables that are more accurate than any individual variational wave function.

\section{The Mixing Lemma and Its Generalization}\label{app:mixing_lemma}
In Sec.~\ref{sec:randomized_compiling}, we introduced randomized compiling for approximating unitary transformations. Here we present it as a formal statement, known as the Hastings-Campbell \textit{mixing lemma}~\cite{hastings2016turning, Campbell_2017}. We also present a generalization of the mixing lemma to non-unitary maps, reproduced from Ref.~\cite{martyn2024halving}. 
This generalization is important as it underlies the randomly-compiled quantum trajectories method developed in Sec.~\ref{sec:RC_QTM_subsec}.

\subsection{The Mixing Lemma}

First we present the mixing lemma:
\begin{lemma}[Hastings-Campbell Mixing Lemma~\cite{Campbell_2017, hastings2016turning}]\label{lemma:Mixing}
    Let $U$ be a target unitary operator, and $\mathcal{U}(\rho) = U \rho U^\dag $ the corresponding channel. Suppose there exist $M$ unitaries $\{ W_j \}_{j=1}^M$ and an associated probability distribution $p_j$ that approximate $V$ as
    \begin{equation}
    \begin{aligned}
        & \| W_j - U \| \leq a \text{ for all } j,\\
        & \Big\| \sum_{j=1}^M p_j W_j - U \Big\| \leq b ,
    \end{aligned}
    \end{equation}
    for some $a,b > 0$. Then, the corresponding channel $\Lambda(\rho) = \sum_{j=1}^M p_j W_j \rho W_j^\dag$ approximates $\mathcal{U}$ as
    \begin{equation}
        \| \Lambda (\rho) - \mathcal{U} (\rho) \|_1 \leq a^2 + 2b  
    \end{equation}
    for any input state $\rho$. 
\end{lemma}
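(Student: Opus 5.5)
The plan is to write each approximating unitary as the target plus an error, $W_j = U + \Delta_j$, with $\|\Delta_j\| \le a$ and $\|\sum_j p_j \Delta_j\| \le b$. Expanding each term of the channel gives
\begin{equation}
    W_j \rho W_j^\dagger = U\rho U^\dagger + \Delta_j \rho U^\dagger + U \rho \Delta_j^\dagger + \Delta_j \rho \Delta_j^\dagger .
\end{equation}
Averaging with weights $p_j$ then yields
\begin{equation}
    \Lambda(\rho) - \mathcal{U}(\rho) = \bar\Delta \rho U^\dagger + U\rho \bar\Delta^\dagger + \sum_j p_j \Delta_j \rho \Delta_j^\dagger ,
\end{equation}
where $\bar\Delta := \sum_j p_j \Delta_j = \sum_j p_j W_j - U$. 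The point of this decomposition is that the first-order terms involve only the averaged error $\bar\Delta$, while the individual errors $\Delta_j$ enter only quadratically.

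Next, bound each piece in trace norm using Hölder's inequality $\|AXB\|_1 \le \|A\|\,\|X\|_1\,\|B\|$, with $\|\rho\|_1 = 1$ and $\|U\|=1$.
\begin{itemize}
    \item The two linear terms contribute at most $\|\bar\Delta\| + \|\bar\Delta^\dagger\| \le 2b$.
    \item For the quadratic term, the triangle inequality gives $\sum_j p_j \|\Delta_j \rho \Delta_j^\dagger\|_1 \le \sum_j p_j \|\Delta_j\|^2 \le a^2$.
\end{itemize}
Summing these gives $\|\Lambda(\rho) - \mathcal{U}(\rho)\|_1 \le a^2 + 2b$, which is the claimed bound.

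No step is really an obstacle. The only point needing care is that a general input $\rho$ must be handled, not just pure states. Hölder's inequality handles this directly, since $\rho$ is positive with unit trace. Alternatively, one can prove the bound for pure states and extend it by convexity of the trace norm.

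Unitarity of $W_j$ is never used; only $\|U\|\le 1$ is needed. This observation will be the basis of the later generalization to non-unitary maps, and it also absorbs the typo ``$V$'' in the statement, which should read $U$.
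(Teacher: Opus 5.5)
Your proof is correct. It is the standard argument from the references the paper cites; the paper gives no proof of its own and defers to Campbell and Hastings. Writing $W_j = U + \Delta_j$, the first-order terms involve only the averaged error $\sum_j p_j \Delta_j$ and contribute at most $2b$, while the quadratic term contributes at most $\sum_j p_j \|\Delta_j\|^2 \le a^2$. Your remark that only $\|U\| \le 1$ is used is exactly what produces the paper's generalized (non-unitary) lemma: there the linear terms are bounded by $2b\|A\|$ instead of $2b$.
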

See Refs.~\cite{Campbell_2017, hastings2016turning} for the proofs of this lemma.

\subsection{The Generalized Mixing Lemma}

Next, we consider generalizing the mixing lemma to arbitrary operators, including non-unitary operators, following Ref.~\cite{martyn2024halving}. Suppose we we wish to implement a channel $\mathcal{A}(\rho) = A \rho A^\dag$, where $A$ is not necessarily unitary, yet we only have access to operators $B_j$ that approximate $A$. Then the following statement is true:
\begin{lemma}[Generalized Mixing Lemma]\label{thm:NonUnitaryMixing_app}
    Let $A$ be a target operator, possibly non-unitary, and $\mathcal{A}(\rho) = A \rho A^\dag $ the corresponding map. Suppose there exist $M$ operators $B_j$ and a probability distribution $p_j$ that approximate $A$ as
    \begin{equation}
    \begin{aligned}
        & \| B_j - A\| \leq a \text{ for all } j,\\
        & \Big\| \sum_{j=1}^M p_j B_j - A \Big\| \leq b ,
    \end{aligned}
    \end{equation}
    for some $a,b > 0$. Then, the corresponding map $\Gamma(\rho) = \sum_{j=1}^M p_j B_j \rho B_j^\dag$ approximates $\mathcal{A}$ as
    \begin{equation}\label{eq:nonunitary_mixing_lemma_conclusion}
        \| \Gamma(\rho) - \mathcal{A}(\rho) \|_1 \leq a^2 + 2b \|A \|. 
    \end{equation}
    for any input state $\rho$. 
\end{lemma}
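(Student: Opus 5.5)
Write each approximating operator as $B_j = A + E_j$, where $E_j := B_j - A$ satisfies $\|E_j\| \le a$. Set $\bar E := \sum_j p_j E_j = \sum_j p_j B_j - A$, which satisfies $\|\bar E\| \le b$. Because the $p_j$ sum to one, expanding the mixture gives
\begin{equation*}
    \Gamma(\rho) - \mathcal{A}(\rho) = \sum_j p_j (A+E_j)\rho(A+E_j)^\dag - A\rho A^\dag = \bar E \rho A^\dag + A \rho \bar E^\dag + \sum_j p_j E_j \rho E_j^\dag .
\end{equation*}
The plan is to bound the trace norm of each of the three terms on the right separately and combine them with the triangle inequality.

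For the two cross terms we use the Hölder-type inequality $\|XYZ\|_1 \le \|X\|\,\|Y\|_1\,\|Z\|$ for operators $X,Z$ and trace-class $Y$. Since $\rho$ is a state, $\|\rho\|_1 = 1$, and so
\[
\|\bar E\rho A^\dag\|_1 \le b\|A\| \quad\text{and}\quad \|A\rho\bar E^\dag\|_1 \le b\|A\| .
\]
Together these contribute $2b\|A\|$. This is the only point where non-unitarity changes the argument: in Lemma~\ref{lemma:Mixing} the factor $\|A\|$ would equal $1$.

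For the quadratic term, each summand $E_j\rho E_j^\dag$ is positive semidefinite, so its trace norm equals its trace. Hence
\[
\|E_j\rho E_j^\dag\|_1 = \tr(\rho E_j^\dag E_j) \le \|E_j^\dag E_j\| = \|E_j\|^2 \le a^2 .
\]
Averaging over $j$ with weights $p_j$ and applying the triangle inequality bounds this term by $a^2$. Adding the pieces yields Eq.~\eqref{eq:nonunitary_mixing_lemma_conclusion}.

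There is no real obstacle here. The only step needing care is justifying the norm inequalities: the Hölder bound for products (duality of trace and operator norms), and the fact that the trace norm of a positive operator equals its trace. If one wants $\rho$ to be a general input with $\|\rho\|_1 \neq 1$, the same argument goes through with the whole bound multiplied by $\|\rho\|_1$, using positivity for the quadratic term when $\rho \succeq 0$.
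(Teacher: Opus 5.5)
Your proof is correct and takes essentially the standard route. The paper gives no proof of its own and defers to Ref.~\cite{martyn2024halving}, where, as in Campbell's proof of Lemma~\ref{lemma:Mixing}, one expands $B_j = A + E_j$, bounds the two linear cross terms by $b\|A\|$ each via H\"older's inequality, and bounds the averaged quadratic term $\sum_j p_j E_j\rho E_j^\dag$ by $a^2$. One minor remark: positivity of $\rho$ is not needed for the quadratic term, since H\"older's inequality already gives $\|E_j\rho E_j^\dag\|_1 \le \|E_j\|^2\|\rho\|_1$.
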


See Ref.~\cite{martyn2024halving} for the proof. Evidently, this is nearly equivalent to the mixing lemma, up to a rescaling by the spectral norm $\|A \|$, which captures the deviation of $A$ from non-unitarity.

In applying this to the randomly-compiled quantum trajectories method, our target operator is evolution under the effective Hamiltonian $\hat{H}_{\text{eff}} = \hat{H} - \frac{i}{2} \sum_l c_l^\dag c_l $: $A = e^{-i\hat{H}_{\text{eff}} \Delta t}$~\cite{Daley_2014}. Because the effective Hamiltonian generates non-unitary evolution and decreases the norm of the state it acts on, we have $\| A \| \leq 1$ (or more precisely $\| A \| \leq 1 - \mathcal{O}(\Delta t)$ where the prefactor of $\Delta t$ depends on the magnitude of $\sum_l c_l^\dag c_l$). Consequently, the conclusion of the generalized mixing lemma yields 
\begin{equation}
    \| \Gamma(\rho) - \mathcal{A} (\rho) \|_1 \leq a^2 + 2b \|A\| \leq a^2 + 2b , 
\end{equation}
just as in the unitary setting. Therefore, we can apply randomized compiling to the quantum trajectories method, analogous to how it is applied to problems involving unitary evolution. This justifies the development of the randomly-compiled quantum trajectories method in Sec.~\ref{sec:RC_QTM_subsec}.

\section{Randomized Compiling Methods for Trotterization}\label{app:RandomCompilingTrotter}

This appendix reviews the randomized Trotter methods used in the randomly compiled path integral Monte Carlo and the randomly compiled quantum trajectories method.
Appendix~\ref{app:RC_trotter_standard} reviews standard (deterministic) Trotterization and establishes the error scalings that the random methods improve upon.
Appendix~\ref{app:RC_trotter_qdrift} reviews QDrift, which removes explicit $L$-dependence from the cost by importance sampling individual Hamiltonian terms.
Appendix~\ref{app:RC_trotter_corrected} reviews randomly corrected Trotterization, which doubles the effective Trotter order by averaging over correction operations; Secs.~\ref{app:RandomCorrectionExample} and~\ref{app:Efficient_randomly_corrected_Trotter} work out the explicit 2nd-order construction and its more efficient single-call variant.

\subsection{Standard Trotterization}\label{app:RC_trotter_standard}

Trotterization is a method for approximating the exponential of a Hamiltonian $\hat{H}$ as a product of many local operations~\cite{Childs_2021}, and works in real time (i.e., $e^{-i\hat{H}t}$) and imaginary time (i.e., $e^{-\beta \hat{H}}$). In this section, we will focus on real-time evolution for simplicity, but these methods analogously extend to imaginary time.

Trotterization considers a Hamiltonian composed of $L$ local constituent terms: $\hat{H} = \sum_{j=1}^L \hat{H}_j$, where the constituent terms $\hat{H}_j$ may not commute, $[\hat{H}_j, \hat{H}_{j'}] \neq 0$. The idea of Trotterization is to divide the simulation time $t$ into $r$ steps of size $\Delta t := t/r$, and approximate each time step $e^{-i\hat{H} \Delta t }$ as a product of evolutions under the individual local terms $\hat{H}_j$. The first-order Trotter formula takes the form
\begin{equation}\label{eq:first_order_Trotter}
    e^{-i\hat{H}\Delta t} \approx \prod_{j=1}^L e^{-i \hat{H}_j \Delta t} =: S_1(\Delta t) ,
\end{equation}
and suffers error $\| e^{-i\hat{H}\Delta t}  - S_1(\Delta t) \| = \mathcal{O}(L^2 \Delta t^2)$. At the next level, second-order Trotterization is given by
\begin{equation}
    \begin{aligned}
        & S_2(\Delta t) = \prod_{j=1}^L e^{\frac{-i\Delta t}{2} \hat{H}_j } \prod_{j=L}^1 e^{\frac{-i\Delta t}{2} \hat{H}_j } .
    \end{aligned}
\end{equation}
At order-$n$ for even $n\geq 2$, the symmetric Suzuki-Trotter formulas are defined recursively as 
\begin{equation}\label{eq:order_2k_Trotter}
    \begin{aligned}
        & S_{n}(\Delta t) = \\
        & \qquad \quad S_{n-2}(\alpha_n \Delta t)^2 S_{n-2}((1-4\alpha_n) \Delta t) S_{n-2}(\alpha_n \Delta t)^2 ,
    \end{aligned}
\end{equation}
where $\alpha_n = 1/(4 - 4^{1/(n-1)})$ is a constant. These formulas incur error $\|  e^{-i\hat{H}\Delta t} - S_{n}(\Delta t) \| = \mathcal{O}((L\Delta t)^{n+1})$.

When compounding a Trotter formula over $r$ time steps to approximate $e^{-i\hat{H}t}$, order-$n$ Trotterization incurs error $\mathcal{O}( r (L \Delta t)^{n+1} ) = \mathcal{O}( r (Lt/r)^{n+1} )$. Equivalently, achieving error $\epsilon$ requires a number of time steps $r = \mathcal{O}\big(Lt (tL/\epsilon)^{1/{n}} \big)$. Moreover, by solving the recursion relation of Eq.~\eqref{eq:order_2k_Trotter}, $S_{n}(\Delta t)$ can be shown to consist of $ \mathcal{O}(5^{n/2} L)$ exponential operations~\cite{Berry_2006}, such that in general the total number of local operations required to realize order-$n$ Trotterization is
\begin{equation}\label{eq:Trotter_time_complexity}
    \mathcal{O}\left( 5^{n/2} L^2 t (tL/\epsilon)^{1/n} \right).
\end{equation}

In the context of quantum algorithms, Eq.~\eqref{eq:Trotter_time_complexity} corresponds to the number of gates required to approximate time evolution. In classical simulations, Eq.~\eqref{eq:Trotter_time_complexity} represents the number of matrix multiplications or operations required to evolve a state. In both cases, increasing the Trotter order reduces the cost through the term $(tL/\epsilon)^{1/n}$, but also increases the pre-factor $5^{n/2}$. In practice, balancing these tradeoffs makes second- and fourth-order formulas the most useful.

\subsection{QDrift}\label{app:RC_trotter_qdrift}

Recent works have used randomized compiling to improve the performance of Trotterization~\cite{Campbell_2019, Childs_2019, Ouyang_2020, nakaji2023qswift, Hagan_2023, Cho_2024, peetz2025hamiltonian}. The central idea of these methods is to randomize over Trotter formula to suppress the error suffered over a single time step.

A notable such method is \textit{QDrift}, a randomized first-order Trotter formula introduced in Ref.~\cite{Campbell_2019}. QDrift decomposes the Hamiltonian as $\hat{H} = \sum_{j=1}^L h_j \hat{\mathcal{H}}_j$ for $h_j>0$ and $\| \hat{\mathcal{H}}_j \| = 1$, and defines the 1-norm of the coefficients as $\lambda = \sum_{j=1}^L h_j$. It then proposes to simulate evolution for a time step $\Delta t$ via the channel
\begin{equation}\label{eq:QDrift_real_time}
    \rho \rightarrow \sum_{j=1}^L p_j e^{-i \hat{\mathcal{H}}_j \lambda \Delta t} \rho e^{i \hat{\mathcal{H}}_j \lambda \Delta t} , 
\end{equation}
where $p_j = h_j/\lambda$ is a probability distribution. This corresponds to importance sampling a single term $\hat{\mathcal{H}}_j$ from the Hamiltonian, and then evolving under it.

Ref.~\cite{Campbell_2019} proves that this channel reproduces evolution under $e^{-i \hat{H} \Delta t}$ up to error $\mathcal{O}(\lambda^2 \Delta t^2)$. In addition, the QDrift channel can be realized by sampling $j \sim p_j$ and applying the corresponding evolution $e^{-i \hat{\mathcal{H}}_j \lambda \Delta t}$. The upshot is that QDrift requires only execution of a single operation $e^{-i \hat{\mathcal{H}}_j \lambda \Delta t}$ per time step, in contrast to first-order Trotterization (Eq.~\eqref{eq:first_order_Trotter}) which requires $L$ operations per time step.

Iterating this channel over $r$ time steps, the final state incurs error $\mathcal{O}(\lambda^2 t^2 /r)$. Therefore, achieving error $\epsilon$ requires $r=\mathcal{O}(\lambda^2 t^2/\epsilon)$ steps. Because each QDrift step requires only a single operation, this is also the computational cost (i.e., the number of gates in a quantum algorithm, or number of matrix multiplications in classical simulation). Importantly, this cost depends only on the 1-norm $\lambda$ rather than the number of Hamiltonian terms $L$, in contrast to standard Trotterization whose cost scales explicitly with $L$. As a result, QDrift offers an advantage for Hamiltonians with $\lambda \ll L$, such as systems with many small interactions. These include Hamiltonians with long-range interactions like the long-range Ising model and quantum chemical Hamiltonians~\cite{Campbell_2019}.

Lastly, we note that while Eq.~\eqref{eq:QDrift_real_time} defines QDrift for real-time evolution, Ref.~\cite{pocrnic2023composite} proves that QDrift also extends to imaginary time evolution by an appropriate modification of the channel:
\begin{equation}
    \Lambda_{\text{QDrift}}(\rho) = \sum_{j=1}^L p_j e^{-\lambda \Delta \beta \hat{\mathcal{H}}_j } \rho e^{- \lambda \Delta \beta \hat{\mathcal{H}}_j } . 
\end{equation}
An analogous error scaling is attained, namely a computational cost $\mathcal{O}(\lambda^2 \beta^2/\epsilon )$ independent of $L$. Hence, QDrift may also be applied to thermal state estimation, as we did in developing randomly compiled path integral Monte Carlo in Sec.~\ref{sec:Application_PIMC}.

\subsection{Randomly Corrected Trotterization}\label{app:RC_trotter_corrected}

Ref.~\cite{Cho_2024} introduced a randomized Trotter formula that doubles the order of Trotterization: it reduces error from $\mathcal{O}((L\Delta t)^{n+1})$ to $\mathcal{O}((L\Delta t)^{2n+2})$, while retaining the cost of order-$n$ Trotterization. We refer to this as \textit{randomly corrected Trotterization}. 

This method is achieved by noting that the error in an order-$n$ Trotter formula can be expanded as a power series in $\Delta t$, starting at order $ \Delta t ^{n+1}$. In this expansion, the coefficient of the $m$th term is a linear combination of order-$(m-1)$ nested commutators of the constituent terms of the Hamiltonian~\cite{Childs_2021}. For instance, if $\hat{H} = \hat{T} + \hat{V}$, then the dominant error in an order-1 formula is
\begin{equation}\label{eq:1st_order_Trotter_error}
    e^{-i \hat{H} \Delta t} - e^{-i\hat{T}\Delta t} e^{-i\hat{V}\Delta t} = \frac{1}{2}[\hat{T}, \hat{V}] \Delta t^2 + \mathcal{O}(\Delta t^3) . 
\end{equation}
Ref.~\cite{Cho_2024} proposes to cancel out these higher-order errors by randomly applying correction operations after each Trotter step. On average, this cancels each error term up through order-$(2n+1)$, thus suppressing the overall Trotter error to $\mathcal{O}(\Delta t^{2n+2} )$.

Specifically, for order-$n$ Trotterization and a time step size $\Delta t$, they propose the channel
\begin{equation}\label{eq:RC_Trotter}
    \Lambda_{\text{RandCorrTrotter}}(\rho) = \sum_{j} p_j W_j (\Delta t) \rho {W_j (\Delta t)}^\dag , 
\end{equation}
where the unitaries $W_j (\Delta t)$ are each a symmetric product of order-$n$ Trotter formulas and an order-$n$ \textit{correction operation} $C_j(\Delta t )$: 
\begin{equation}\label{eq:random_corrections_ops}
    W_j (\Delta t) = S_{n}(\tfrac{\Delta t}{2}) C_j(\Delta t ) S_{n} (\tfrac{\Delta t}{2}) . 
\end{equation}
The correction operations $C_j(\Delta t )$, as well as the corresponding probabilities $p_j$, are systematically constructed by analyzing the errors incurred by an order-$n$ Trotter formula. A detailed procedure for constructing these is presented in Ref.~\cite{Cho_2024}. There, it is shown that the correction operations correspond to evolving under higher-order nested commutators of the constituent terms of the Hamiltonian. Similarly, the probabilities $p_j$ are taken to cancel out the appearance of these commutators in the error incurred by the Trotter formula. To make this procedure concrete, we work through an example of constructing $C_j(\Delta t )$ and $p_j$ for second-order Trotterization below in Appendix~\ref{app:RandomCorrectionExample}. The structure of $W_j$ is illustrated in Fig.~\ref{fig:trotter_rand}.

\input{figures/trotter_rand_fig}

By canceling out errors, the channel $\Lambda_{\text{RandCorrTrotter}}$ reproduces time evolution with error $\mathcal{O}((L\Delta t)^{2n+2})$, thus achieving a quadratic improvement over standard order-$n$ Trotterization. As a result, the cost to achieve a total simulation error $\epsilon$ is $\mathcal{O}(5^{n/2}L^2t (Lt/\epsilon)^{1/(2n+1)})$. This reduces the cost relative to standard Trotterization (i.e., Eq.~\eqref{eq:Trotter_time_complexity}) by a factor of $\mathcal{O}((Lt/\epsilon)^{(n+1)/(n(2n+1))})$, or approximately $ \mathcal{O}((Lt/\epsilon)^{1/2n})$ for large $n$. The caveat to be noted is that realizing the correction operations requires care. Although they can be computed analytically, their support generally grows linearly with the Trotter order, which incurs an associated cost in both quantum and classical simulation.


\subsubsection{Randomly Corrected 2nd-Order Trotterization}\label{app:RandomCorrectionExample}
To make randomly corrected Trotterization more clear, let us demonstrate how it promotes a 2nd-order Trotter formula to a random formula whose error scales equivalently to a 4th-order formula. We note that this does not fully saturate the error bounds established in Ref.~\cite{Cho_2024}, according to which a $5$th order formula (i.e., $\mathcal{O}(\Delta t^6)$ error) is in principle possible. Nevertheless, our construction is intentionally simplified to cancel errors only through fourth order, because 4th-order Trotterization is often the highest order used in practice and will suffice for our purposes.

For simplicity, 
let us consider a Hamiltonian composed of two non-commuting terms:
\begin{equation}
    \hat{H} = \hat{A} + \hat{B}, \qquad [\hat{A}, \hat{B}] \neq 0. 
\end{equation}
This includes Hamiltonians that decompose into a kinetic and potential energy as $\hat{H} = \hat{T} + \hat{V}$, as well as nearest-neighbor Hamiltonians that decompose into operations on even and odd sites $\hat{H} = \hat{H}_{\text{even}} + \hat{H}_{\text{odd}}$. 
The corresponding 2nd-order Trotter formula ($n=2$) is $S_2(\Delta t) = e^{-i\hat{A}\Delta t/2 } e^{-i \hat{B} \Delta t } e^{-i \hat{A} \Delta t/2 }$, which suffers error $\mathcal{O}(\Delta t^3)$. Following Ref.~\cite{Cho_2024}, we will use randomized compiling to promote this to a channel that achieves error $\mathcal{O}(\Delta t^5)$, equivalent to a 4th-order Trotter formula.

As in Eq.~\eqref{eq:random_corrections_ops}, we consider randomizing over the operators
\begin{equation}\label{eq:RC_Trotter_order2}
    W_j(\Delta t) = S_2( \tfrac{\Delta t}{2} ) C_j(\Delta t) S_2( \tfrac{\Delta t}{2} ) ,
\end{equation}
(illustrated in Fig.~\ref{fig:rand_2nd_order}(a)) with probability $p_j$. In accordance with the mixing lemma (Eq.~\eqref{eq:RC_err_suppression} and its preceding discussion), our desired suppression of error requires that the average of these operations suffers error $\mathcal{O}(\Delta t^5)$:
\begin{equation}
    \sum_j p_j W_j(\Delta t) = e^{-i\hat{H}\Delta t} + \mathcal{O}(\Delta t^5) . 
\end{equation}
Because $S_2(\tfrac{\Delta t}{2})^{-1} = S_2(\tfrac{-\Delta t}{2})$, this can be equivalently expressed as 
\begin{equation}\label{eq:SatisfyCorrOps}
    \sum_j p_j C_j(\Delta t ) = S_2(\tfrac{-\Delta t}{2}) e^{-i(\hat{A}+\hat{B})\Delta t} S_2(\tfrac{-\Delta t}{2}) + \mathcal{O}(\Delta t^5) .  
\end{equation}
Let us denote this as evolution under an operator $\hat{K}$ for a judiciously chosen time step $(\Delta t/2)^3$:
\begin{equation}\label{eq:SatisfyCorrOps_2}
    \sum_j p_j C_j(\Delta t ) = e^{-i \hat{K} (\Delta t/2)^3} + \mathcal{O}(\Delta t^5) ,  
\end{equation}
where $\hat{K}$ is defined as 
\begin{equation}
    \hat{K} = i  (\tfrac{2}{\Delta t})^3 \ln(S_2(\tfrac{-\Delta t}{2}) e^{-i(\hat{A} + \hat{B})\Delta t} S_2(\tfrac{-\Delta t}{2}) ) + \mathcal{O}(\Delta t^2). 
\end{equation}
(This defines $\hat{K}$ as the lowest-order terms in this expression, neglecting all higher-order $\mathcal{O}(\Delta t^2)$ contributions.)

To satisfy this relationship, we begin by using the symmetric Baker-Campbell-Hausdorff (BCH) formula to write~\cite{Casas_2009, Berry_Email}
\begin{equation}\label{eq:2nd_order_Trotter}
    \begin{aligned}
        &S_2(\tfrac{\Delta t}{2}) = e^{-i\hat{A}\Delta t/4 } e^{-i \hat{B} \Delta t/2 } e^{-i \hat{A} \Delta t/4 } = e^{-i \hat{Z} \Delta t/2} , 
    \end{aligned}
\end{equation}
for an operator 
\begin{equation}
    \begin{aligned}
        &\hat{Z} = \hat{A} + \hat{B} -\frac{(i \Delta t/2)^2}{24} [ \hat{A} + 2\hat{B}, [\hat{A}, \hat{B}]] + \mathcal{O}(\Delta t^4) . 
    \end{aligned}
\end{equation}
Applying the symmetric BCH formula once more, we have 
\begin{equation}
\begin{aligned}
    &\ln( S_2(\tfrac{-\Delta t}{2}) e^{-i(\hat{A}+\hat{B})\Delta t} S_2(\tfrac{-\Delta t}{2}) )  \\ 
    =& \ln( e^{i\hat{Z}\Delta t/2} e^{-i(\hat{A}+\hat{B})\Delta t} e^{i\hat{Z}\Delta t/2} ) \\
    =&-i(\hat{A}+\hat{B})\Delta t + i\hat{Z}\Delta t \\
    & - \frac{(i\Delta t)^3}{24} \big[ \hat{Z} - 2(\hat{A}+\hat{B}) , \  [\hat{Z}, -(\hat{A}+\hat{B})] \big] + \mathcal{O}(\Delta t^5) \\
    =&-\frac{(i\Delta t/2)^3}{12} \big[ \hat{A} + 2\hat{B}, \ [\hat{A}, \hat{B}] \big] + \mathcal{O}(\Delta t^5) \\
    =&  \frac{i(\Delta t/ 2)^3}{12} \big[ \hat{A} + 2\hat{B}, \ [\hat{A}, \hat{B}] \big] + \mathcal{O}(\Delta t^5) . 
\end{aligned}
\end{equation}
Thus we have 
\begin{equation}\label{eq:K_op}
    \hat{K} = \frac{-1}{12} \big[ \hat{A} + 2\hat{B}, \ [\hat{A}, \hat{B}] \big] . 
\end{equation}
Inserting this into Eq.~\eqref{eq:SatisfyCorrOps_2}, we would like the average of the correction operators to reproduce evolution under $\hat{K}$:
\begin{equation}
\begin{aligned}
    \sum_j p_j C_j(\Delta t) &= e^{-i \hat{K} (\Delta t/2)^3} + \mathcal{O}(\Delta t^5) \\  
    &= e^{\frac{i}{12} (\frac{\Delta t}{2})^3 [ \hat{A} + 2\hat{B}, [\hat{A}, \hat{B}] ]} + \mathcal{O}(\Delta t^5) .  
\end{aligned}
\end{equation}

Ref.~\cite{Cho_2024} realizes this evolution by importance sampled Trotterization, analogous to QDrift. They decompose $\hat{K}$ into a sum of normalized non-commuting terms $\hat{K} = \sum_j k_j \hat{K}_j$, for $k_j>0$ and $\| \hat{K}_j\| =1$. As in QDrift, they also define the 1-norm $\alpha = \sum_j k_j $. Then, each correction operation $C_j(\Delta t)$ is chosen to be evolution under a single term:
\begin{equation}\label{eq:Corr_op_Trotter}
    C_j(\Delta t) = e^{-i\hat{K}_j \alpha (\Delta t/2)^3} , 
\end{equation}
and is importance sampled with probability $p_j = k_j / \alpha$. To verify this construction, we can expand $C_j(\Delta t)$ to lowest order in $\Delta t$ to find that 
\begin{equation}
\begin{aligned}
    \sum_j p_j C_j(\Delta t) &= 
    \sum_j \frac{k_j}{\alpha} \Big( 1-i\hat{K}_j\alpha (\tfrac{\Delta t}{2})^3 + \mathcal{O}(\Delta t^6) \Big) \\
    &= 1 -i\hat{K} (\tfrac{\Delta t}{2})^3 + \mathcal{O}(\Delta t^6) \\
    &=
    e^{-i \hat{K} (\Delta t/ 2)^3 } + \mathcal{O}(\Delta t^6) \\
    &= S_2(\tfrac{-\Delta t}{2}) e^{-i\hat{H}\Delta t} S_2(\tfrac{-\Delta t}{2}) + \mathcal{O}(\Delta t^5) .  
\end{aligned}
\end{equation}
(Technically, this ignores the error's dependence on $\alpha$, in favor of its scaling with $\Delta t$, which is the relevant quantity when focusing on doubling the order of Trotterization.) This satisfies the desired relationship of Eq.~\eqref{eq:SatisfyCorrOps}.

Let us illustrate this construction for a nearest-neighbor Hamiltonian that decomposes into even and odd terms as 
\begin{equation}\label{eq:Local_Ham}
\begin{aligned}
    & \hat{H} = \sum_j \hat{H}_{j,j+1} = \hat{H}_{\text{even}} + \hat{H}_{\text{odd}} \\
    & \hat{H}_{\text{even}} = \sum_{j \text{ even}} \hat{H}_{j,j+1} = \hat{A} , \\
    & \hat{H}_{\text{odd}} = \sum_{j \text{ odd}} \hat{H}_{j,j+1} =\hat{B} . 
\end{aligned}
\end{equation}
We can then write
\begin{equation}
    [\hat{A}, \hat{B}] = \sum_{j \text{ even}} \sum_{k \text{ odd}} [\hat{H}_{j,j+1}, \hat{H}_{k,k+1}].
\end{equation}
The commutator in the summation will only be nonzero for $k=j\pm 1$, so this reduces to
\begin{equation}\label{eq:asym_com_1}
\begin{aligned}
    &\sum_{j \text{ even}} [\hat{H}_{j,j+1}, \hat{H}_{j+1,j+2}] + [\hat{H}_{j,j+1}, \hat{H}_{j-1,j}] \\
    =&  \sum_{j \text{ even}} [\hat{H}_{j,j+1}, \hat{H}_{j+1,j+2}] - [\hat{H}_{j-1,j}, \hat{H}_{j,j+1}] \\
    =& \sum_{j} (-1)^j [\hat{H}_{j,j+1}, \hat{H}_{j+1,j+2}] \\
    =: &  \sum_{j} \hat{H}_{j:j+2} , 
\end{aligned}
\end{equation}
where $ \hat{H}_{j:j+2} = (-1)^j [\hat{H}_{j,j+1}, \hat{H}_{j+1,j+2}]$ is defined as the commutator in the second-to-last line, and acts on three sites ($j, \ j+1$, and $j+2$). Let us also write
\begin{equation}
    \hat{A} + 2\hat{B} = \hat{H}_\text{even} + 2\hat{H}_\text{odd} = \sum_j c_j \hat{H}_{j,j+1} ,
\end{equation}
where
\begin{equation}
    c_j =  
    \begin{cases}
        1 & j \text{ even} \\ 
        2 & j \text{ odd} . 
    \end{cases}
\end{equation}

Putting this all together, the commutator comprising $\hat{K}$ is
\begin{equation}
\begin{aligned}
    & [\hat{A} + 2\hat{B}, [\hat{A}, \hat{B}]] = \\
    &\sum_{j,k} c_j (-1)^k [ \hat{H}_{j,j+1} , [\hat{H}_{k,k+1}, \hat{H}_{k+1,k+2}] ].
\end{aligned}
\end{equation}
The terms in this sum are only nonzero for $k=j-2$, $k=j-1$, $k=j$, and $k=j+1$. Together, these terms equate to
\begin{equation}\label{eq:K_comm_decomp}
\begin{aligned}
    & \quad \sum_{j}  (-1)^j c_j \Big( [ \hat{H}_{j,j+1} , [\hat{H}_{j-2,j-1}, \hat{H}_{j-1,j}] ] \\
    & \qquad \qquad \qquad \ - [ \hat{H}_{j,j+1} , [\hat{H}_{j-1,j}, \hat{H}_{j,j+1}] ] \\
    & \qquad \qquad \qquad \ + [ \hat{H}_{j,j+1} , [\hat{H}_{j,j+1}, \hat{H}_{j+1,j+2}] ] \\
    & \qquad \qquad \qquad \ - [ \hat{H}_{j,j+1} , [\hat{H}_{j+1,j+2}, \hat{H}_{j+2,j+3}] ] \Big)\\
    &=\sum_j   (-1)^j c_j  \Big( [ \hat{H}_{j,j+1} + \hat{H}_{j+2,j+3} , [\hat{H}_{j,j+1}, \hat{H}_{j+1,j+2}]  ]  \\
    & \qquad \qquad - [ \hat{H}_{j,j+1} + \hat{H}_{j+2,j+3} , [\hat{H}_{j+1,j+2}, \hat{H}_{j+2,j+3}] ] \Big)  \\
    & =: \sum_j (-1)^j c_j \hat{H}_{j:j+3}, 
\end{aligned}
\end{equation}
where $\hat{H}_{j:j+3}$ is the 4-site term defined by the second-to-last line and acts on sites $(j, j+1, j+2, j+3)$. Therefore, by Eqs.~\eqref{eq:K_op} and~\eqref{eq:K_comm_decomp}, $\hat{K}$ decomposes into a sum of local 4-site terms as
\begin{equation}
\begin{aligned}
    \hat{K} = \frac{-1}{12} \sum_j  (-1)^j c_j \hat{H}_{j:j+3} . 
\end{aligned}
\end{equation}
We can then apply the importance-sampled Trotterization of Eq.~\eqref{eq:Corr_op_Trotter} to realize randomly corrected Trotterization, and achieve an error equivalent to a 4th-order Trotter formula.

\input{figures/rand_2nd_order_fig}

\subsubsection{A More Efficient Implementation of 2nd-Order Randomly Corrected Trotterization}\label{app:Efficient_randomly_corrected_Trotter}

While the above example promotes a 2nd-order Trotter formula to a 4th-order formula, it can be made more efficient as follows. First, note that as per the randomly corrected Trotterization construction of Ref.~\cite{Cho_2024}, the operators we randomize over at each time step consist of two Trotter formulas and a correction operator. For instance, above we took $ W_j(\Delta t) = S_2( \tfrac{\Delta t}{2} ) C_j(\Delta t) S_2( \tfrac{\Delta t}{2} )$, which requires two calls to $S_2( \tfrac{\Delta t}{2} )$. This symmetric product was chosen because it satisfies $W_j(\Delta t)^{-1} =  W_j(-\Delta t) $, and thus errors of even order in $\Delta t$ vanish. Above, this naturally canceled errors $\mathcal{O}(\Delta t^4)$ and allowed us to more easily promote to an equivalently-4th-order Trotter formula.

However, it would be less costly, and perhaps more natural, if each operator consisted of only a single Trotter step and a correction operation, i.e., a construction like $W_j(\Delta t) = S_2(\Delta t) C_j(\Delta t) $. However, in order to do so, we will need to contend with the $\mathcal{O}(\Delta t^4)$ errors, which do not necessarily cancel in such a product. Fortunately, constructing a symmetric product is not the only way to ensure this cancellation. In fact, we can cancel out these errors by randomizing over the ordering of terms in the Trotter formula.



We realize this by randomizing over the order of the $S_2$ and the correction operation in $W_j$. Specializing to a Hamiltonian $\hat{H} = \hat{A} + \hat{B}$, we can express the 2nd-order Trotter formula as
\begin{equation}\label{eq:Z_H_K_relation}
\begin{aligned}
    &S_2(\Delta t) = e^{-i\hat{A} \Delta t/2 } e^{- i\hat{B} \Delta t } e^{- i\hat{A} \Delta t/2 } = e^{- i\hat{Z} \Delta t} ,\\ 
    &\hat{Z} = \hat{A} + \hat{B} + \frac{(\Delta t)^2}{24}[\hat{A} + 2\hat{B}, [\hat{A}, \hat{B}]] + \mathcal{O}(\Delta t^4) \\
    &\ \ \  = \hat{H} - \frac{(\Delta t)^2}{2} \hat{K} + \mathcal{O}(\Delta t^4) . 
\end{aligned}
\end{equation}
As above, we decompose $\hat{K} = \sum_j k_j \hat{K}_j$ with $\alpha = \sum_j k_j$. With this relationship, we propose a randomly corrected Trotter formula with the following operators and probabilities:
\begin{equation}\label{eq:efficient_2nd_order_corrected_Trotter}
\begin{aligned}
    &W_{j,+ }(\Delta t) = S_2(\Delta t) e^{-i\hat{K}_j \frac{\alpha}{2} \Delta t^3}, \quad  p_{j,+} = \frac{k_j}{2\alpha} \\
    &W_{j,- }(\Delta t) = e^{-i\hat{K}_j \frac{\alpha}{2} \Delta t^3}S_2(\Delta t), \quad p_{j,-} = \frac{k_j}{2\alpha} .  \\
\end{aligned}
\end{equation}
Note that this is analogous to the earlier construction, but we now also randomize over the order of the correction operation and Trotter step, as illustrated in Fig.~\ref{fig:rand_2nd_order}(b). As we will see, this effectively cancels out $\mathcal{O}(\Delta t^4)$ errors.

To verify this construction, we again look to the mixing lemma. The individual error suffered by $W_{j,s}(\Delta t)$ is $\| W_{j,s}(\Delta t) - e^{-i\hat{H}\Delta t} \| \leq \mathcal{O}(\Delta t^3)$ for either $s=\pm$. On the other hand, the average error of these is seen to be $\mathcal{O}(\Delta t ^5) $ as:
\begin{equation}
    \begin{aligned}
        & \sum_j \sum_{s=\pm} p_{j,s} W_{j,s}(\Delta t) = \\
        &\sum_j \frac{k_j}{2\alpha} \Big( S_2(\Delta t) e^{-i\hat{K}_j \frac{\alpha}{2} \Delta t^3} + e^{-i\hat{K}_j \frac{\alpha}{2} \Delta t^3} S_2(\Delta t) \Big) = \\
        & \frac{1}{2} \Big( S_2(\Delta t) \big( e^{-i\hat{K} \frac{1}{2} \Delta t^3} + \mathcal{O}(\Delta t^6) \big)  \\
        & \ \quad + \big( e^{-i\hat{K} \frac{1}{2} \Delta t^3} + \mathcal{O}(\Delta t^6) \big) S_2(\Delta t) \Big) = \\
        & \frac{1}{2} \Big( e^{-i\hat{Z} \Delta t} e^{-i\hat{K} \frac{1}{2} \Delta t^3} + e^{-i\hat{K} \frac{1}{2} \Delta t^3} e^{-i\hat{Z} \Delta t} \Big) + \mathcal{O}(\Delta t^6)= \\
        &\frac{1}{2} \Big( e^{-i(\hat{Z} + \frac{\Delta t^2}{2} \hat{K})\Delta t} - \frac{1}{4}[\hat{Z}, \hat{K}]\Delta t^4 + \mathcal{O}(\Delta t^5) \\
        &\ \quad + e^{-i(\hat{Z} + \frac{\Delta t^2}{2} \hat{K})\Delta t} + \frac{1}{4}[\hat{Z}, \hat{K}]\Delta t^4 + \mathcal{O}(\Delta t^5) \Big) \\
        &\ \ + \mathcal{O}(\Delta t^6) = \\
        & e^{-i(\hat{Z} + \frac{\Delta t^2}{2} \hat{K})\Delta t} +\mathcal{O}(\Delta t^5) = \\
        & e^{-i\hat{H} \Delta t} +\mathcal{O}(\Delta t^5) . 
    \end{aligned}
\end{equation}
Here we have used the identity $e^{\hat{C} \Delta t } e^{\hat{D} \Delta t } = e^{(\hat{C} + \hat{D})\Delta t} + \frac{1}{2} [\hat{C}, \hat{D} ] \Delta t^2 + \mathcal{O}(\Delta t^3)$, and the relation $\hat{Z} + \frac{\Delta t^2}{2} \hat{K} = \hat{H} + \mathcal{O}(\Delta t^4)$ from Eq.~\eqref{eq:Z_H_K_relation}. 
Overall, this implies that the construction of Eq.~\eqref{eq:efficient_2nd_order_corrected_Trotter} indeed suppresses errors to $\mathcal{O}(\Delta t^5)$, thus achieving the performance of 4th-order Trotterization.

The upshot of this construction is that each $W_{j,\pm}(\Delta t)$ requires only one call to $S_2(\Delta t)$, in contrast to the earlier construction of Eq.~\eqref{eq:RC_Trotter_order2} which requires two calls to $S_2(\Delta t/2)$. Indeed, this is the construction we employed in the experiments of Sec.~\ref{sec:Application_QTM} to demonstrate the randomly compiled quantum trajectories method on the hardcore Bose-Hubbard model. In that context, $\hat{H} = -J \sum_j (a_{j+1}^\dag a_j + a_j^\dag a_{j+1} ) $ is a sum over terms acting on even and odd sites with $\hat{H}_{j,j+1} = -J ( a_{j+1}^\dag a_j + a_j^\dag a_{j+1} )$.

\section{Error Bound on the State Used in Randomly Compiled Path Integral Monte Carlo}\label{app:Error_Bound_RC_PIMC}

In this appendix, we show that the state in Eq.~\eqref{eq:RC_thermalstate} provides a good approximation to the thermal state. This is important because we used this state to evaluate thermal observables in randomly compiled path integral Monte Carlo. Recall that in this setting, we consider a Hamiltonian $\hat{H} = \sum_{j=1}^L \hat{H}_j$ composed of $L$ terms, and re-express it as $\hat{H} = \sum_{j=1}^L h_j \hat{\mathcal{H}}_j$, where $h_j > 0$ and $\|\hat{\mathcal{H}}_j\| = 1$. We also define the associated probability distribution $p_j = h_j/\lambda$ for $\lambda = \sum_j h_j $.

With these definitions, our goal is to approximate the thermal state $e^{-\beta H}/Z$. We do this by using the QDrift channel in imaginary time, which for a time step $\Delta \beta$ is 
\begin{equation}
    \Lambda_{\text{QDrift}} (\rho) = \sum_j p_j e^{- \hat{\mathcal{H}}_j \lambda \Delta \beta} \rho e^{- \hat{\mathcal{H}}_j\lambda \Delta \beta} . 
\end{equation}
This emulates the imaginary time evolution channel $\rho \rightarrow e^{-\Delta \beta H} \rho e^{- \Delta \beta H}$ to error $\mathcal{O}(\lambda^2 \Delta\beta^2)$ in the induced 1-norm~\cite{Campbell_2019, pocrnic2023composite}. Therefore, multiple compounded instances of this channel approximate the thermal state. In particular, let $\Delta \beta = \beta/r$ for an even $r$, and define the state that is the normalized state of $r/2$ compounded QDrift channels:
\begin{equation}
\begin{aligned}
    \sigma &= \frac{\Lambda_{\text{QDrift}}^{\circ r/2}(I)}{\tr(\Lambda_{\text{QDrift}}^{\circ r/2}(I))} \\
    &= \frac{\sum_{\mathbf{j}} p_{\mathbf{j}}\Big[ \prod_{k=1}^{r/2} e^{-\lambda \frac{\beta}{r} \hat{\mathcal{H}}_{j_k} } \times \prod_{k'=r/2}^1 e^{-\lambda \frac{\beta}{r} \hat{\mathcal{H}}_{j_{k'}} } \Big]  }{\sum_{\mathbf{j}'} p_{\mathbf{j}'} \tr \Big[ \prod_{\kappa=1}^{r/2} e^{-\lambda \frac{\beta}{r} \hat{\mathcal{H}}_{j_\kappa} } \times \prod_{\kappa'=r/2}^1 e^{-\lambda \frac{\beta}{r} \hat{\mathcal{H}}_{j_{\kappa'}} } \Big] } , 
\end{aligned}
\end{equation}
where $\mathbf{j} = (j_1, j_2, ... j_{r/2})$ is a multi-index that we refer to as the QDrift sequence, and $p_{\mathbf{j}} = \prod_{k=1}^{r/2} p_{j_k}$ is the associated probability distribution. To wit, the numerator is the sum over all possible symmetric products of sampled imaginary time steps $e^{-\lambda \frac{\beta}{r} \hat{\mathcal{H}}_{j_k} }$. Because the QDrift channel approximates imaginary time evolution, this state approximates the thermal state as
\begin{equation}
    \begin{aligned}
        \Big\| \sigma - \frac{e^{-\beta \hat{H}}}{Z} \Big\|_1 \leq \mathcal{O}\Big(\frac{r}{2} \lambda^2\Delta \beta^2 \Big) = \mathcal{O}(\lambda^2 \beta^2/r) . 
    \end{aligned}
\end{equation}
(See Ref.~\cite{pocrnic2023composite} for a detailed proof of this statement.)

While the state $\sigma$ is indeed a direct application of QDrift, it does not fit directly into the framework of RC-QMC. This is because an observable computed in $\sigma$ does not obviously decompose into an average of approximations as $\langle \hat{O}\rangle = \sum_{\mathbf{j}} p_{\mathbf{j}} \langle \hat{O}\rangle_{\mathbf{j}}$, which we need to employ randomly compiled QMC according to Eq.~\eqref{eq:RC_QMC}. This happens because $\sigma$ features a global normalization factor, which obstructs this desired decomposition; ultimately, this arises from the QDrift channel being non-unitary and requiring renormalization after application.

Fortunately, this issue is surmountable. Rather than integrate $\sigma$ into RC-QMC, we can replace it with an appropriately modified state $\sigma'$, whose corresponding observables decompose into an average over QDrift sequences $\mathbf{j}$. We posit the following modified state:
\begin{equation}\label{eq:sigma_prime}
    \sigma' = \sum_{\mathbf{j}} p_{\mathbf{j}} \frac{ \prod_{k=1}^{r/2} e^{-\lambda \frac{\beta}{r} \hat{\mathcal{H}}_{j_k} } \times \prod_{k'=r/2}^1 e^{-\lambda \frac{\beta}{r} \hat{\mathcal{H}}_{j_{k'}} } }{ \tr( \prod_{\kappa=1}^{r/2} e^{-\lambda \frac{\beta}{r} \hat{\mathcal{H}}_{j_\kappa} } \times \prod_{\kappa'=r/2}^1 e^{-\lambda \frac{\beta}{r} \hat{\mathcal{H}}_{j_{\kappa'}} }) }. 
\end{equation}
The upshot of $\sigma'$ is that it is manifestly an average of individually normalized states. Thus, an observable computed in $\sigma'$ decomposes into an average over QDrift sequences $\mathbf{j}$, which fits into the RC-QMC framework. Indeed, we used this state in Sec.~\ref{sec:Application_PIMC} to develop randomly compiled path integral Monte Carlo. In addition, $\sigma'$ approximates the thermal state to the same level of accuracy as $\sigma$, namely an error $\mathcal{O}(\lambda^2\beta^2/r)$. We prove this as follows: 
\begin{theorem}[Error bound on $\sigma'$]\label{thm:ErrorBound_sigma_prime}
    Let $\sigma'$ be the state of Eq.~\eqref{eq:sigma_prime}. Then this state differs from the thermal state as:
    \begin{equation}
        \Big\| \sigma' - \frac{e^{-\beta \hat{H}}}{Z} \Big\|_1 \leq \mathcal{O}(\lambda^2 \beta^2 / r) . 
    \end{equation}
\end{theorem}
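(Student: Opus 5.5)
The plan is to compare $\sigma'$ with the globally normalized QDrift state $\sigma$ defined just above, and then use the triangle inequality
\[
\Big\|\sigma'-\tfrac{e^{-\beta\hat H}}{Z}\Big\|_1\le\|\sigma'-\sigma\|_1+\Big\|\sigma-\tfrac{e^{-\beta\hat H}}{Z}\Big\|_1 .
\]
The second term is already $\mathcal{O}(\lambda^2\beta^2/r)$ by the result of Ref.~\cite{pocrnic2023composite} quoted above. So everything reduces to showing $\|\sigma'-\sigma\|_1=\mathcal{O}(\lambda^2\beta^2/r)$.

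\emph{Notation.} Write $M_{\mathbf{j}}$ for the symmetric product $\prod_{k=1}^{r/2}e^{-\lambda\frac{\beta}{r}\hat{\mathcal H}_{j_k}}\prod_{k'=r/2}^{1}e^{-\lambda\frac{\beta}{r}\hat{\mathcal H}_{j_{k'}}}$. It has the form $P_{\mathbf{j}}P_{\mathbf{j}}^\dagger$, so it is positive semidefinite. Further set $Z_{\mathbf{j}}=\tr M_{\mathbf{j}}$, $\bar Z=\mathbb{E}_{\mathbf{j}}[Z_{\mathbf{j}}]$ and $\rho_{\mathbf{j}}=M_{\mathbf{j}}/Z_{\mathbf{j}}$. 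Then $\sigma'=\mathbb{E}[\rho_{\mathbf{j}}]$ and $\sigma=\mathbb{E}[Z_{\mathbf{j}}\rho_{\mathbf{j}}]/\bar Z$, hence
\[
\sigma'-\sigma=\mathbb{E}\big[\rho_{\mathbf{j}}\,(1-Z_{\mathbf{j}}/\bar Z)\big].
\]

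\emph{Avoiding a first-order loss.} The naive bound $\|\sigma'-\sigma\|_1\le\mathbb{E}|1-Z_{\mathbf{j}}/\bar Z|$ is too weak: it only gives the square root of the desired rate. Instead I will exploit that $\mathbb{E}[1-Z_{\mathbf{j}}/\bar Z]=0$. Writing $\rho_{\mathbf{j}}=\tau+\delta_{\mathbf{j}}$ for any fixed reference state $\tau$ (e.g.\ $\tau=e^{-\beta\hat H}/Z$), the $\tau$ contribution vanishes, and Cauchy--Schwarz gives
\[
\|\sigma'-\sigma\|_1\le\sqrt{\mathbb{E}\|\delta_{\mathbf{j}}\|_1^2}\;\sqrt{\operatorname{Var}(Z_{\mathbf{j}})}\,/\,\bar Z .
\]
It then suffices to show that each factor is $\mathcal{O}(\lambda\beta/\sqrt r)$. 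Their product is then the claimed $\mathcal{O}(\lambda^2\beta^2/r)$; this is the usual "bias is the square of the fluctuation" mechanism behind the mixing lemma.

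\emph{Bounding $\operatorname{Var}(Z_{\mathbf{j}})$.} I would pass to a doubled system. We have $Z_{\mathbf{j}}^2=\tr(M_{\mathbf{j}}\otimes M_{\mathbf{j}})$ and $\mathbb{E}[M_{\mathbf{j}}\otimes M_{\mathbf{j}}]=(\Lambda^{(2)})^{\circ r/2}(I\otimes I)$. Here $\Lambda^{(2)}(X)=\sum_j p_j E_j X E_j$ with $E_j=e^{-\lambda\Delta\beta(\hat{\mathcal H}_j\otimes I+I\otimes\hat{\mathcal H}_j)}$. This is an imaginary-time QDrift channel for the Hamiltonian $\hat H\otimes I+I\otimes\hat H$, whose terms have 1-norm $2\lambda$. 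Applying the same imaginary-time QDrift estimate of Ref.~\cite{pocrnic2023composite} (in its relative, normalized form), I expect $\mathbb{E}[Z_{\mathbf{j}}^2]=Z^2(1+\mathcal{O}(\lambda^2\beta^2/r))$. The single-copy analogue is $\bar Z=Z(1+\mathcal{O}(\lambda^2\beta^2/r))$. Subtracting the two gives $\operatorname{Var}(Z_{\mathbf{j}})/\bar Z^2=\mathcal{O}(\lambda^2\beta^2/r)$.

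\emph{Bounding $\mathbb{E}\|\delta_{\mathbf{j}}\|_1^2$ (main obstacle).} The same doubled-system idea, now applied to the normalized states, should handle $\mathbb{E}\|\rho_{\mathbf{j}}-\tau\|_1^2$. The difficulty is that the trace norm squared is not linear in $\rho_{\mathbf{j}}\otimes\rho_{\mathbf{j}}$. My first attempt would be to bound it through a quadratic quantity, such as a weighted Hilbert--Schmidt distance or $1-$fidelity. Expanding such a quantity gives terms $\mathbb{E}[\rho_{\mathbf{j}}\otimes\rho_{\mathbf{j}}]$ that a two-copy QDrift argument can control. The goal is to keep the final bound free of any dimension factor.

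If that fails, the fallback is a martingale telescoping argument. Write $\rho_{\mathbf{j}}$ as a product of $r/2$ renormalized single-step maps. Each step has conditional-mean error $\mathcal{O}(\lambda^2\Delta\beta^2)$ and conditional fluctuation $\mathcal{O}(\lambda\Delta\beta)$. The fluctuations are orthogonal across steps, so their second moments add to $\mathcal{O}(r\lambda^2\Delta\beta^2)=\mathcal{O}(\lambda^2\beta^2/r)$. The same telescoping would give both required bounds simultaneously.
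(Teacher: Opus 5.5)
Your reduction is sound as far as it goes: the identity $\sigma'-\sigma=\mathbb{E}[\delta_{\mathbf j}(1-Z_{\mathbf j}/\bar Z)]$ and the Cauchy--Schwarz step are correct, and the rate bookkeeping is consistent. It is not yet a proof, though. Everything rests on $\mathbb{E}\|\rho_{\mathbf j}-\tau\|_1^2=\mathcal{O}(\lambda^2\beta^2/r)$, which is a dimension-free concentration bound for individually normalized random non-unitary products. You leave this open, and neither suggested route works as stated.

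The two-copy trick linearizes $\mathbb{E}[M_{\mathbf j}\otimes M_{\mathbf j}]$. But $\rho_{\mathbf j}\otimes\rho_{\mathbf j}=M_{\mathbf j}\otimes M_{\mathbf j}/Z_{\mathbf j}^2$ carries a random normalization, so its average is not a compounded channel applied to $I\otimes I$. There is also no convenient quadratic surrogate. Converting a Hilbert--Schmidt bound back to trace norm costs a factor $\sqrt{D}$. A dimension-free surrogate such as $\tr[(\rho-\tau)\tau^{-1/2}(\rho-\tau)\tau^{-1/2}]\ge\|\rho-\tau\|_1^2$ weights errors by $\tau^{-1/2}\propto e^{\beta\hat{H}/2}$, which a trace-norm QDrift estimate does not control.

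The martingale fallback assumes that second moments of increments add in trace norm. Orthogonality of martingale differences is a Hilbert--Schmidt property with no Pythagorean analogue in $\|\cdot\|_1$. It would also require that the later random maps $\rho\mapsto A_j\rho A_j/\tr(A_j\rho A_j)$, which are nonlinear and not trace-norm contractions, do not amplify earlier fluctuations.

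Even the variance step is only ``expected.'' The quoted bound on $\sigma$ concerns the normalized state and does not by itself give $\bar Z/Z$ or $\mathbb{E}[Z_{\mathbf j}^2]/Z^2$. These relative partition-function estimates need their own per-step argument.

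The missing idea is that no second moment is needed at all. The paper never passes through $\sigma$. It interpolates with hybrid states $\sigma'_n$: inside every individually normalized term, the outermost $n$ QDrift steps are replaced by $G=e^{-\beta n\hat{H}/r}$. Thus $\sigma'_0=\sigma'$ and $\sigma'_{r/2}=e^{-\beta\hat{H}}/Z$.

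The paper then conditions on the inner product $\chi=\mathcal{A}_{j_{1:r/2-n-1}}(I)\succeq 0$. It expands $A_j\chi A_j=\chi-\tfrac{\lambda\beta}{r}(\hat{\mathcal{H}}_j\chi+\chi\hat{\mathcal{H}}_j)+\mathcal{O}(\lambda^2\beta^2/r^2)$ inside $G(\cdot)G/\tr(G(\cdot)G)$ and linearizes the denominator. The first-order terms are then linear in $\hat{\mathcal{H}}_j$. Averaging over $p_j$ with $\lambda\sum_j p_j\hat{\mathcal{H}}_j=\hat{H}$ reproduces the exact step $e^{-\beta\hat{H}/r}$. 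This gives $\|\sigma'_n-\sigma'_{n+1}\|_1=\mathcal{O}(\lambda^2\beta^2/r^2)$, and telescoping over $r/2$ steps yields the theorem.

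The mixing-lemma cancellation is performed step by step, conditionally on the inner sequence, with only deterministic exact evolution outside it. As a result, the nonlinearity of the normalization enters only at second order, and no concentration of $\rho_{\mathbf j}$ or $Z_{\mathbf j}$ is required.
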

\begin{proof}
    To begin proving this, let us simplify notation by introducing the non-unitary map:
    \begin{equation}
        \mathcal{A}_{j} (\rho) = e^{-\hat{\mathcal{H}}_j \frac{\lambda \beta}{r}} \rho e^{-\hat{\mathcal{H}}_j \frac{\lambda \beta}{r}} =: A_j \rho A_j, 
    \end{equation}
    where $A_j = e^{-\hat{\mathcal{H}}_j \frac{\lambda \beta}{r}} = A_j^\dag$. We can then write $\sigma'$ as 
    \begin{equation}\label{eq:NormalizedQDriftState}
        \sigma '  = \sum_{j_{1:r/2}} p_{j_{1:r/2}} \frac{\mathcal{A}_{j_{1:r/2}} (I)}{\tr( \mathcal{A}_{j_{1:r/2}} (I) ) } ,  
    \end{equation}
    where we use the abridged notation $j_{1:r/2} = (j_1, j_2, ..., j_{r/2})$, $p_{j_{1:{r/2}}} = p_{j_1} p_{j_2} ... p_{j_{r/2}}$, and $\mathcal{A}_{j_{1:r/2}} (I) = \mathcal{A}_{j_{r/2}} ( ... \mathcal{A}_{j_1} (I))$. This notation will ease the proof of this theorem.

    Next, note that $\sigma'$ is not a composition of quantum channels. This is due to the presence of a normalization factor for each individual term $\mathcal{A}_{j_{1:r/2}} (I)$ in the sum over $j_{1:r/2}$. This distinction from a composed quantum channel means that we cannot use the usual tricks from quantum computing to bound the 1-norm distance between two quantum states.

    Instead, we propose the following sequence of states, which resembles the action of a composed quantum channel:
    \begin{equation}
        \sigma'_n = \sum_{ j_{1:r/2-n}} p_{j_{1:r/2-n}} \frac{e^{\frac{-\beta n \hat{H}}{r}} \mathcal{A}_{j_{1:r/2-n}} (I) e^{\frac{-\beta n\hat{H}}{r}} }{ \tr( e^{\frac{-\beta n\hat{H}}{r}} \mathcal{A}_{j_{1:r/2-n}} (I) e^{\frac{-\beta n\hat{H}}{r}} ) } , 
    \end{equation}
    for $n= 0, 1 ..., r/2$. Observe that $\sigma'_0 = \sigma'$ is our proposed state, and $\sigma'_{r/2} = e^{-\beta \hat{H}}/Z $ is the thermal state. Therefore, with increasing $n$, the states $\sigma'_n$ converge to the thermal state. Our proof strategy, illustrated in Fig.~\ref{fig:rand_pimc_error}, is to show that adjacent states in this sequence are close to each other: $\| \sigma'_{n+1} - \sigma'_n \|_1 = \mathcal{O}(\lambda^2 \beta^2 /r^2)$, similar to the error suffered by application of the QDrift channel. By then constructing a telescoping series of these states, from $n=0$ up to $n=r/2$, we can show that $\sigma'$ and $e^{-\beta {\hat{H}}}/Z $ differ by at most $r/2 \cdot \mathcal{O}(\lambda^2 \beta^2 /r^2) = \mathcal{O}(\lambda^2 \beta^2 /r)$.

    \input{figures/rand_pimc_error_fig}

    Let us first show the stated claim: $\| \sigma'_{n+1} - \sigma'_n \|_1 = \mathcal{O}(\lambda^2 \beta^2 /r^2)$ for $n\in \{0, 1, ..., r/2-1\}$. For ease of notation, let $\chi := \mathcal{A}_{j_{1:r/2-(n+1)}} (I)$. Because $\chi$ is a symmetric product of Hermitian, positive semi-definite matrices (i.e., each $e^{-\hat{\mathcal{H}}_j \frac{\lambda \beta}{r}}$), $\chi$ is also Hermitian and positive semi-definite: $\chi = \chi^\dag$ and $\chi \succeq 0$. These two properties will be useful in proving our desired inequality. We can then write $\sigma_n'$ as
    \begin{equation}\label{eq:sigma_prime_n}
    \begin{aligned}
        &\sigma'_n = \sum_{j_{1:r/2-n}} p_{j_{1:r/2-(n+1)}} p_{j_{r/2-n}} \\
        &\qquad \qquad \qquad \times \frac{e^{\frac{-\beta n\hat{H}}{r}} A_{j_{r/2-n}} \chi A_{j_{r/2-n}} e^{\frac{-\beta n\hat{H}}{r}} }{ \tr( e^{\frac{-\beta n\hat{H}}{r}} A_{j_{r/2-n}} \chi A_{j_{r/2-n}} e^{\frac{-\beta n\hat{H}}{r}} ) } .  
    \end{aligned}
    \end{equation}

    We will now massage this expression into something that resembles $\sigma'_{n+1}$. For brevity, let us
    replace the subscript $j_{r/2-n}$ with $j$ in the math below. We can then write the term appearing in the numerator as:
    \begin{equation}\label{eq:sigma_numerator}
    \begin{aligned}
        A_j \chi A_j &= e^{\frac{-\lambda \beta}{r} \hat{\mathcal{H}}_j } \chi e^{\frac{-\lambda \beta}{r} \hat{\mathcal{H}}_j }  \\ 
        &= \chi - \frac{\lambda \beta}{r} (\hat{\mathcal{H}}_j \chi + \chi \hat{\mathcal{H}}_j) + \mathcal{O}(\| \chi\|_1 \lambda^2 \beta^2 / r^2) , 
    \end{aligned}
    \end{equation}
    where the higher-order terms in the $\mathcal{O}(\cdot)$ have 1-norm $\mathcal{O}(\| \chi\|_1 \lambda^2 \beta^2/r^2)$. This follows from application of H\"older's inequality and $\chi \succeq 0$ to the higher-order terms, e.g.,
    \begin{equation}
    \begin{aligned}
        \tr(\hat{\mathcal{H}}_j \chi \hat{\mathcal{H}}_j) &\leq \tr(|\hat{\mathcal{H}}_j \chi \hat{\mathcal{H}}_j|) = \| \hat{\mathcal{H}}_j \chi \hat{\mathcal{H}}_j \|_1 \\
        &\leq \|\hat{\mathcal{H}}_j\|^2 \|\chi \|_1 = \| \chi \|_1 = \tr(\chi) . 
    \end{aligned}
    \end{equation}
    We can similarly use this approach to express the term in the denominator as
    \begin{equation}\label{eq:sigma_denominator}
    \begin{aligned}
        & \tr( e^{\frac{-\beta n\hat{H}}{r}} A_{j} \chi A_{j} e^{\frac{-\beta n\hat{H}}{r}} ) = \\
        & \tr( e^{\frac{-\beta n\hat{H}}{r}} \chi e^{\frac{-\beta n\hat{H}}{r}} ) \\
        & \qquad- \frac{\lambda \beta}{r} \tr( e^{\frac{-\beta n\hat{H}}{r}} (\hat{\mathcal{H}}_j \chi + \chi \hat{\mathcal{H}}_j) e^{\frac{-\beta n\hat{H}}{r}} ) \\
        & \qquad + \mathcal{O}\Big( \lambda^2\beta^2/r^2 \cdot \tr( e^{\frac{-\beta n\hat{H}}{r}} \chi e^{\frac{-\beta n\hat{H}}{r}} ) \Big). 
    \end{aligned}
    \end{equation}

    Next, let us return to our expression for $\sigma'_n$ in Eq.~\eqref{eq:sigma_prime_n}, which we wish to simplify. By inserting Eq.~\eqref{eq:sigma_numerator} into the numerator, Eq.~\eqref{eq:sigma_denominator} into the denominator, and then using the Taylor series $1/(1-x) = 1+x+\mathcal{O}(x^2)$ to simplify the denominator, we can write
    \begin{widetext}
    \begin{equation}\label{eq:sigma_prime_n_simplify_1}
    \begin{aligned}
        & \frac{e^{\frac{-\beta n\hat{H}}{r}} A_{j} \chi A_{j} e^{\frac{-\beta n\hat{H}}{r}} }{ \tr( e^{\frac{-\beta n\hat{H}}{r}} A_{j} \chi A_{j} e^{\frac{-\beta n\hat{H}}{r}} ) } 
        \\
        & \qquad \qquad= \frac{e^{\frac{-\beta n\hat{H}}{r}} (\chi - \frac{\lambda \beta}{r} (\hat{\mathcal{H}}_j \chi + \chi \hat{\mathcal{H}}_j))  e^{\frac{-\beta n\hat{H}}{r}} }{ \tr( e^{\frac{-\beta n\hat{H}}{r}} \chi e^{\frac{-\beta n\hat{H}}{r}} ) } 
        \Bigg( 1 + \frac{\frac{\lambda \beta}{r} \tr(e^{\frac{-\beta n\hat{H}}{r}} (\hat{\mathcal{H}}_j \chi + \chi \hat{\mathcal{H}}_j) e^{\frac{-\beta n\hat{H}}{r}}) }{\tr( e^{\frac{-\beta n\hat{H}}{r}} \chi e^{\frac{-\beta n\hat{H}}{r}} )}  \Bigg) 
         + \mathcal{O} \Big(\frac{\lambda^2 \beta^2}{r^2} \Big) .
    \end{aligned}
    \end{equation}
    \end{widetext}
    Next, we insert this into Eq.~\eqref{eq:sigma_prime_n} and sum over $p_j$, making use of the facts that $\lambda \sum_j p_j \hat{\mathcal{H}}_j = \hat{H}$ and $\|\hat{H}\| \leq \lambda$. Because we retain terms only to linear order in $\lambda \beta / r$, the two separate appearances of $\hat{\mathcal{H}}_j \chi + \chi \hat{\mathcal{H}}_j$ in Eq.~\eqref{eq:sigma_prime_n_simplify_1} can be summed independently, yielding
    \begin{widetext}
    \begin{equation}
    \begin{aligned}
        & \sum_j p_j \frac{e^{\frac{-\beta n\hat{H}}{r}} A_{j} \chi A_{j} e^{\frac{-\beta n\hat{H}}{r}} }{ \tr( e^{\frac{-\beta n\hat{H}}{r}} A_{j} \chi A_{j} e^{\frac{-\beta n\hat{H}}{r}} ) } \\
        =& \frac{e^{\frac{-\beta n\hat{H}}{r}} (\chi - \frac{\lambda \beta}{r} \sum_j p_j (\hat{\mathcal{H}}_j \chi + \chi \hat{\mathcal{H}}_j))  e^{\frac{-\beta n\hat{H}}{r}} }{ \tr( e^{\frac{-\beta n\hat{H}}{r}} \chi e^{\frac{-\beta n\hat{H}}{r}} ) } 
        \Bigg( 1 + \frac{\frac{\lambda \beta}{r} \tr(e^{\frac{-\beta n\hat{H}}{r}} \sum_j p_j (\hat{\mathcal{H}}_j \chi + \chi \hat{\mathcal{H}}_j) e^{\frac{-\beta n\hat{H}}{r}}) }{\tr( e^{\frac{-\beta n\hat{H}}{r}} \chi e^{\frac{-\beta n\hat{H}}{r}} )}  \Bigg)
        + \mathcal{O} \Big(\frac{\lambda^2 \beta^2}{r^2} \Big) \\
        =&\frac{e^{\frac{-\beta n\hat{H}}{r}} (\chi - \frac{ \beta}{r} (\hat{H}\chi + \chi\hat{H}))  e^{\frac{-\beta n\hat{H}}{r}} }{ \tr( e^{\frac{-\beta n\hat{H}}{r}} \chi e^{\frac{-\beta n\hat{H}}{r}} ) } 
        \Bigg( 1 + \frac{\frac{ \beta}{r} \tr(e^{\frac{-\beta n\hat{H}}{r}} (\hat{H}\chi + \chi\hat{H}) e^{\frac{-\beta n\hat{H}}{r}}) }{\tr( e^{\frac{-\beta n\hat{H}}{r}} \chi e^{\frac{-\beta n\hat{H}}{r}} )}  \Bigg) 
        + \mathcal{O} \Big(\frac{\lambda^2 \beta^2}{r^2} \Big) \\
        =& \frac{e^{\frac{-\beta n\hat{H}}{r}} (\chi - \frac{ \beta}{r} (\hat{H}\chi + \chi\hat{H}))  e^{\frac{-\beta n\hat{H}}{r}} }{ \tr( e^{\frac{-\beta n\hat{H}}{r}} ( \chi  - \frac{\beta}{r} (H\chi + \chi\hat{H}) ) e^{\frac{-\beta n\hat{H}}{r}} ) } 
        + \mathcal{O} \Big(\frac{\lambda^2 \beta^2}{r^2} \Big) = \frac{e^{\frac{-\beta (n+1)\hat{H}}{r}} \chi  e^{\frac{-\beta (n+1)\hat{H}}{r}} }{ \tr( e^{\frac{-\beta (n+1)\hat{H}}{r}} \chi  e^{\frac{-\beta (n+1)\hat{H}}{r}}  ) } + \mathcal{O} \Big(\frac{\lambda^2 \beta^2}{r^2} \Big) .  
    \end{aligned}
    \end{equation}
    \end{widetext}

    Putting this back into Eq.~\eqref{eq:sigma_prime_n}, and re-inserting the notation $j = j_{r/2-n}$ and $\chi = \mathcal{A}_{j_{1:r/2-(n+1)}} (I)$, we obtain 
    \begin{widetext}
    \begin{equation}
        \begin{aligned}
            \sigma_n' &= \sum_{j_{1:r/2-n}} p_{j_{1:r/2-(n+1)}} p_{j_{r/2-n}} \frac{e^{\frac{-\beta n\hat{H}}{r}} A_{j_{r/2-n}} \chi A_{j_{r/2-n}} e^{\frac{-\beta n\hat{H}}{r}} }{ \tr( e^{\frac{-\beta n\hat{H}}{r}} A_{j_{r/2-n}} \chi A_{j_{r/2-n}} e^{\frac{-\beta n\hat{H}}{r}} ) } \\
            &= \sum_{j_{1:r/2-n}} p_{j_{1:r/2-(n+1)}} \Bigg[ \frac{e^{\frac{-\beta (n+1)\hat{H}}{r}} \chi  e^{\frac{-\beta (n+1)\hat{H}}{r}} }{ \tr( e^{\frac{-\beta (n+1)\hat{H}}{r}} \chi  e^{\frac{-\beta (n+1)\hat{H}}{r}}  ) } +  \mathcal{O} \Big(\frac{\lambda^2 \beta^2}{r^2} \Big) \Bigg] \\ 
            &= \sum_{j_{1:r/2-n}} p_{j_{1:r/2-(n+1)}} \frac{e^{\frac{-\beta (n+1)\hat{H}}{r}} \mathcal{A}_{j_{1:r/2-(n+1)}}(I)  e^{\frac{-\beta (n+1)\hat{H}}{r}} }{ \tr( e^{\frac{-\beta (n+1)\hat{H}}{r}} \mathcal{A}_{j_{1:r/2-(n+1)}}(I)  e^{\frac{-\beta (n+1)\hat{H}}{r}}  ) } +  \mathcal{O} \Big(\frac{\lambda^2 \beta^2}{r^2} \Big) \\
            &= \sigma'_{n+1} +  \mathcal{O} \Big(\frac{\lambda^2 \beta^2}{r^2} \Big) . 
        \end{aligned}
    \end{equation}
    \end{widetext}
    Because the $\mathcal{O}(\cdot)$ terms are bounded in 1-norm, this implies $\|\sigma'_n - \sigma'_{n+1} \|_1 \leq \mathcal{O}(\lambda^2\beta^2/r^2)$, as we set out to show. 

    Finally, we can invoke a telescoping series across the $\sigma'_n$ and employ the triangle inequality to upper bound $\| \sigma' - e^{-\beta \hat{H}}/Z\|_1$ as:
    \begin{equation}
        \begin{aligned}
             \| \sigma' - e^{-\beta \hat{H}}/Z\|_1 &= \| \sigma'_0 -\sigma'_{r/2} \|_1 \\
            & = \Bigg\|  \sum_{n=0}^{r/2-1} (\sigma'_n -\sigma'_{n+1}) \Bigg\|_1 \\
            & \leq \sum_{n=0}^{r/2-1} \big\|\sigma'_n - \sigma'_{n+1} \big\|_1 \\
            &\leq \frac{r}{2} \cdot \mathcal{O} \Big(\frac{\lambda^2 \beta^2}{r^2} \Big) = \mathcal{O} \Big(\frac{\lambda^2 \beta^2}{ r} \Big) . 
        \end{aligned}
    \end{equation}
    This proves the stated bound. 
    
    For completeness, we note that one could alternatively prove this result by using the approaches taken in Refs.~\cite{Campbell_2019, pocrnic2023composite} to prove similar bounds, namely expressing all relevant operations in terms of the Liouvillian representation of quantum channels. 
\end{proof}

With this bound proven, we can use $\sigma'$ to approximate thermal state observables and experience asymptotically the same error scaling as directly using QDrift. As emphasized above, the structure of $\sigma'$ allows observables to decompose into an average over QDrift sequences $\mathbf{j}$, and thus fits into the RC-QMC framework. This is indeed how we developed randomly compiled path integral Monte Carlo in Sec.~\ref{sec:Application_PIMC}. In addition, because the 1-norm upper bounds the error suffered in an arbitrary observable as per Eq.~\eqref{eq:obs_bound}, this error bound also proves that an observable estimated using $\sigma'$ deviates from its exact thermal value by $\mathcal{O}(\lambda^2\beta^2/r)$.

Furthermore, let us again mention that our definition of $\sigma'$ in Eq.~\eqref{eq:sigma_prime} is an average of symmetric products, i.e. $\prod_{k=1}^{r/2} e^{-\lambda \frac{\beta}{r} \hat{\mathcal{H}}_{j_k} } \times \prod_{k'=r/2}^1 e^{-\lambda \frac{\beta}{r} \hat{\mathcal{H}}_{j_{k'}} }$ for a sampled QDrift sequence $(j_1, ..., j_{r/2})$. While this symmetric structure was chosen to resemble a compounded QDrift channel, it also simplified the proof of the error bound in Theorem~\ref{thm:ErrorBound_sigma_prime}. The reason for this is that a symmetric product of Hermitian, positive definite matrices (which each $e^{-\hat{\mathcal{H}}_j \frac{\lambda \beta}{r}}$ is) is also Hermitian and positive semi-definite. This means that each numerator in the average in Eq.~\eqref{eq:sigma_prime} is also Hermitian and positive semi-definite. This allowed us to treat the numerator like an unnormalized density matrix and employ similar tools for proving error bounds.

As an alternative, one could consider the state that averages over products of imaginary time steps, rather than symmetric products:
\begin{equation}\label{eq:sigma_prime_prime}
    \sigma '' = \sum_{\mathbf{j}} p_{\mathbf{j}} \frac{ \prod_{k=1}^r e^{-\lambda \frac{\beta}{r} \hat{\mathcal{H}}_{j_k} } }{ \tr( \prod_{\kappa=1}^r e^{-\lambda \frac{\beta}{r} \hat{\mathcal{H}}_{j_\kappa} }) } .
\end{equation}
We expect that this state also closely approximates the thermal state to the same level of error as $\sigma'$, namely $\mathcal{O}(\lambda^2\beta^2/r)$. This can be intuited from concentration bounds for QDrift and products of random matrices~\cite{Chen_2021, huang2020matrix}. However, proving this is difficult because each numerator in Eq.~\eqref{eq:sigma_prime_prime} is neither necessarily Hermitian nor positive semi-definite, which renders many of the techniques used in the proof of Theorem~\ref{thm:ErrorBound_sigma_prime} not directly applicable. We imagine that establishing such a rigorous proof would require a detailed analysis and take us too far afield into concentration bounds and random matrix theory. 

Nonetheless, we also considered the state $\sigma''$ in the experiments on randomly compiled path integral Monte Carlo in Sec.~\ref{sec:RC_PIMC_Experiments}. There, we referred to use of $\sigma''$ as the ``asymmetric approach". Our results showed that $\sigma''$ also provides accurate estimates of thermal observables.

\section{Statistical Error of RC-QMC}\label{app:Stat_Error}

As we remarked in Sec.~\ref{sec:RC_QMC}, RC-QMC incurs an additional statistical error due to sampling over many approximations. Fortunately, this increased error is generally small because the approximations used in RC-QMC concentrate tightly around their mean.

To see this, let us first consider standard QMC, where we estimate an observable as
\begin{equation}
    \langle \hat{O} \rangle \approx \mathop{\mathbb{E}}_{X \sim \widetilde{\mathcal{P}} } \big[ F(\hat{O}, X) \big] . 
\end{equation}
The variance in this estimator is 
\begin{equation}
    \sigma^2_{\widetilde{\mathcal{P}}} := \mathop{\text{Var}}_{X \sim \widetilde{\mathcal{P}} } \big[ F(\hat{O}, X) \big].
\end{equation}
As such, by the central limit theorem, estimating $\langle \hat{O} \rangle$ to additive statistical error $\varepsilon$ requires averaging over $\mathcal{O}(\sigma^2_{\widetilde{\mathcal{P}}}/\varepsilon^2)$ samples drawn from $\widetilde{\mathcal{P}}$.

On the other hand, in randomly compiled QMC, we estimate an observable as
\begin{equation}
    \langle \hat{O} \rangle \approx \mathop{\mathbb{E}}_{j\sim p_j} \bigg[ \mathop{\mathbb{E}}_{X \sim \widetilde{\mathcal{P}}_j } \big[ F(\hat{O}, X) \big] \bigg] = \mathop{\mathbb{E}}_{j\sim p_j} \Big[ \langle \hat{O} \rangle_{ \widetilde{\mathcal{P}}_j } \Big] ,
\end{equation}
where $\{ \widetilde{\mathcal{P}}_j\}$ is a collection of approximate distributions, $p_j$ is a probability distribution over these approximations, and $\langle \hat{O} \rangle_{ \widetilde{\mathcal{P}}_j } := \mathop{\mathbb{E}}_{X \sim \widetilde{\mathcal{P}}_j } \big[ F(\hat{O}, X) \big]$ is an estimate of the observable using $\widetilde{\mathcal{P}}_j$.
By the law of total variance, the variance of this estimator is 
\begin{equation}
    \begin{aligned}
        &\mathop{\mathbb{E}}_{j \sim p_j} \bigg[ \mathop{\text{Var}}_{X \sim \widetilde{\mathcal{P}}_j } \big[ F(\hat{O}, X) \big] \bigg] + \mathop{\text{Var}}_{j \sim p_j} \bigg[ \mathop{\mathbb{E}}_{X \sim \widetilde{\mathcal{P}}_j } \big[ F(\hat{O}, X) \big] \bigg] \\
        =&\mathop{\mathbb{E}}_{j \sim p_j} \big[ \sigma_{\widetilde{\mathcal{P}}_j}^2 \big] + \mathop{\text{Var}}_{j \sim p_j} \big[ \langle \hat{O}\rangle_{\widetilde{\mathcal{P}}_j} \big] . 
    \end{aligned}
\end{equation}

The first term is the average variance of each estimator, and in practice scales like the variance $\sigma^2_{\widetilde{\mathcal{P}}}$ experienced in standard QMC. On the other hand, the second contribution is new: it is the variance across each estimated observable. While this term increases the total variance, its contribution is typically quite small. In particular, randomized compiling ensures that the observables $\langle \hat{O} \rangle_{\widetilde{\mathcal{P}}_j}$ are tightly concentrated around their mean, implying that their variance is small. For example, the conditions outlined in Sec.~\ref{sec:randomized_compiling} imply that these observables exhibit a variance of order $\mathcal{O}(\epsilon^2)$, matching the scaling of the error. When such a channel is compounded over many iterations, as in random Trotterization, the variance grows linearly with the number of iterations and therefore continues to scale with the total error, which is small in practice. In either case, this additional variance is a small quantity.

Thus, the increase in variance is not particularly large, nor is the number of samples needed to compensate for this increased variance. Moreover, this slight increase in sampling cost is more than offset by the improvements provided by RC-QMC. For further discussion and bounds on the statistical error associated with randomly compiled channels, see Refs.~\cite{Chen_2021, Kiss_2023}.

\end{document}

%% file: figures/result_fig.tex
%

\begin{figure*}[t]
\centering
\begin{tikzpicture}[
  >=Stealth,
  every node/.style={font=\small},
  goalstyle/.style={
    draw=black!55, very thick, rounded corners=5pt,
    fill=black!3, inner sep=10pt, align=center, text width=12.4cm
  },
  stdstyle/.style={
    draw=red!55!black, very thick, rounded corners=5pt,
    fill=red!7, inner sep=9pt, align=center, text width=5.5cm
  },
  rcstyle/.style={
    draw=green!55!black, very thick, rounded corners=5pt,
    fill=green!5, inner sep=9pt, align=center, text width=5.5cm
  },
  appstyle/.style={
    draw=blue!55!black, thick, rounded corners=5pt,
    fill=blue!5, inner sep=9pt, align=center, text width=5.5cm
  },
  midstyle/.style={
    draw=yellow!35!black, thick, rounded corners=4pt,
    fill=yellow!5, inner sep=6pt, align=center, text width=2.0cm,
    font=\scriptsize
  },
]

\node[goalstyle] (goal) at (0, 0) {%
  \textbf{Objective:}\; Estimate $\langle\hat{O}\rangle = \mathrm{tr}(\hat{O}\,\rho)$
  for a target quantum state $\rho$ \\[2pt]
  \;{\footnotesize For example, a thermal state $\rho\!\!=\!\!e^{-\beta H}/Z$\; or\;
  time-evolved state $\rho(t)$}
};


\node[stdstyle] (std) at (-3.65, -4.05) {%
  \textbf{Standard QMC} (Sec.~\ref{sec:QMC})\\[5pt]
  Fixed approximation $\widetilde{\mathcal{P}}$\\[4pt]
  $\displaystyle\langle\hat{O}\rangle
    \approx \mathbb{E}_{X\sim\widetilde{\mathcal{P}}}\!\bigl[F(\hat{O},X)\bigr]$\\[7pt]
  Systematic error:\; $\mathcal{O}(\epsilon)$\\[2pt]
  Cost:\; $C$
};

\node[rcstyle] (rc) at (3.65, -4.05) {%
  \textbf{Randomly Compiled QMC (this work) } (Sec.~\ref{sec:RC_QMC})\\[5pt]
  Average over approxs. $\{\widetilde{\mathcal{P}}_j,\,p_j\}$\\[4pt]
  $\displaystyle\langle\hat{O}\rangle
    \approx \mathbb{E}_{j\sim p_j}\!\Bigl[
      \mathbb{E}_{X\sim\widetilde{\mathcal{P}}_j}\!\bigl[F(\hat{O},X)\bigr]
    \Bigr]$\\[7pt]
  Systematic error:\; $\mathcal{O}(\epsilon^2)$\\[2pt]
  Cost:\; $\approx C$
};

\node[midstyle] (improve) at (0, -4.05) {%
  \textit{Randomized}\\[1pt]
  \textit{Compiling}\\[3pt]
  $\mathcal{O}(\epsilon)$\\[1pt]
  $\Big\downarrow$\\[1pt]
  $\mathcal{O}(\epsilon^2)$\\[3pt]
  same cost
};

\draw[->, thick] (goal.south) -- ++(0,-0.55) -| (std.north);
\draw[->, thick] (goal.south) -- ++(0,-0.55) -| (rc.north);

\draw[<->, dashed, thick, black!75!] (std.east)    -- (improve.west);
\draw[<->, dashed, thick, black!75!] (improve.east) -- (rc.west);


\node[appstyle] (pimc) at (-3.65, -9.1) {%
  \textbf{Randomly Compiled Path Integral Monte Carlo}\; (Sec.~\ref{sec:Application_PIMC})\\[4pt]
  \textit{Thermal states}: $\rho = e^{-\beta H}/Z$\\[5pt]
  Random compilation method:\; \textit{QDrift}~\cite{Campbell_2019} \\[4pt] 
  {Removes dependence on number of terms in the Hamiltonian}
};

\node[appstyle] (qtm) at (3.65, -9.1) {%
  \textbf{Randomly Compiled Quantum Trajectories Method}\; (Sec.~\ref{sec:Application_QTM})\\[4pt]
  \textit{Open system dynamics}: $\frac{d}{dt} \rho(t) = \mathcal{L}[\rho]$\\[5pt]
  Random compilation method:\; \textit{Randomly-corrected Trotterization}~\cite{Cho_2024} \\[4pt]
  {Doubles the effective Trotter order}
};

\draw[->, thick] (rc.south) -- ++(0,-0.55) -| (pimc.north);
\draw[->, thick] (rc.south) -- ++(0,-0.55) -| (qtm.north);



\end{tikzpicture}
\caption{%
  \textbf{Overview of the randomly compiled QMC (RC-QMC) framework.}
  Standard QMC algorithms estimate an observable $\langle\hat{O}\rangle$ by averaging an estimator function $F(\hat{O}, X)$ over a random variable $X$ sampled from a fixed approximate distribution $\widetilde{\mathcal{P}}$ (see Eq.~\eqref{eq:QMC_approx}), incurring systematic error $\mathcal{O}(\epsilon)$.
  RC-QMC instead averages over a collection of approximate distributions $\{\widetilde{\mathcal{P}}_j\}$ weighted by probabilities $p_j$ (see Eq.~\eqref{eq:RC_QMC}), which suppresses the systematic error to $\mathcal{O}(\epsilon^2)$ at essentially the same computational cost. This improvement is demonstrated in two settings: \textbf{Randomly Compiled Path Integral Monte Carlo} for thermal state estimation, where QDrift replaces the standard Trotter decomposition and eliminates the explicit dependence on the number of Hamiltonian terms, and \textbf{Randomly Compiled Quantum Trajectories Method} for open system dynamics, where randomly corrected Trotterization doubles the effective Trotter order. 
}
\label{fig:rcqmc_overview}
\end{figure*}

%% file: figures/trotter_rand_fig.tex
\begin{figure}[htbp]
\centering
\begin{tikzpicture}[
    block/.style={
        rectangle,
        draw=black, line width=0.6pt,
        text centered,
        inner xsep=6pt, inner ysep=4pt
    },
    stdblock/.style={block, fill=blue!10},
    corrblock/.style={block, fill=orange!25},
    font=\small
]

\node[anchor=west, font=\small\bfseries] at (0, 1.30)
    {(a) Standard Trotterization};

\node[stdblock, minimum width=7.0cm, minimum height=0.82cm] (std)
    at (3.50, 0.40) {$S_n(\Delta t)$};

\node[anchor=north, font=\small] at (std.south) [yshift=-2pt]
    {error $= \mathcal{O}(\Delta t^{n+1})$};

\node[anchor=west, font=\small\bfseries] at (0, -1.05)
    {(b) Randomly Corrected Trotterization};

\node[stdblock, minimum width=2.1cm, minimum height=1.10cm, align=center]
    (lhalf) at (1.05, -2.05)
    {$S_n(\Delta t/2)$};

\node[corrblock, minimum width=2.1cm, minimum height=1.10cm]
    (cj) at (3.50, -2.05)
    {$C_j(\Delta t)$};

\node[stdblock, minimum width=2.1cm, minimum height=1.10cm, align=center]
    (rhalf) at (5.95, -2.05)
    {$S_n(\Delta t/2)$};

\node[anchor=north, font=\scriptsize, text=black!65]
    at (cj.south) [yshift=-2pt]
    {sample $j \sim p_j$};

\node[anchor=north, font=\small] at (3.50, -3.05)
    {avg.\ error $= \mathcal{O}(\Delta t^{2n+2})$};

\end{tikzpicture}
\caption{%
    Comparison of standard and randomly corrected Trotterization for one
    time step $\Delta t$.
    \textbf{(a)} Standard order-$n$ Trotterization approximates
    $e^{-i\hat{H}\Delta t}$ by the product formula $S_n(\Delta t)$,
    incurring error $\mathcal{O}(\Delta t^{n+1})$.
    \textbf{(b)} Randomly-corrected Trotterization~\cite{Cho_2024} sandwiches
    a randomly sampled correction operation $C_j(\Delta t)$ between two
    half-steps of $S_n$.
    Each $C_j$ is drawn with probability $p_j$ from short evolutions under
    nested commutators of the Hamiltonian terms, designed so that the
    leading Trotter errors cancel upon averaging.
    The resulting average error $\mathcal{O}(\Delta t^{2n+2})$ matches that
    of an order-$(2n+1)$ formula at the computational cost of order~$n$.
}
\label{fig:trotter_rand}
\end{figure}

%% file: figures/rand_2nd_order_fig.tex
\begin{figure}[htbp]
\centering
\begin{tikzpicture}[
    block/.style={
        rectangle,
        draw=black, line width=0.6pt,
        text centered,
        inner xsep=6pt, inner ysep=4pt
    },
    stdblock/.style={block, fill=blue!10},
    corrblock/.style={block, fill=orange!25},
    font=\small
]

\node[anchor=west, font=\small\bfseries] at (0, 1.30)
    {(a) 2nd-Order Randomly-Corrected Trotter};

\node[stdblock, minimum width=1.80cm, minimum height=1.0cm, align=center]
    (lhalf) at (0.90, 0.30) {$S_2(\Delta t/2)$};

\node[corrblock, minimum width=2.40cm, minimum height=1.0cm, align=center]
    (cj) at (3.30, 0.30) {$C_j(\Delta t)$\\[2pt]
        {\scriptsize $= e^{-i\hat{K}_j \alpha (\Delta t/2)^3}$}};

\node[stdblock, minimum width=1.80cm, minimum height=1.0cm, align=center]
    (rhalf) at (5.70, 0.30) {$S_2(\Delta t/2)$};

\node[anchor=north, font=\scriptsize, text=black!65]
    at (cj.south) [yshift=-3pt] {sample $j \sim p_j$};

\node[anchor=north, font=\small] at (3.30, -0.70)
    {avg.\ error $= \mathcal{O}(\Delta t^{5})$, \ cost: 2 calls to $S_2$};

\node[anchor=west, font=\small\bfseries] at (0, -1.85)
    {(b) More Efficient 1-Call Variant};

\node[stdblock, minimum width=2.00cm, minimum height=0.85cm, align=center]
    (eS2a) at (2.15, -2.82) {$S_2(\Delta t)$};

\node[corrblock, minimum width=2.00cm, minimum height=0.85cm, align=center]
    (eCja) at (4.45, -2.82) {$e^{-i\hat{K}_j \frac{\alpha}{2} \Delta t^3}$};

\node[corrblock, minimum width=2.00cm, minimum height=0.85cm, align=center]
    (eCjb) at (2.15, -3.88) {$e^{-i\hat{K}_j \frac{\alpha}{2} \Delta t^3}$};

\node[stdblock, minimum width=2.00cm, minimum height=0.85cm, align=center]
    (eS2b) at (4.45, -3.88) {$S_2(\Delta t)$};

\node[anchor=west, font=\scriptsize, text=black!65]
    at (eCja.east) [xshift=6pt] {prob.\ $k_j/(2\alpha)$};
\node[anchor=west, font=\scriptsize, text=black!65]
    at (eS2b.east) [xshift=6pt] {prob.\ $k_j/(2\alpha)$};

\node[anchor=north, font=\scriptsize, text=black!65]
    at (3.30, -4.47) {sample $j \sim k_j/\alpha$;\ randomize ordering};

\node[anchor=north, font=\small] at (3.30, -4.84)
    {avg.\ error $= \mathcal{O}(\Delta t^{5})$, \ cost: 1 call to $S_2$};

\end{tikzpicture}
\caption{%
    Randomly-corrected 2nd-order Trotterization for one time step $\Delta t$,
    for $\hat{H} = \hat{A} + \hat{B}$.
    \textbf{(a)} The basic construction~\eqref{eq:RC_Trotter_order2}
    sandwiches a randomly sampled correction $C_j(\Delta t)$ between two
    half-steps of $S_2(\Delta t/2)$.
    Each correction $C_j(\Delta t) = e^{-i\hat{K}_j\alpha(\Delta t/2)^3}$
    implements evolution under one term from the nested-commutator operator
    $\hat{K} = -\tfrac{1}{12}[\hat{A}+2\hat{B},[\hat{A},\hat{B}]] = \sum_j k_j \hat{K}_j$,
    sampled with probability $p_j = k_j/\alpha$.
    On average the leading Trotter errors cancel, yielding
    $\mathcal{O}(\Delta t^5)$ error at the cost of two calls to $S_2$.
    \textbf{(b)} The efficient single-call variant~\eqref{eq:efficient_2nd_order_corrected_Trotter}
    reduces this to one call to $S_2(\Delta t)$ by additionally randomizing
    the \emph{ordering} of the Trotter step and the correction.
    Averaging over both orderings cancels the $\mathcal{O}(\Delta t^4)$
    errors by symmetry, again achieving $\mathcal{O}(\Delta t^5)$ error.
}
\label{fig:rand_2nd_order}
\end{figure}

%% file: figures/rand_pimc_error_fig.tex
\begin{figure}[htbp]
\centering
\begin{tikzpicture}[font=\small]


\fill[red!15] (0.00, 1.25) rectangle (0.35, 1.50);
\draw[black, line width=0.6pt] (0.00, 1.25) rectangle (0.35, 1.50);
\node[anchor=west, font=\scriptsize] at (0.42, 1.375)
    {QDrift step $\mathcal{A}_j$};

\fill[green!15] (0.00, 0.75) rectangle (0.35, 1.00);
\draw[black, line width=0.6pt] (0.00, 0.75) rectangle (0.35, 1.00);
\node[anchor=west, font=\scriptsize] at (0.42, 0.875)
    {exact imaginary-time step $e^{-\beta\hat{H}/r}$};

\fill[red!15] (0, -0.38) rectangle (3.8,  0.38);
\draw[black, line width=0.6pt] (0, -0.38) rectangle (3.8, 0.38);

\node[anchor=east, align=right] at (-0.15, 0.00)
    {$\sigma'_0 = \sigma'$\\[2pt]{\scriptsize\color{black!55}(proposed)}};

\draw[->, thick] (1.9, -0.48) -- (1.9, -1.52);
\node[fill=white, inner sep=1.5pt, font=\scriptsize] at (1.9, -1.00)
    {$\bigl\|\sigma'_n\!-\!\sigma'_{n+1}\bigr\|_1
      \leq \mathcal{O}\!\left(\tfrac{\lambda^2\beta^2}{r^2}\right)$};

\fill[red!15] (0,    -2.38) rectangle (2.85, -1.62);
\fill[green!15]   (2.85, -2.38) rectangle (3.8,  -1.62);
\draw[black, line width=0.6pt] (0, -2.38) rectangle (3.8, -1.62);
\draw[black, line width=0.6pt] (2.85, -2.38) -- (2.85, -1.62);

\node[anchor=east] at (-0.15, -2.00) {$\sigma'_1$};

\draw[->, thick] (1.9, -2.48) -- (1.9, -3.52);

\fill[red!15] (0,   -4.38) rectangle (1.90, -3.62);
\fill[green!15]   (1.90,-4.38) rectangle (3.80, -3.62);
\draw[black, line width=0.6pt] (0, -4.38) rectangle (3.8, -3.62);
\draw[black, line width=0.6pt] (1.90, -4.38) -- (1.90, -3.62);

\node[anchor=east] at (-0.15, -4.00) {$\sigma'_2$};

\draw[->, thick] (1.9, -4.48) -- (1.9, -5.10);
\node[font=\normalsize] at (1.9, -5.45) {$\vdots$};
\draw[->, thick] (1.9, -5.82) -- (1.9, -6.52);

\fill[green!15] (0, -7.38) rectangle (3.8, -6.62);
\draw[black, line width=0.6pt] (0, -7.38) rectangle (3.8, -6.62);

\node[anchor=east, align=right] at (-0.15, -7.00)
    {$\sigma'_{r/2}$\\[2pt]{\scriptsize\color{black!55}$= e^{-\beta\hat{H}}/Z$}};

\end{tikzpicture}
\caption{%
    Interpolating sequence $\{\sigma'_n\}_{n=0}^{r/2}$ used in the proof of
    Theorem~\ref{thm:ErrorBound_sigma_prime}.
    Each $\sigma'_n$ is obtained from $\sigma'_{n-1}$ by replacing one
    QDrift step with an exact imaginary-time step $e^{-\beta\hat{H}/r}$;
    the orange (left) portion of each bar shows remaining QDrift steps
    and the blue (right) portion shows exact steps already in place.
    Adjacent states differ by a single replacement, incurring a per-step
    1-norm error $\mathcal{O}(\lambda^2\beta^2/r^2)$.
    Summing over $r/2$ steps via the triangle inequality yields the
    overall bound
    $\|\sigma' - e^{-\beta\hat{H}}/Z\|_1 = \mathcal{O}(\lambda^2\beta^2/r)$.
}
\label{fig:rand_pimc_error}
\end{figure}